\newenvironment{wrapper}[1]
{
	\begin{center}
		\begin{minipage}{\linewidth}
			\begin{mdframed}[hidealllines=true, backgroundcolor=gray!20, leftmargin=0cm,innerleftmargin=0.4cm,innerrightmargin=0.4cm,innertopmargin=0.4cm,innerbottommargin=0.4cm,roundcorner=10pt]
				#1}
			{\end{mdframed}
		\end{minipage}
	\end{center}
}
\title{Arboricity-Dependent Algorithms for Edge Coloring}
\author{Sayan Bhattacharya}{University of Warwick, United Kingdom}{s.bhattacharya@warwick.ac.uk}{}{}
\author{Mart\'{i}n Costa}{University of Warwick, United Kingdom}{martin.costa@warwick.ac.uk}{}{}
\author{Nadav Panski}{Tel Aviv University, Israel}{nadavpanski@mail.tau.ac.il}{}{}
\author{Shay Solomon}{Tel Aviv University, Israel}{shayso@tauex.tau.ac.il}{}{}
\authorrunning{S. Bhattacharya, M. Costa, N. Panski and S. Solomon} 
\keywords{Dynamic Algorithms, Graph Algorithms, Edge Coloring, Arboricity} 
\begin{document}

\maketitle

\begin{abstract}
    The problem of edge coloring has been extensively studied over the years. Recently, this problem has received significant attention in the \emph{dynamic setting}, where we are given a dynamic graph evolving via a sequence of edge insertions and deletions and our objective is to maintain an edge coloring of the graph.

    Currently, it is not known whether it is possible to maintain a $(\Delta+ O(\Delta^{1 - \mu}))$-edge coloring in $\tilde{O}(1)$ update time, for any constant $\mu > 0$, where $\Delta$ is the maximum degree of the graph.\footnote{We use $\tilde O(\cdot)$ to hide polylogarthmic factors.}
    In this paper, we show how to efficiently maintain a $(\Delta +O(\alpha))$-edge coloring in $\tilde O(1)$ amortized update time, where $\alpha$ is the arboricty of the graph. Thus, we answer this question in the affirmative for graphs of sufficiently small arboricity.
    
\end{abstract}



\newcommand{\C}{\mathcal{C}}

\section{Introduction}

Consider any graph $G = (V, E)$, with $n = |V|$ nodes and $m = |E|$ edges, and any integer $\lambda \geq 1$. A {\em (proper) $\lambda$-(edge) coloring} $\chi : E \rightarrow  [\lambda]$ of $G$ assigns a color $\chi(e) \in [\lambda]$ to each edge $e \in E$, in such a way that no two adjacent edges receive the same color. Our goal is to get a proper $\lambda$-coloring of $G$, for as small a value of $\lambda$ as possible. It is easy to verify that any such coloring requires at least $\Delta$ colors, where $\Delta$ is the maximum degree of $G$. On the other hand, a textbook theorem by Vizing~\cite{Vizing} guarantees the existence of a proper $(\Delta+1)$-coloring in any input graph.

This work focuses on the edge coloring problem in the \emph{dynamic setting}, where an extensive body of work has been devoted to this problem. Before describing our contributions, we first summarize the relevant state-of-the-art in the dynamic setting. 

\medskip
 \noindent {\bf Dynamic Edge Coloring.} In the dynamic setting, the input graph $G$ undergoes a sequence of updates (edge insertions/deletions), and throughout this sequence the concerned algorithm has to maintain a proper coloring of $G$. We wish to design a dynamic algorithm whose {\em update time} (time taken to process an update) is as small as possible. The edge coloring problem has received significant attention within the dynamic algorithms community in recent years. It is known how to maintain a $(2\Delta-1)$-coloring in $O(\log \Delta)$ update time~\cite{DBLP:conf/iccS/BarenboimM17,BhattacharyaCHN18}, and~Duan et al.~\cite{DuanHZ19} showed how to maintain a $(1+\epsilon)\Delta$-coloring in $O(\log^8 n/\epsilon^4)$ update time when $\Delta = \Omega(\log^2 n/\epsilon^2)$. Subsequently, Christiansen~\cite{Christiansen22} presented a dynamic algorithm for $(1+\epsilon)\Delta$-coloring with $O(\log^9 n \log^6 \Delta/\epsilon^6)$ update time, without any restriction on $\Delta$.
 More recently, Bhattachrya et al.~\cite{bhattacharya2024nibbling} showed how to maintain a $(1+\epsilon)\Delta$-coloring in $O(\log^4(1/\epsilon)/\epsilon^9)$ update time when $\Delta \geq (\log n/\epsilon)^{\Theta((1/\epsilon)\log(1/\epsilon))}$.
 At present, no dynamic edge coloring algorithm is known with a sublinear in $\Delta$ {\em additive approximation} and with $\tilde{O}(1)$ update time. We summarize the following basic question that arises.

\begin{wrapper}
    Is there a dynamic algorithm for maintaining a $(\Delta+ O(\Delta^{1 - \mu}))$-edge coloring with $\tilde{O}(1)$ update time, for any constant $\mu > 0$?
\end{wrapper}

\subsection{Our Contribution} We address the above question for the family of {\em bounded arboricity} graphs. Formally, a  graph $G = (V, E)$ has {\em arboricity} (at most) $\alpha$ iff: $$\left \lceil \frac{|E(G[S])|}{(|S| - 1)}\right \rceil  \leq \alpha \text{ for every subset } S \subseteq V \text{ of size } |S| \geq 2,$$ where $G[S]$ denotes the subgraph of $G$ induced by $S$ and $E(G[S])$ denotes the edge-set of $G[S]$. It is easily verified that the arboricity of any graph is upper bounded by its maximum degree. There are many instances of graphs, however, with very high maximum degree but low arboricity.\footnote{Think of a star graph on $n$ nodes. It has $\Delta = n-1$ but $\alpha = 1$.} Intuitively, a graph with low arboricity is {\em sparse everywhere}. Every graph excluding a fixed minor has $O(1)$ arboricity, thus the family of constant arboricity graphs contains bounded treewidth and bounded genus graphs, and specifically, planar graphs.
More generally, graphs of bounded (not necessarily constant) arboricity are of importance, as they
arise in real-world networks and models, such as the world wide web graph, social networks and various random distribution models.

We now summarize our main result.

\begin{theorem}
\label{intro:thm:dyn}
There is a deterministic dynamic algorithm for maintaining a $(\Delta + (4+\epsilon)\alpha)$-edge coloring of an input dynamic graph with maximum degree $\Delta$ and arboricity $\alpha$, with $O(\log^6n /\epsilon^6)$ amortized update time and $O(\log^4 n / \epsilon^5)$ amortized recourse.\footnote{A dynamic algorithm has an {\em amortized update time} (respectively, {\em amortized recourse}) of $O(\lambda)$, if, starting with an empty graph, the total runtime (resp., number of output changes) to handle any sequence of $T$ updates
is $O(T \cdot \lambda)$.}
\end{theorem}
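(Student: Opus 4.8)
\medskip
\noindent\textbf{Proof plan.}
The plan is to decouple the task into (a) maintaining a low out-degree orientation of $G$ and (b) maintaining the coloring on top of it, treating the orientation as a purely algorithmic scaffold. For (a), fix $\beta:=\lceil(2+\epsilon/2)\alpha\rceil$ and maintain an orientation $\vec{G}$ of $G$ in which every vertex has out-degree at most $\beta$ at all times; a Brodal--Fagerberg-style dynamic orientation algorithm (or a recent low-recourse refinement of it) should do this with $O(\polylog(n)/\poly(\epsilon))$ amortized update time and recourse. The only feature of the palette we will use is $\beta<(4+\epsilon)\alpha$, so that at every moment each vertex has at least $(4+\epsilon)\alpha$ \emph{missing} colors --- more than its number of out-edges --- and this surplus is what keeps the coloring repairs short. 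Crucially, the invariant we maintain is just ``a proper $(\Delta+(4+\epsilon)\alpha)$-coloring'', which does not mention $\vec{G}$; hence re-orientations never trigger recoloring, only an update of auxiliary bookkeeping. As bookkeeping, for each vertex $w$ I would store count arrays over the palette for the colors on the edges at $w$ and on the out-edges of $w$, plus a segment tree over the palette marking the colors missing at $w$ together with subtree missing-counts; given two vertices and up to $O(\beta)$ extra forbidden colors, these structures should either return an admissible color or certify none exists in $O(\log^2 n)$ time, via a binary search on the difference of prefix missing-counts.

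With this in place, a deletion merely frees the color of the removed edge. For an insertion of $e=(u,v)$, oriented say $u\to v$: if some color is missing at both endpoints, use it ($O(1)$ recourse, $O(\log^2 n)$ time); otherwise the missing sets of $u$ and $v$ are disjoint, each of size $\ge(4+\epsilon)\alpha$, and we run a Vizing-type repair. Concretely we build a \emph{fan} rooted at $u$ (a short sequence of edges at $u$ whose colors chain up through ``missing at the other endpoint'' relations) and terminate it with a single Kempe (two-colored alternating) chain $Q$, then perform the corresponding rotation of fan colors followed by swapping the two colors along $Q$; by the classical Vizing argument this always produces a valid extension and recolors only the $O(|\mathrm{fan}|+|Q|)$ edges it touches. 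The $\Omega(\alpha)$ surplus of missing colors at $u$ and at $v$ gives a large pool of color pairs with which to instantiate the fan and the chain.

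The heart of the matter --- which I expect to be the main obstacle --- is to guarantee that the repair is \emph{short}, of length $O(\polylog(n)/\poly(\epsilon))$, and that a valid one can be \emph{found deterministically} in comparable time. The mechanism is truncation: once a tentative fan/chain exceeds a threshold $\ell=\Theta(\polylog(n)/\poly(\epsilon))$ we abandon it; exploring the $\Omega(\alpha)$ admissible instantiations in round-robin fashion, a counting argument should show that only $O(\alpha)$ of them can fail to close within $\ell$, so some instantiation terminates --- unless \emph{all} of them are truncated, in which case we do not give up but recurse: a truncated chain leaves an analogous ``extend the coloring'' problem at its far endpoint, and we restart the procedure there. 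Arboricity is what bounds this recursion: each successive level should be confined to a subgraph whose edge density is a constant factor larger than at the previous level (the out-degree-$\le\beta$ orientation is used to charge the extra edges exhibited at each level), so after $O(\log n)$ levels the density would exceed $\alpha$, contradicting the arboricity bound; hence the recursion bottoms out within $O(\log n)$ levels, each contributing $O(\polylog(n)/\poly(\epsilon))$ recoloring, with the deterministic search at each level powered by the $O(\log^2 n)$-time free-color queries. Turning ``the density grows geometrically'' into a precise statement, and interleaving it with the dynamic data structures so that the search runs within budget, is the technical crux.

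For the amortized analysis: re-orientations are free for the coloring, so the per-update coloring recourse equals the total length of the (truncated, recursive) repair, namely $O(\log n)\cdot O(\polylog(n)/\poly(\epsilon))=O(\log^4 n/\epsilon^5)$; the update time is this times the $O(\log^2 n)$ per-step cost of the free-color queries, plus the $O(\polylog(n)/\poly(\epsilon))$ cost of maintaining $\vec G$, for a total of $O(\log^6 n/\epsilon^6)$. Standard periodic rebuilding --- restarting the whole structure whenever $\Delta$ or $\alpha$ has changed by a constant factor, or a constant fraction of the current edges has been deleted --- keeps the palette size, the value of $\beta$, and all data structures current at amortized cost $O(\polylog n)$ per update, and removes the monotonicity subtleties of $\Delta$ and $\alpha$. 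Summing the pieces gives the claimed $O(\log^6 n/\epsilon^6)$ amortized update time and $O(\log^4 n/\epsilon^5)$ amortized recourse, and determinism is preserved since every choice above --- including the round-robin search --- is deterministic.
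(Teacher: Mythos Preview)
Your plan diverges substantially from the paper's approach, and the divergence occurs precisely at the step you flag as ``the main obstacle'' --- and that step is a genuine gap, not just a detail to be filled in. You propose to bound the length of Vizing fans and Kempe chains by truncation, arguing that among the $\Omega(\alpha)$ admissible color pairs only $O(\alpha)$ can yield long chains, and that if all are long you recurse into a subgraph of geometrically growing density. Neither claim is supported. Kempe chains on distinct color pairs need not be edge-disjoint, so there is no evident counting argument that limits how many of them are long; and nothing in your setup identifies a canonical ``next-level subgraph'' whose density exceeds the current one --- the far endpoint of a truncated chain is just another vertex of $G$, and the chains you have explored do not certify any local density increase. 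Arboricity bounds average degree globally, but it does not by itself force alternating paths to be short or force a density-amplification recursion to terminate. Similarly, your periodic-rebuilding handling of changing $\alpha$ is fragile: rebuilding costs $\tilde O(m)$, and you have not argued that $\Omega(m)$ updates must occur between constant-factor changes in $\alpha$ (indeed they need not).

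The paper sidesteps all of this by avoiding Vizing chains entirely. It maintains, for each guess $\tilde\alpha_j=(1+\epsilon)^{j-1}$ of the arboricity, a \emph{hierarchical peeling decomposition} $V=Z_{1,j}\supseteq\cdots\supseteq Z_{L,j}$ with $L=O(\log n/\epsilon)$ levels, in which every vertex outside the top level has at most $O(\tilde\alpha_j)$ neighbors at its own level or above. To color an edge $f=(u,v)$ with $u$ at the lower level in layer $\mathcal L(f)$, one avoids only $u$'s $O(\tilde\alpha_{\mathcal L(f)})$ ``upward'' edges and all of $v$'s edges, so a color $c\le \deg(v)+O(\tilde\alpha_{\mathcal L(f)})$ is always available; if $c$ collides with some other edge at $u$, that edge necessarily goes \emph{downward} in the decomposition, so a potential $\Psi(f)=L(\mathcal L(f)-1)+\ell_{\mathcal L(f)}(f)$ strictly decreases and the cascade terminates in $O(L^2)$ steps. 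Adaptivity to changing $\alpha$ and $\Delta$ is handled not by rebuilding but by the local invariant $\chi(e)\le \deg(v)+O(\tilde\alpha_{\mathcal L(e)})$: after a deletion, at most $O(L)$ edges per endpoint can violate it, and edges whose level changed in some layer (amortized $O(L^2/\epsilon)$ of them) are simply uncolored and fed back into the same extension routine. This is both simpler and tighter than a fan/chain approach, and it is where the claimed bounds actually come from.
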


\noindent
Thus, Theorem~\ref{intro:thm:dyn} addresses the above question in the affirmative, for all dynamic graphs with arboricity at most $O(\Delta^{1 - \mu})$, for any constant $\mu > 0$.

An important feature of our dynamic algorithm is that it is {\em adaptive} to changes in the values of $\Delta$ and $\alpha$ over time: At each time-step $t$, we (explicitly)  maintain a proper edge coloring of the input graph $G$ using the colors $\{1, \ldots, \Delta_t + (4+\epsilon)\alpha_t\}$, where $\Delta_t$ and $\alpha_t$ are respectively the maximum degree and arboricity of $G$ at time $t$.

Before giving our full dynamic algorithm, we give a simpler ``warmup'' dynamic algorithm, where we assume access to values $\alpha$ and $\Delta$ such that $\alpha_t \leq \alpha$ and $\Delta_t \leq \Delta$ at each time-step $t$. In this setting, we can maintain a $(\Delta + (4+\epsilon)\alpha)$-edge coloring with $O(\log^2n \log \Delta /\epsilon^2)$ amortized update time and $O(\log n /\epsilon)$ worst-case recourse. As an immediate corollary of our ``warmup'' dynamic algorithm, we also get the following structural result, which should be contrasted with the lower bound of~\cite{ChangHLPU18} for extending partial colorings, which shows that there exist $n$-node graphs of maximum degree $\Delta$ and $(\Delta + c)$-edge colorings on those graphs (for any $c \in [1, \Delta/3]$), such that extending these colorings to color some uncolored edge requires changing the colors of $\Omega(\Delta \log(cn/\Delta)/c)$ many edges.

\begin{corollary}\label{cor:structural}
    Let $G = (V,E)$ be a graph with maximum degree $\Delta$ and arboricity $\alpha$, and let $\chi$ be a $(\Delta + (2 + \epsilon)\alpha)$-edge coloring of $G$. Then, given any uncolored edge $e \in E$, we can extend the coloring $\chi$ so that $e$ is now colored by only changing the colors of $O(\log n / \epsilon)$ many edges.
\end{corollary}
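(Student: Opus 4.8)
The plan is to extract the single-edge color-extension routine underlying our ``warmup'' dynamic algorithm and to observe that, when it is fed a \emph{statically} computed optimal forest orientation of $G$, it yields exactly the claimed guarantee. Concretely, we would first establish the following self-contained statement and then instantiate it: \emph{if $G = (V,E)$ admits an orientation in which every vertex has out-degree at most $\beta$, and $\chi$ is a proper $(\Delta + (2+\epsilon)\beta)$-edge coloring of $G$, then any uncolored edge of $G$ can be properly colored while recoloring only $O(\log n / \epsilon)$ edges.} Given this, $\Cref{cor:structural}$ follows at once: by the definition of arboricity together with the Nash--Williams theorem, $E$ partitions into $\alpha$ forests, and orienting every tree edge towards a fixed (arbitrary) root of its tree yields an orientation of $G$ with out-degree at most $\alpha$; applying the statement above with $\beta = \alpha$ completes the proof. (The same statement with the cruder choice $\beta = 2\alpha$ --- roughly the best a \emph{dynamic} algorithm can afford to maintain --- is what we invoke for the warmup algorithm, which is the source of the $(4+\epsilon)\alpha$ there versus the $(2+\epsilon)\alpha$ here.)

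To prove the self-contained statement, let $e = (u,v)$ be the uncolored edge and, for a vertex $w$, let $M(w)$ denote the set of colors missing at $w$; since $\deg(w) \le \Delta$ and $e$ is uncolored, $|M(u)|, |M(v)| \ge (2+\epsilon)\beta + 1$. If $M(u) \cap M(v) \ne \emptyset$ we color $e$ with a common missing color and stop, with recourse $1$. Otherwise we build an \emph{augmenting structure} --- a multi-step Vizing chain, i.e.\ a sequence of Vizing fans stitched together by short chain segments --- along which colors are shifted via reassignments of the form $\chi(e_i) := \chi(e_{i+1})$, and which terminates as soon as a color missing simultaneously at the two relevant endpoints is exposed; the number of recolored edges equals the number of steps of the structure. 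The orientation is used to keep the structure ``local'': each fan is built around a center vertex using only its at most $\beta$ out-edges (so each fan has width $O(\beta)$), and each chain segment is traversed only in the forward direction of the orientation. The surplus of $(2+\epsilon)\beta$ colors beyond $\Delta$ --- in particular, the fact that every vertex has more missing colors than it has out-edges --- is what guarantees that the construction can always take another step, i.e.\ that it never gets irrecoverably stuck before closing.

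The hard part will be the length bound: showing that this structure closes within $O(\log n/\epsilon)$ steps. We would control it by a potential $\Phi$ --- morally, a measure of how far the current frontier is from exposing a usable common missing color (equivalently, a measure of the as-yet-unexplored part of the graph), which is $n^{O(1)}$ initially --- and argue that the surplus of missing colors forces $\Phi$ to drop by a constant factor every $O(1/\epsilon)$ steps, so that $O(\log n/\epsilon)$ steps suffice and a common missing color must appear. Two points need care: (i) legality of the color shifts, which requires that the structure not collide with itself prematurely --- but such a collision is precisely the event that closes a fan, or triggers a short chain segment after which a fresh fan is started; and (ii) that the $(2+\epsilon)\beta$ surplus (rather than merely $(1+\epsilon)\beta$) is indeed what is needed to push $\Phi$ down at the required rate. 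This is exactly the analysis we develop for the warmup algorithm, and $\Cref{cor:structural}$ is read off from it; note that the whole argument is static --- no data structure persists between calls --- so it may be run on a precomputed out-degree-$\alpha$ orientation, an option available to a static algorithm but not to a dynamic one, which is why the additive term drops from $(4+\epsilon)\alpha$ to $(2+\epsilon)\alpha$.
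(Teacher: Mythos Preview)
Your proposal takes a substantially different---and ultimately incomplete---route from the paper's. The paper does not use Vizing fans, multi-step chains, or any augmenting structure of that kind; the corollary is an immediate consequence of Lemma~\ref{lem:dynamic anal 1} together with Lemma~\ref{lem: Z empty}, applied with the decomposition parameter $\beta$ set to $1$. Concretely: take any $(1, 2(1+\epsilon)\alpha, L)$-decomposition of $G$ with $L = 2 + \lceil \log_{1+\epsilon} n\rceil$; by Lemma~\ref{lem: Z empty} we have $Z_L = \varnothing$, so every node lies strictly below level $L$. Running \textsc{ExtendColoring}$(e,(Z_i)_i)$ then colors $e$ while changing at most $L = O(\log n/\epsilon)$ edges, and since $\beta = 1$ the estimate $|C_u^+| + |C_v| + 1 \le \deg^+(u) + \deg(v) - 1 \le 2(1+\epsilon)\alpha + \Delta$ shows that only colors in $[\Delta + 2(1+\epsilon)\alpha]$ are ever assigned. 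The level of the currently uncolored edge strictly decreases at each step; that \emph{is} the potential, and it is already built into the decomposition.

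The genuine gap in your approach is that a Nash--Williams forest orientation carries no level structure. An out-degree-$\alpha$ orientation obtained by rooting $\alpha$ forests need not be acyclic, so the natural ``recurse along the conflicting in-edge'' step has no reason to terminate in $O(\log n/\epsilon)$ rounds---a long directed in-path would already force many recolorings. You try to recover termination via ``multi-step Vizing chains'' and an unspecified potential $\Phi$ that allegedly drops by a constant factor every $O(1/\epsilon)$ steps, but this is precisely the hard part and is left entirely vague: you never define $\Phi$, never say what a ``chain segment traversed only in the forward direction of the orientation'' is (a bichromatic alternating path has no reason to respect an edge orientation), and never explain why the $(2+\epsilon)\beta$ surplus forces the claimed geometric decrease. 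Finally, the assertion that ``this is exactly the analysis we develop for the warmup algorithm'' is incorrect: the warmup algorithm uses the level-based decomposition and the monotone-level argument of Lemma~\ref{lem:dynamic anal 1}, not Vizing chains. The improvement from $(4+\epsilon)\alpha$ to $(2+\epsilon)\alpha$ in the corollary comes from taking $\beta = 1$ rather than $\beta \ge 2+3\epsilon$ in the decomposition (permissible statically because we need not maintain it under updates), not from switching to a tighter Nash--Williams orientation.
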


\medskip
\noindent {\bf Independent Work.} In independent and concurrent work, Christiansen, Rotenberg and Vlieghe also obtain a deterministic dynamic algorithm that maintains a $(\Delta + O(\alpha))$-edge coloring in $\tilde O(1)$ amortized update time \cite{eva2023dynamic}.

\subsection{Our Techniques}\label{sec:our tech}

At a high level, our algorithm can be interpreted as a dynamization of a simple static algorithm that computes a $(\Delta + O(\alpha))$-edge coloring of a graph $G$, which can be implemented to run in near-linear time in the static sequential model of computation.\footnote{Recently, \cite{BCPS23} and \cite{kowalik2024edge} considered edge coloring on low arboricity graphs in the static setting, but for the problems of $\Delta + 1$ and $\Delta$ coloring respectively.}
This algorithm is similar to the classic greedy algorithm for $(2\Delta-1)$-edge coloring, which simply scans through all edges of the graph in an arbitrary order and, while scanning any edge $e$, assigns $e$ an arbitrary color in $[2\Delta - 1]$ that has not been already assigned to one of its adjacent edges. Since $e$ has at most $2\Delta - 2$ adjacent edges, such a color must always exist. This static algorithm does something quite similar---the difference is that it computes a `good' ordering of the edges in $G$ instead of using an arbitrary ordering, which allows it to use fewer colors. More specifically, it repeatedly identifies a vertex of minimum degree in $G$, colors an edge incident on in, and removes that edge from the graph.
For the sake of completeness, we include this algorithm and its analysis in Appendix~\ref{sec:static}.
We remark that a variant of this algorithm appears in \cite{podc/BarenboimEM17}, which considers the distributed model of computation.

To highlight the main conceptual insight underlying our approach, we describe the simpler case where $\Delta$ and $\alpha$ are {\em fixed} values (known to the algorithm in advance) that respectively give upper bounds on the maximum degree and arboricity of the input graph at all times. 
We sketch below how to maintain a $(\Delta+ O(\alpha))$-coloring in $\tilde{O}(1)$ update time in this setting. Note that this directly implies a near-linear time static algorithm for $(\Delta+O(\alpha))$-coloring.\footnote{Indeed, we can compute $\Delta$ and a good approximation of $\alpha$ in linear time, and then
simply insert the edges in the input graph into the dynamic algorithm one after another.}
We later outline (Section \ref{sec:adaptive}) how we extend our dynamic algorithm to handle the scenario where $\Delta$ and $\alpha$ change over time.

Our starting point is a well-known ``peeling process'', which leads to a standard decomposition of an input graph $G = (V, E)$ with arboricity at most $\alpha$ \cite{ChibaN85}. The key observation is that {\em any} induced subgraph of $G$ has average degree at most $2\alpha$.\footnote{Indeed, for any subset  $S \subseteq V$, the average degree of $G[S]$ is given by: $2 \cdot |E(G[S])|/|S| \leq 2\alpha$.} Fix any constant $\gamma > 1$.  This motivates the following procedure, which runs for $L = \Theta_{\gamma}(\log n)$ rounds. 

\begin{wrapper}
Initially, during round $1$, we set $Z_1 := V$. Subsequently, during each round $i \in \{2,\dots,L\}$, we find the set of nodes $S \subseteq Z_{i-1}$ that have degree $> 2\gamma \alpha$ in $G[Z_{i-1}]$, and set $Z_i := S$.
\end{wrapper}

\noindent
Consider any given round $i \in [L]$ during the above procedure. Since the subgraph $G[Z_{i-1}]$ has average degree at most $2\alpha$, it follows that at most a $1/\gamma$ fraction of the nodes in $G[Z_{i-1}]$ have degree more than $2\gamma \alpha$. In other words, we get $|Z_{i+1}| \leq |Z_i|/\gamma$, and hence after $L$ iterations we would have $Z_L = \varnothing$. Bhattacharya et al.~\cite{BhattacharyaHNT15} showed how to maintain this decomposition dynamically with $\tilde{O}(1)$ amortized update time, provided that $\gamma > 2$.

Now, our dynamic $(\Delta+O(\alpha))$-coloring algorithm works as follows. Suppose that we are currently maintaining a valid coloring, along with the above decomposition. Upon receiving an update (edge insertion/deletion), we first run the dynamic algorithm of~\cite{BhattacharyaHNT15}, which adjusts the decomposition $Z_1 \supseteq \cdots \supseteq Z_L$, in amortized $\tilde{O}(1)$ time. If the update consisted of an edge deletion, then we do not need to do anything else beyond this point, since the existing coloring continues to remain valid. We next consider the more interesting case, where the update consisted of the insertion of an edge (say) $(u, v)$. 

Let $i \in [L]$ be the largest index such that $(u, v) \in E(G[Z_i])$. Then there must exist some endpoint $x \in \{u, v\}$ that belongs to $Z_i \setminus Z_{i+1}$. W.l.o.g., let $u$ be that endpoint. Since $u \in Z_i \setminus Z_{i+1}$, it follows that the node $u$ has degree at most $2\gamma \alpha$ in $G[Z_i]$. Also, the node $v$ trivially has degree at most $\Delta$ in $G$. Let $E_{(u, v)} \subseteq E$ denote the set of edges $e' \in E$ that belong to one of the following two categories: (I) $e'$ is incident on $u$ and lies in $G[Z_i]$, (II) $e'$ is incident on $v$. We conclude that $|E_{(u, v)}| \leq \Delta+2\gamma \alpha$. Thus, if we have a palette of at least $\Delta+2\gamma\alpha+1 = \Delta + \Theta(\alpha)$ colors, then there must exist a free color in that palette which is not assigned to any edge in $E_{(u, v)}$. Let $c$ be that free color. Using standard binary search data structures, such a color $c$ can be identified in $\tilde{O}(1)$ time~\cite{BhattacharyaCHN18}. We assign the color $c$ to the edge $(u, v)$. This can potentially create a conflict with some other adjacent edge $e'' \in E$ (which might already have been assigned the color $c$). 

However, it is easy to see that such an edge $e''$ must be incident on $u$, i.e., $e'' = (u, y)$ for some $y \in V$, and there must exist some index $i_y < i$ such that $y \in Z_{i_y} \setminus Z_{i_{y}+1}$. We then uncolor the edge $e''$, set $i \leftarrow i_y$, and recolor $e''$ recursively using the same procedure described above. Since after each recursive call, the value of the index $i$  decreases by at least one, this can go on at most $L$ times. This leads to an overall update time of $L \cdot \tilde{O}(1) = \tilde{O}(1)$. See Section~\ref{sec:warmup alg} for details.

\subsubsection{Handling the scenario where $\Delta$ and $\alpha$ change over time} \label{sec:adaptive} We now outline  how we deal with changing values of $\Delta$ and $\alpha$.  Let $\alpha_t$ and $\Delta_t$ respectively denote the arboricity and maximum degree of the input graph $G$ at the current time-step $t$. We need to overcome two technical challenges.

\medskip
\noindent (i) The ``warmup'' algorithm described above works correctly only if it uses a parameter $\alpha \simeq \alpha_t$ to construct the decomposition of $G$. Informally, if $\alpha$ is too small w.r.t.~$\alpha_t$, then the number of iterations $L$ required to construct the decomposition will become huge (possibly infinite, if we aim at achieving
$Z_L = \varnothing$), and this in turn would blow up the update time of the algorithm. In contrast, if $\alpha$ is too large compared to $\alpha_t$, then the algorithm would be using too many colors in its palette.

\medskip
\noindent (ii) After the deletion of an edge $e$, the arboricity $\alpha$ and the maximum degree $\Delta$ of $G$ might decrease. If either parameter drops by a significant amount (across some batch of updates), then we might have to recolor a significant number of edges to ensure that we are still only using $\Delta + O(\alpha)$ many colors, potentially leading to a prohibitively large update time.

\medskip
\noindent
To deal with challenge (i), we generalize the notion of graph decomposition to that of a \emph{decomposition system}. At a high level, a decomposition system is just a collection of graph decompositions, where the relevant parameter across the decompositions is discretized into powers of $(1 + \epsilon)$. This ensures that no matter what the value of $\alpha$ is at the present moment, there is always some decomposition in our system that we can use to extend the coloring. Finally,  to deal with challenge (ii), we ensure that the color of each edge satisfies certain {\em local constraints}, similar to the constraints used to give efficient dynamic algorithms in \cite{BhattacharyaCHN18, Christiansen22}. After the deletion of an edge, we can just uncolor the edges that violate those local constraints, and then recolor them using the decomposition system. However, since the constraints on an edge $e$ depend not just on the degrees of its endpoints but also on the decomposition system, we have to take extra care to ensure that these decompositions don't change too much between updates. See Section~\ref{sec:full algo} for details.

\subsection{Roadmap}
The rest of the paper is organized as follows. Section~\ref{sec:prelim} introduces the relevant preliminary concepts and notations. 
This is followed by Section~\ref{sec:warmup alg}, which contains our warmup dynamic algorithm for fixed $\alpha$. In Section~\ref{sec:full algo}, we present our dynamic algorithm in its full generality. Appendix~\ref{sec:data structs} gives full details of the relevant data structures used by our algorithms in the preceding sections.


\section{Preliminaries}
\label{sec:prelim}

In this section, we define the notations used throughout our paper and describe the notion of \textit{graph decompositions}, which are at the core of our algorithms. We then provide a simple extension of these graph decompositions, which we use as a central component in our final dynamic algorithm.

\subsection{The Dynamic Setting}

In the dynamic setting, we have a graph $G = (V,E)$ that undergoes updates via a sequence of intermixed edge insertions and deletions. Our task is to design an algorithm to explicitly maintain an edge coloring $\chi$ of $G$ as the graph is updated. We assume that the graph $G$ is initially empty, i.e. that the graph $G$ is initialized with $E =\varnothing$. The \emph{update time} of such an algorithm is the time it takes to handle an update, and its \emph{recourse} is the number of edges that change colors while handling an update. More precisely, we say that an algorithm has a worst-case update time of $\lambda$ if it takes at most $\lambda$ time to handle an update, and an \emph{amortized} update time of $\lambda$ if it takes at most $T \cdot \lambda$ time to handle any arbitrary sequence of $T$ updates (starting from the empty graph). Similarly, we say that an algorithm has a worst-case recourse of $\lambda$ if it changes the colors of at most $\lambda$ edges while handling an update, and an \emph{amortized} recourse of $\lambda$ if it changes the colors of at most $T \cdot \lambda$ edges while handling any arbitrary sequence of $T$ updates (starting from the empty graph).

\subsection{Notation}
\label{sec:notation}

Let $G = (V,E)$ be an undirected, unweighted $n$-node graph. Given an edge set $S \subseteq E$, we denote by $G[S]$ the graph $(V, S)$, and given a node set $A \subseteq V$, we denote by $G[A]$ the subgraph induced by $A$, namely $(A, \{(u,v) \in E \, | \, u,v \in A\})$. Given a node $u \in V$ and a subgraph $H$ of $G$, we denote by $N_H(u)$ the set of edges in $H$ that are incident on $u$, and by $\deg_H(u)$ the degree of $u$ in $H$. For an edge $(u,v)$, we define $N_H(u,v)$ to be $N_H(u) \cup N_H(v)$. When we are considering the entire graph $G$, we will often omit the subscripts in $N_G(\cdot)$ and $\deg_G(\cdot)$ and just write $N(\cdot)$ and $\deg(\cdot)$.

\subsection{Graph Decompositions}\label{sec:prelim:decomp}

A central ingredient in our dynamic algorithm is the notion of \emph{$(\beta, d, L)$-decomposition}, defined by Bhattacharya et al.\ \cite{BhattacharyaHNT15}.

\begin{definition}\label{def:decomp}
Given a graph $G = (V,E)$, $\beta \geq 1$, $d \geq 0$, and a positive integer $L$, a $(\beta, d, L)$-decomposition of $G$ is a sequence $(Z_1, \dots ,Z_L)$ of node sets, such that $Z_L \subseteq \dots \subseteq Z_1 = V$ and
$$Z_{i+1} \supseteq \{ u \in Z_i \, | \, \deg_{G[Z_i]}(u) > \beta d\} \textrm{ and } Z_{i+1} \cap \{u \in Z_i \, | \, \deg_{G[Z_i]}(u) < d \} = \varnothing $$
hold for all $i \in [L-1]$.
\end{definition}

\noindent Given a $(\beta, d, L)$-decomposition $(Z_1, \dots ,Z_L)$ of $G = (V,E)$, we abbreviate  $G[Z_i]$ as $G_i$ for all $i$, and for all $u \in V$, we abbreviate
$\deg_{G_i}(u)$ as $\deg_i(u)$ and $N_{G_i}(u)$ as $N_i(u)$. We define $V_i := Z_i \setminus Z_{i+1}$ for all $i \in [L-1]$, and $V_L := Z_L$. We say that $V_i$ is the \emph{$i^{th}$ level} of the decomposition, and define the \emph{level} $\ell(u)$ of any node $u \in V_i$ as $\ell (u) := i$. We define $\deg^+(u) := \deg_{\ell(u)}(u)$ and $N^+(u) := N_{\ell(u)}(u)$ for $u \in V$.
Given an edge $e=(u,v)$, we define the level $\ell(e)$ of $e$ as $\ell(e) := \min\{\ell(u), \ell(v)\}$. Note also that for all $u \in V \setminus V_{L}$, $\deg^+(u) \leq \beta d$. However, given some $u \in V_L$, $\deg^+(u)$ may be much larger than $\beta d$, which motivates the following useful fact concerning such decompositions.

\begin{lemma}[\cite{BhattacharyaHNT15}]
\label{lem: Z empty}
    Let $G = (V,E)$ be an arbitrary graph with arboricity $\alpha$, let $\beta$, $\epsilon$, $d$ be any parameters such that $\beta \geq 1$, $0 < \epsilon < 1$, $d \geq 2(1 + \epsilon)\alpha$, and let $L = 2 + \lceil \log_{(1 + \epsilon)}n \rceil$. Then for any $(\beta, d, L)$-decomposition $(Z_1,...,Z_L)$ of $G$, it holds that $Z_L = \varnothing$.
\end{lemma}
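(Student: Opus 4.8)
The plan is to bound the size of the sets $Z_i$ geometrically and show that after $L$ rounds the set must be empty. The key idea is that in any $(\beta,d,L)$-decomposition, the set $Z_{i+1}$ only retains nodes of $Z_i$ whose degree in $G[Z_i]$ is at least $d$ (by the second containment condition, any node of degree $< d$ is excluded). So $G[Z_{i+1}] \subseteq G[Z_i]$ and every node of $Z_{i+1}$ has degree $\geq d$ in $G[Z_i]$, meaning the induced subgraph $G[Z_i]$ contains at least $d \cdot |Z_{i+1}| / 2$ edges supported on $Z_i$. On the other hand, since $G$ has arboricity $\alpha$, the subgraph $G[Z_i]$ has at most $\alpha \cdot |Z_i|$ edges (using the arboricity bound, or more simply $\lceil |E(G[Z_i])|/(|Z_i|-1)\rceil \le \alpha$ when $|Z_i| \ge 2$). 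Combining these, $d \cdot |Z_{i+1}|/2 \le \alpha |Z_i|$, so $|Z_{i+1}| \le (2\alpha/d) |Z_i| \le \frac{1}{1+\epsilon}|Z_i|$, using the hypothesis $d \ge 2(1+\epsilon)\alpha$.

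From this contraction, I would argue as follows. Starting from $|Z_1| = |V| = n$, we get $|Z_i| \le n/(1+\epsilon)^{i-1}$ for all $i$. In particular, $|Z_L| \le n/(1+\epsilon)^{L-1}$. With $L = 2 + \lceil \log_{(1+\epsilon)} n\rceil$, we have $L - 1 = 1 + \lceil \log_{(1+\epsilon)} n\rceil \ge 1 + \log_{(1+\epsilon)} n$, so $(1+\epsilon)^{L-1} \ge (1+\epsilon) \cdot n > n$, giving $|Z_L| < 1$, hence $|Z_L| = 0$, i.e. $Z_L = \varnothing$.

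One technical point to handle carefully is the edge-counting bound when $|Z_i|$ is small: the arboricity definition as stated requires $|S| \ge 2$, and the bound $|E(G[S])| \le \alpha(|S|-1)$ only makes sense for $|S| \ge 2$. If at some round $|Z_i| \le 1$, then $G[Z_i]$ has no edges, so every node in $Z_i$ has degree $0 < d$ (recall $d \ge 2(1+\epsilon)\alpha \ge 2 > 0$), hence $Z_{i+1} = \varnothing$ and the decomposition has already collapsed — so we are fine. Thus we may assume $|Z_i| \ge 2$ whenever we apply the edge bound, and the argument goes through. The main obstacle, such as it is, is just being careful that the chain of inequalities $2|E(G[Z_i])| = \sum_{u \in Z_i} \deg_{G[Z_i]}(u) \ge \sum_{u \in Z_{i+1}} \deg_{G[Z_i]}(u) \ge d|Z_{i+1}|$ correctly uses that $Z_{i+1} \subseteq Z_i$ together with the second decomposition property; everything else is routine.

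Putting it together: establish the per-round contraction $|Z_{i+1}| \le |Z_i|/(1+\epsilon)$ via the two-sided edge count (arboricity upper bound vs. degree-$\ge d$ lower bound), iterate to get $|Z_L| \le n/(1+\epsilon)^{L-1}$, and then plug in the chosen value of $L$ to conclude $|Z_L| < 1$ and therefore $Z_L = \varnothing$. Note the lemma does not use $\beta$ at all — only the lower-threshold condition matters — which is worth a remark but not strictly needed.
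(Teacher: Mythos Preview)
Your proposal is correct and follows essentially the same argument as the paper: both establish the contraction $|Z_{i+1}| \le |Z_i|/(1+\epsilon)$ via the chain $d|Z_{i+1}| \le \sum_{u\in Z_{i+1}}\deg_{G[Z_i]}(u) \le \sum_{u\in Z_i}\deg_{G[Z_i]}(u) = 2|E(G[Z_i])| \le 2\alpha|Z_i|$, then iterate and plug in $L$. Your treatment is slightly more careful than the paper's (you handle the $|Z_i|\le 1$ edge case explicitly and note that $\beta$ plays no role), but the core reasoning is identical.
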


\begin{proof}
    Let $(Z_1,...,Z_L)$ be a $(\beta, d, L)$-decomposition of $G$ satisfying the conditions of the lemma. Let $i$ be an arbitrary index in $[L-1]$. Since the arboricity of $G_i$ is at most $\alpha$, the average degree in $G_i$ is at most $2 \alpha$. On the other hand, by definition, the degree of any node in $Z_{i+1}$ in the graph $G_i$ is at least $d \geq 2(1 + \epsilon)\alpha$. It follows that
    $$2(1+\epsilon)\alpha |Z_{i+1}| \leq \sum_{u \in Z_{i+1}} \deg_{i}(u) \leq \sum_{u \in Z_{i}} \deg_{i}(u) \leq 2\alpha |Z_{i}|, $$
    and hence $|Z_{i+1}| \leq |Z_i|/(1 + \epsilon)$. Inductively, we obtain $|Z_L| \leq (1 + \epsilon)^{1-L}|Z_1| \leq 1/(1 + \epsilon) < 1$, yielding $Z_L = \varnothing$.
\end{proof}

\noindent
\textbf{Orienting the Edges.}
For our purposes, it will be useful to think of a decomposition of $G$ as inducing an \emph{orientation} of the edges. In particular, given an edge $e = (u,v)$, \emph{we orient the edge from the endpoint of lower level towards the endpoint of higher level}. If the two endpoints have the same level, we orient the edge arbitrarily. We write $u \prec v$ to denote that the edge $e$ is oriented from $u$ to $v$. Note that $\deg^+(u)$ is an upper bound on the out-degree of $u$ with respect to this orientation of the edges.

\medskip
\noindent \textbf{Dynamic Decompositions.} Bhattacharya et al.\ give a deterministic fully dynamic data structure that can be used to explicitly maintain a $(\beta, d, L)$-decomposition of a graph $G = (V,E)$ under edge updates with small amortized update time. This algorithm also has small amortized recourse, where the recourse of an update is defined as the number of edges that change level following the update. The following theorem, from  Section 4.1 of \cite{BhattacharyaHNT15}, will be used as a black box in our dynamic algorithm.

\begin{proposition}[\cite{BhattacharyaHNT15}]
\label{prop:dynamic-decomp}
For any constant $\beta \geq 2 + 3 \epsilon$, there is a deterministic fully-dynamic algorithm that maintains a $(\beta, d, L)$-decomposition of a graph $G = (V,E)$ with amortized update time and amortized recourse both bounded by $O(L/\epsilon)$.
\end{proposition}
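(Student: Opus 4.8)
The plan is to maintain the decomposition explicitly and repair it after each update by a local cascade, then bound the total work by an amortized (potential-based) analysis in which the hypothesis $\beta \ge 2 + 3\epsilon$ supplies the decisive slack. For each node $u$ I would store its level $\ell(u)$ together with a balanced search tree keyed by the levels of $u$'s neighbors, so that $\deg_i(u) = |\{v \in N(u) : \ell(v) \ge i\}| = \deg_{G[Z_i]}(u)$ (for $i \le \ell(u)$) can be queried and updated in $O(\log n)$ time. The first step is the observation that the two constraints of Definition~\ref{def:decomp} are \emph{one-sided} with respect to updates: an edge insertion only increases the quantities $\deg_{G[Z_i]}(\cdot)$, so afterwards the only constraint that can fail is the upper one, $\deg_{G[Z_i]}(w) > \beta d$ with $\ell(w) = i$, and only at $i = \ell(e)$ for an endpoint $w$ of the inserted edge $e$; dually, a deletion can break only the lower constraint, $w \in Z_{i+1}$ with $\deg_{G[Z_i]}(w) < d$, again only near level $\ell(e)$.

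Given this, repair is the obvious cascade. To fix an upper violation at a node $u$ of level $i$, \emph{promote} $u$ to a suitable higher level $j \le L$ — e.g.\ the smallest one with $\deg_j(u) \le \beta d$ — so that $u$'s own constraints hold again (its lower constraint is kept because $\deg_{j-1}(u) > \beta d \ge d$); inserting $u$ into $Z_{i+1},\dots,Z_j$ raises $\deg_k(v)$ by one for each neighbor $v$ with $i < k = \ell(v) \le j$, and any such $v$ that now overflows is queued and promoted in turn. A deletion is handled symmetrically by \emph{demotions}, which only trigger further demotions. Thus a promotion cascade involves only promotions and a demotion cascade only demotions; each such cascade terminates because every step strictly increases (resp.\ decreases) a node's level, which lies in $[L]$, and it leaves a valid $(\beta,d,L)$-decomposition. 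A single promotion of $u$ from $i$ to $j$ costs $O\bigl((j-i)\log n + \deg_{i+1}(u)\log n\bigr)$, the second term being the number of incident edges whose level changes (equivalently, the neighbors re-examined); demotions cost likewise.

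It remains to prove that the total cost over any $T$ updates is $O(TL/\epsilon)$, for which I would exhibit a potential $\Phi$ with: (i) each edge insertion or deletion raises $\Phi$ by only $O(L/\epsilon)$; and (ii) each unit of repair work — each edge whose level changes — is paid for by an $\Omega(1)$ drop in $\Phi$. The recourse bound then follows from (ii), since an edge changes level only when one of its endpoints is promoted or demoted. The role of $\beta \ge 2 + 3\epsilon$ is to guarantee a gap of $\beta d - d = (\beta - 1) d \ge d$ between the ``promote'' and ``demote'' thresholds, so that immediately after a node is re-leveled its degree at the new level sits far from the threshold that forced the move; consequently a number of degree-changing events proportional to the work just performed must accumulate before that node can be re-leveled again, and the cost is amortized against those future events (equivalently, against the potential a node builds up as its degree drifts from $d$ toward $\beta d$). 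The main obstacle, I expect, is designing a single $\Phi$ that pays for promotions and demotions at once: promotions naturally charge to edges moving \emph{up} and demotions to edges moving \emph{down}, which pulls any level-monotone potential in opposite directions, and one must reconcile this while keeping the per-update increase at $O(L/\epsilon)$ rather than the naive $O(Ld)$. This balancing — exactly where $\beta \ge 2 + 3\epsilon$ rather than merely $\beta > 1$ is used — is carried out in Section~4.1 of \cite{BhattacharyaHNT15}, which we invoke here as a black box.
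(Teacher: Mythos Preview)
Your proposal is correct and matches the paper's treatment: the paper does not prove this proposition at all but simply imports it as a black box from Section~4.1 of \cite{BhattacharyaHNT15}, exactly as you do in your final sentence. The sketch you give of the promotion/demotion cascade and the potential-based amortization is an accurate high-level account of that reference's argument, including the role of the $\beta \ge 2 + 3\epsilon$ gap, so nothing further is needed here.
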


\noindent
It is straightforward to modify this dynamic algorithm to explicitly maintain the orientation of the edges that we described above without changing its asymptotic behavior. Furthermore, we can assume that the orientation of an edge changes \emph{only} when it changes level.

\subsection{Graph Decomposition Systems}

In order for our dynamic algorithm to be able to deal with dynamically changing arboricity $\alpha$, we will need to give a slight generalization of Definition~\ref{def:decomp}, which we refer to as a \textit{decomposition system}. Intuitively, this will enable us to maintain multiple decompositions, one for each `guess' of the arboricity, allowing us to use whichever decomposition is most appropriate to modify the edge coloring while handling an update.

\begin{definition}
     Given a graph $G = (V,E)$, $\beta \geq 1$, a sequence $(d_j)_{j \in [K]}$ such that $d_j \geq 0$, and a positive integer $L$, a $(\beta, (d_j)_{j \in [K]}, L)$-decomposition system of $G$ is a sequence $(Z_{i,j})_{i \in [L], j \in [k]}$ of node sets, where for each $j \in [K]$, $(Z_{i,j})_{i \in [L]}$ is a $(\beta, d_j, L)$-decomposition of $G$.
\end{definition}

\noindent
Given a $(\beta, (d_j)_{j \in [K]}, L)$-decomposition system of $G=(V,E)$, we denote the graph $G[Z_{i,j}]$ by $G_{i,j}$, $\deg_{G_{i,j}}(u)$ by $\deg_{i,j}(u)$, and $N_{G_{i,j}}(u)$ by $N_{i,j}(u)$ for $u \in V$.
We say that $(Z_{i,j})_i$ is the \emph{$j^{th}$ layer} of the decomposition system. We denote by $\ell_j(u)$ the level of node $u$ in the decomposition $(Z_{i,j})_i$ and define $\deg_j^+(u) := \deg_{\ell_j(u),j}(u)$ and $N_j^+(u) := N_{\ell_j(u),j}(u)$ for $u \in V$.

Given a node $u$, we define the \textit{layer} of $u$ as $\mathcal L (u) = \min\{j \in [K] \, | \, \ell_j(u) < L\}$.
Given an edge $e = (u,v)$, we define the layer of $e$ as $\mathcal L(e) = \min\{\mathcal{L}(u), \mathcal L(v)\}$. We denote the orientation of the edges induced by the decomposition $(Z_{i,j})_i$ by $\prec_j$.

We can use the data structure from Proposition~\ref{prop:dynamic-decomp} to dynamically maintain a decomposition system, giving us the following proposition. In this context, we define the recourse of an update to be the number of edges that change levels in \emph{some} layer.

\begin{proposition}\label{prop:dynamic-decomp-sys}
For any constant $\beta \geq 2 + 3 \epsilon$, there is a deterministic fully dynamic algorithm that maintains a $(\beta, (d_j)_{j \in [K]}, L)$-decomposition system of a graph $G = (V,E)$ with amortized update time and amortized recourse $O(KL/\epsilon)$.
\end{proposition}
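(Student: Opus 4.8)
The plan is to reduce Proposition~\ref{prop:dynamic-decomp-sys} to the single-decomposition case of Proposition~\ref{prop:dynamic-decomp} by a straightforward parallel-composition argument. First I would observe that a $(\beta, (d_j)_{j \in [K]}, L)$-decomposition system is, by definition, nothing more than a tuple of $K$ independent $(\beta, d_j, L)$-decompositions of the \emph{same} underlying graph $G$. Since these decompositions do not interact with one another in the definition, we can maintain them by simply running $K$ separate instances of the dynamic algorithm from Proposition~\ref{prop:dynamic-decomp} side by side, the $j$-th instance being invoked with parameter $d_j$. Each instance also maintains the induced orientation $\prec_j$ as noted in the remark following Proposition~\ref{prop:dynamic-decomp}. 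On each edge insertion or deletion in $G$, we forward the update to all $K$ instances.

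Next I would account for the update time and recourse. Each of the $K$ instances has amortized update time $O(L/\epsilon)$ by Proposition~\ref{prop:dynamic-decomp}, and since on every update to $G$ we perform exactly one update in each instance, the total amortized update time is $K \cdot O(L/\epsilon) = O(KL/\epsilon)$. For the recourse, recall that the recourse of an update in the decomposition-system setting is defined as the number of edges that change levels in \emph{some} layer; this is at most the sum over $j \in [K]$ of the number of edges changing level in layer $j$. Each summand is amortized $O(L/\epsilon)$ by Proposition~\ref{prop:dynamic-decomp} (the recourse bound there), so by linearity the total amortized recourse is again $O(KL/\epsilon)$. The determinism of the resulting algorithm is immediate since each component instance is deterministic.

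The only point requiring a small amount of care is the amortization: Proposition~\ref{prop:dynamic-decomp} guarantees that over any sequence of $T$ updates starting from the empty graph, the total work (resp.\ total level changes) in a single instance is $O(T L/\epsilon)$. Since the decomposition system is also initialized on the empty graph and each of the $K$ instances sees exactly the same sequence of $T$ updates, summing the per-instance bounds gives total work (resp.\ total level changes across all layers) $O(T K L/\epsilon)$ over the whole sequence, which is precisely the claimed amortized bound of $O(KL/\epsilon)$ per update. I do not anticipate any genuine obstacle here — the main thing to get right is simply stating the definitions of update time and recourse for the decomposition system carefully enough that the ``run $K$ copies in parallel and sum'' argument is clearly valid; there is no interaction between layers to complicate matters.
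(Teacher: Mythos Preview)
Your proposal is correct and is essentially exactly what the paper does: the paper introduces the proposition with the sentence ``We can use the data structure from Proposition~\ref{prop:dynamic-decomp} to dynamically maintain a decomposition system, giving us the following proposition,'' and gives no further argument. Your parallel-composition reduction (run $K$ independent copies and sum the bounds) is precisely the intended justification.
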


\noindent As before, we assume that the orientation of an edge $e$ with respect to $\prec_j$ changes only when $\ell_{j}(e)$ changes.


\section{A Warmup Dynamic Algorithm (for Fixed $\alpha$)}\label{sec:warmup alg}

We now turn our attention towards designing an algorithm that can \emph{dynamically} maintain a $(\Delta + O(\alpha))$-edge coloring of the graph $G$ as it changes over time. A starting point for creating such an algorithm is the static algorithm from Appendix~\ref{sec:static} that we outline in Section~\ref{sec:our tech}. Unfortunately, the highly sequential nature of this algorithm makes it very challenging to dynamize directly, as it is not clear how to efficiently maintain the output in the dynamic setting. In order to overcome this obstacle, we use the notion of graph decompositions (see Section~\ref{sec:prelim:decomp}). Informally, these graph decompositions can be interpreted as an `approximate' version of the sequence in which the static algorithm colors the edges in the graph---where instead of peeling off a node with smallest degree one at a time, we peel off large batches of nodes with sufficiently small degrees simultaneously. This leads to a `more robust' structure that can be maintained dynamically in an efficient manner.

Let $G = (V,E)$ be a dynamic graph that undergoes updates via edge insertions and deletions. In this section, we work in a simpler setting where we assume that we are given an $\alpha$ and are guaranteed that the maximum arboricity of the graph $G$ remains at most $\alpha$ throughout the entire sequence of updates. We then give a deterministic fully dynamic algorithm that maintains a $(\Delta + O(\alpha))$-edge coloring of $G$, where $\Delta$ is an upper bound on the maximum degree of $G$ at any point throughout the entire sequence of updates.\footnote{Note that the algorithm needs prior knowledge of $\alpha$, but not $\Delta$.} Without dealing with implementation details, we show that it achieves $\tilde O(1)$ worst-case recourse per update. In Section~\ref{sec:full algo}, we extend our result to the setting where $\Delta$ and $\alpha$ are not bounded and show how to maintain a $(\Delta + O(\alpha))$-edge coloring of $G$ where $\alpha$ and $\Delta$ are the \textit{current} arboricity and maximum degree of $G$ respectively and change over time.

\subsection{Algorithm Description}

For the rest of this section, fix some constants $\epsilon$, $\beta$, and $L$ such that: $0 < \epsilon < 1$, $\beta = 2 + 3\epsilon$, $L = 2 + \lceil \log_{1 + \epsilon}n \rceil$. At a high level, our algorithm works by dynamically maintaining a $(\beta, 2(1 + \epsilon)\alpha, L)$-decomposition $(Z_i)_{i=1}^L$ of the graph $G$ by using Proposition~\ref{prop:dynamic-decomp}. During an update, our algorithm first updates the decomposition $(Z_i)_i$, and then uses this decomposition to find a path of length at most $L$ such that, by only changing the colors assigned to the edges in this path, it can update the coloring to be valid for the updated graph. Since $L = \tilde O(1)$, this immediately implies the worst-case recourse bound. Algorithm \ref{alg:init} gives the procedure that we call to initialize our data structure, creating a decomposition of the empty graph, and Algorithms \ref{alg:insertion} and \ref{alg:deletion} give the procedures called when handling insertions and deletions respectively.
\begin{algorithm}
    \SetAlgoLined
    \DontPrintSemicolon
    \KwIn{An empty graph $G=(V, \varnothing)$ and a parameter $\alpha$}
    Create a $(\beta, 2(1 + \epsilon)\alpha, L)$-decomposition $(Z_i)_{i \in [L]}$ of $G$\;
    \caption{\textsc{Initialize}$(G, \alpha)$}
    \label{alg:init}
\end{algorithm}
\begin{algorithm}
    \SetAlgoLined
    \DontPrintSemicolon
    \KwIn{An edge $e$ to be inserted into $G$}
    Insert the edge $e$ into $G$\;
    $\chi(e) \leftarrow \perp$\;
    Update the $(\beta, 2(1 + \epsilon)\alpha, L)$-decomposition $(Z_i)_{i}$ of $G$\;
    \textsc{ExtendColoring}$(e, (Z_i)_{i})$\;
    \caption{\textsc{Insert}$(e)$}
    \label{alg:insertion}
\end{algorithm}
\begin{algorithm}
    \SetAlgoLined
    \DontPrintSemicolon
    \KwIn{An edge $e$ to be deleted from $G$}
    Delete the edge $e$ from $G$\;
    $\chi(e) \leftarrow \perp$\;
    Update the $(\beta, 2(1 + \epsilon)\alpha, L)$-decomposition $(Z_i)_{i}$ of $G$\;
    \caption{\textsc{Delete}$(e)$}
    \label{alg:deletion}
\end{algorithm}
\begin{algorithm}
    \SetAlgoLined
    \DontPrintSemicolon
    \KwIn{An uncolored edge $e$ and a $(\beta, 2(1 + \epsilon)\alpha, L)$-decomposition $(Z_i)_{i}$ of $G$}
    $S \leftarrow \{e\}$\;
    \While{$S \neq \varnothing$} {
        Let $f = (u,v)$ be any edge in $S$ where $u \prec v$\;
        $C^+_u \leftarrow \chi (N^+(u))$\;
        $C_v \leftarrow \chi (N(v))$\;
        Set $c$ to any element in $[|C^+_u| + |C_v| + 1] \setminus (C^+_u \cup C_v)$\;
        \If{$c \in \chi (N(u))$} {
            Let $f'$ be the edge in $N(u)$ with $\chi(f') = c$\;
            $\chi(f') \leftarrow \perp$ and $S \leftarrow S \cup \{ f' \}$\;
        }
        $\chi(f) \leftarrow c$ and $S \leftarrow S \setminus \{ f \}$\;
    }
    \caption{\textsc{ExtendColoring}$(e, (Z_i)_{i})$}
    \label{alg:extend}
\end{algorithm}

\noindent The following theorem, which we prove next, summarizes the behavior of our warmup dynamic algorithm.

\begin{theorem}\label{thm:warmup}
    The warmup dynamic algorithm is deterministic and, given a sequence of updates for a dynamic graph $G$ and a value $\alpha$ such that the arboricity of $G$ never exceeds $\alpha$, maintains a $(\Delta + (4 + \epsilon)\alpha)$-edge coloring, where $\Delta$ is the maximum degree of $G$ throughout the entire sequence of updates. The algorithm has $O(\log n/ \epsilon)$ worst-case recourse per update and $O(\log^2 n \log \Delta / \epsilon^2)$ amortized update time.
\end{theorem}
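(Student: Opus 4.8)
The plan is to prove the three claims of Theorem~\ref{thm:warmup} separately: correctness (the coloring stays proper and uses only $\Delta + (4+\epsilon)\alpha$ colors), the worst-case recourse bound, and the amortized update time bound. For correctness, I would first argue that \textsc{ExtendColoring} maintains the invariant that $\chi$ is a proper partial coloring of $G$ at all times, with the only uncolored edges being those currently in $S$. The key structural observation is the following monotonicity claim: when we pop an edge $f=(u,v)$ with $u \prec v$ and (possibly) uncolor a conflicting edge $f' \in N(u)$, the edge $f'$ must be oriented \emph{out of} $u$ as well, i.e.\ $u \prec w$ where $f' = (u,w)$, and moreover $\ell(w) \le \ell(u) = \ell(f)$ with the level strictly dropping along the chain of recursive calls. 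The reason is that $c \notin C^+_u = \chi(N^+(u))$, so the conflicting edge $f'$ at $u$ cannot be one of the $\deg^+(u)$ edges counted in $N^+(u)$; by the definition of $N^+$ and the orientation rule this forces $f'$ to go from $u$ to a strictly lower level. Hence the edges placed into $S$ form a path of strictly decreasing level, so $|S| = 1$ is restored after at most $L$ iterations and the while-loop terminates; this simultaneously gives the recourse bound of $L = O(\log n/\epsilon)$.

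Next I would verify that a valid color $c$ always exists in the range $[|C^+_u| + |C_v| + 1]$. Since $C^+_u \cup C_v$ has size at most $|C^+_u| + |C_v|$, a free color in a palette of that size plus one always exists. To bound the total number of colors used, I need $|C^+_u| + |C_v| + 1 \le \Delta + (4+\epsilon)\alpha$ for every edge $f=(u,v)$ processed with $u \prec v$. We have $|C_v| \le \deg(v) \le \Delta$. For $|C^+_u|$: if $\ell(u) < L$ then $u \notin Z_L$, so $\deg^+(u) = \deg_{\ell(u)}(u) \le \beta d = (2+3\epsilon)\cdot 2(1+\epsilon)\alpha$, which after a short calculation is at most $(4 + O(\epsilon))\alpha$ — here I would be slightly careful to check that choosing $\epsilon$ appropriately (or rescaling) yields the stated constant $(4+\epsilon)$; if $\ell(u) = L$, i.e.\ $u \in Z_L$, then by Lemma~\ref{lem: Z empty} (using $d = 2(1+\epsilon)\alpha \ge 2(1+\epsilon)\alpha$ and $L = 2 + \lceil \log_{1+\epsilon} n\rceil$) we have $Z_L = \varnothing$, a contradiction, so this case never arises. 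Thus $|C^+_u| \le (4+\epsilon)\alpha$ (after the rescaling of $\epsilon$), giving the bound on the total palette size; note that whenever both endpoints of $f$ have equal level, the orientation is arbitrary and the same bound applies to the chosen tail $u$.

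For the amortized update time, I would combine three contributions. First, Proposition~\ref{prop:dynamic-decomp} with $\beta = 2 + 3\epsilon$ gives amortized $O(L/\epsilon) = O(\log n/\epsilon^2)$ time to maintain the decomposition and the induced orientation per update. Second, each call to \textsc{ExtendColoring} runs the while-loop at most $L$ times, and each iteration must (a) compute the sets $C^+_u$, $C_v$ and find a free color in a small range, and (b) look up whether $c \in \chi(N(u))$ and find the conflicting edge. Using the binary-search / balanced-BST color data structures of \cite{BhattacharyaCHN18} (spelled out in Appendix~\ref{sec:data structs}), each such operation costs $O(\log \Delta)$ (or $O(\log n)$) time, for a per-iteration cost of $\tilde O(1)$ and a per-call cost of $O(L \log \Delta / \epsilon)$; but wait — one subtlety is that maintaining these data structures also incurs cost when the decomposition reorients edges, and Proposition~\ref{prop:dynamic-decomp} only bounds the number of reorientations amortized, so this bookkeeping cost is $O(L/\epsilon)$ reorientations $\times$ $O(\log \Delta)$ update cost each, i.e.\ $O(L \log\Delta/\epsilon)$ amortized. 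Putting the pieces together yields $O(L \log \Delta/\epsilon) = O(\log^2 n \log \Delta/\epsilon^2)$ amortized update time.

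The main obstacle I anticipate is the monotonicity argument in the correctness proof: pinning down precisely why the conflicting edge $f'$ at $u$ must point to a strictly lower level, which hinges on the exact interplay between the orientation rule (lower level $\to$ higher level, ties broken arbitrarily), the definition of $N^+(u) = N_{\ell(u)}(u)$, and the fact that $c$ was deliberately chosen to avoid $\chi(N^+(u))$. I would need to handle the tie-breaking case carefully: if $f' = (u,w)$ with $\ell(w) = \ell(u)$ but oriented $u \prec w$, then $f' \in N^+(u)$ (since $N^+(u) = N_{\ell(u)}(u)$ includes all edges within $G[Z_{\ell(u)}]$ incident to $u$, including those to same-level neighbors), so $c \notin \chi(N^+(u))$ already rules this out — meaning the conflicting edge genuinely drops a level. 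Once this is established the rest is routine, and the three bounds follow as sketched.
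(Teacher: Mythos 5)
Your proposal is correct and follows essentially the same route as the paper: the level-decreasing argument for recourse (using that $c \notin \chi(N^+(u))$ forces the conflicting edge to point strictly downward, including your careful handling of the tie-breaking case), the palette bound via $\deg^+(u) \leq \beta d$ plus $Z_L = \varnothing$, and the update-time accounting via Proposition~\ref{prop:dynamic-decomp} plus the binary-search color data structures. The only deviations are cosmetic — an off-by-one in $|C_v| \leq \deg(v)$ (the paper uses $\deg(v)-1$ since the edge being colored is itself uncolored and counted in $N(v)$) and slightly informal polylog bookkeeping in the update-time tally — neither of which affects the stated bounds after the $\epsilon$-rescaling you already flag.
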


\subsection{Analysis of the Warmup Algorithm}

We now show that the warmup algorithm maintains a $(\Delta + 2\beta(1 + \epsilon)\alpha)$-edge coloring and has a worst-case recourse of at most $L = O(\log n / \epsilon)$ per update.\footnote{Note that $2\beta(1 + \epsilon)\alpha = (4 + O(\epsilon))\alpha$.}

\begin{lemma}\label{lem:dynamic anal 1}
    Let $G=(V,E)$ be a graph with maximum degree at most $\Delta$ and arboricity at most $\alpha$. Let $e$ be an edge in $G$, $(Z_i)_i$ a $(\beta, 2(1 + \epsilon)\alpha, L)$-decomposition of $G$ and $\chi$ a $(\Delta + 2\beta(1 + \epsilon)\alpha)$-edge coloring of $G - e$. Then running \textsc{ExtendColoring}$(e, (Z_i)_{i})$:
    \begin{enumerate}
        \item changes the colors of at most $L$ edges in $G$, and
        \item turns $\chi$ into a $(\Delta + 2\beta(1 + \epsilon)\alpha)$-edge coloring of $G$.
    \end{enumerate}
\end{lemma}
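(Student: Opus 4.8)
The plan is to analyze a single iteration of the \texttt{while} loop in \textsc{ExtendColoring}, show that it makes progress (the level of the edge being processed strictly decreases), and conclude by bounding the number of iterations and verifying the coloring invariant is restored. I would begin by setting up the key loop invariant: at the start of each iteration, $S$ contains exactly one edge $f$, all edges of $G$ except $f$ are properly colored by $\chi$, and $f$ is uncolored. To see that $|S| = 1$ is maintained, note that each iteration removes $f$ from $S$ and adds at most one new edge $f'$; the only subtlety is that $f'$ could already be in $S$, but since $|S|=1$ at the start and $f \ne f'$ (as $f'$ is uncolored at the moment of selection while $f$ is about to be colored), we get exactly one edge at the end.

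The crux of the correctness argument is the claim that a valid color $c$ always exists in the small palette $[|C^+_u| + |C_v| + 1]$, and that after recoloring, the only possible conflict is with an edge incident on $u$ at a strictly lower level. For the first part: $|C^+_u| \le \deg^+(u) \le \beta d = 2\beta(1+\epsilon)\alpha$ since $u \prec v$ means $\ell(u) \le \ell(v)$, so $u$ lies at a level where its out-degree (hence $|N^+(u)|$) is bounded by $\beta d$ — unless $u \in V_L$, but then by Lemma~\ref{lem: Z empty} (applied with $d = 2(1+\epsilon)\alpha$) we would have $Z_L = \varnothing$, a contradiction, so this case does not arise. Also $|C_v| \le \deg(v) \le \Delta$. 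Hence the palette has size at most $\Delta + 2\beta(1+\epsilon)\alpha + 1$, which fits within our color set, and since $C^+_u \cup C_v$ has size at most $|C^+_u| + |C_v|$, a free color $c$ in $[|C^+_u|+|C_v|+1]$ exists. For the second part: after we set $\chi(f) \leftarrow c$, any new conflict must be between $f = (u,v)$ and an adjacent edge colored $c$; it cannot be incident on $v$ since $c \notin C_v = \chi(N(v))$, and it cannot be an edge $(u,y)$ with $\ell(y) \ge \ell(u)$ (i.e. $y \in N^+(u)$ direction) since $c \notin C^+_u$ — wait, more carefully, $N^+(u) = N_{\ell(u)}(u)$ consists of edges of $u$ going to vertices in $Z_{\ell(u)}$, which are exactly the out-edges; so the conflicting edge $f' = (u,y)$ has $y \notin Z_{\ell(u)}$, meaning $\ell(y) < \ell(u)$, hence $\ell(f') < \ell(f)$. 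When $f'$ is reprocessed, in that iteration $u$ plays the role of the higher-level endpoint, so the recursion descends; this is the potential-function argument that bounds the loop at $L$ iterations, giving claim~1. Claim~2 then follows because the loop terminates only when $S = \varnothing$, at which point the invariant says all edges of $G$ are properly colored.

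The main obstacle I anticipate is being fully careful about the orientation bookkeeping: the algorithm fixes the orientation $\prec$ once from the decomposition $(Z_i)_i$ and does not change it during \textsc{ExtendColoring}, so I need to verify that when the recolored edge $f' = (u,y)$ re-enters the loop, its orientation is $y \prec u$ (consistent with $\ell(y) < \ell(u)$), so that in the next iteration the "$u$" of that iteration is $y$ and the "$v$" is our current $u$ — and crucially $N^+(y)$ and $N(u)$ are the relevant sets. The decrease in $\ell(\cdot)$ of the active edge is what drives termination, and I should phrase this as: define the progress measure to be $\ell(f)$ for the unique $f \in S$, observe it starts at some value $\le L$ and strictly decreases each iteration (while remaining $\ge 1$), hence at most $L - 1$ further recolorings beyond the first, for at most $L$ total color changes. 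Everything else — the palette-size arithmetic and the "no conflict with $v$" check — is routine once the orientation semantics are pinned down.
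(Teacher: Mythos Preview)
Your approach is essentially identical to the paper's: track the level $\ell(f)$ of the unique uncolored edge in $S$, show it strictly decreases each iteration (since the conflicting edge $f'=(u,y)$ has $y\notin Z_{\ell(u)}$), and conclude at most $L$ iterations; then verify the palette arithmetic. One small but real correction: your bounds $|C_u^+|\le \deg^+(u)$ and $|C_v|\le \deg(v)$ give a palette of size up to $\Delta + 2\beta(1+\epsilon)\alpha + 1$, which does \emph{not} fit in the target color set $[\Delta + 2\beta(1+\epsilon)\alpha]$ --- you are off by one. The fix (which the paper makes) is to note that the edge $f=(u,v)$ itself lies in both $N^+(u)$ and $N(v)$ and is uncolored, so in fact $|C_u^+|\le \deg^+(u)-1$ and $|C_v|\le \deg(v)-1$, whence $|C_u^+|+|C_v|+1 \le \deg^+(u)+\deg(v)-1 \le \Delta + 2\beta(1+\epsilon)\alpha$. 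With this adjustment your argument is complete and matches the paper.
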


\begin{proof}
    We first prove (1). Let $e_i$ denote the edge that is uncolored at the start of the $i^{th}$ iteration of the while loop as we run the procedure. Let $\ell (e_i)$ denote the minimum of the level of both of its endpoints. Clearly $\ell(e_1) \leq L$ since this is the highest level and $\ell(e_i) \geq 1$ for all $i$ since this is the lowest level. Suppose the while loop iterates at least $i$ times for some integer $i \geq 2$. Let $e_{i-1} = (u,v)$ where $u \prec v$, and hence $\ell(u) \leq \ell(v)$ (see Section~\ref{sec:prelim:decomp}). Since $e_{i} \in N(u)$ during iteration $i-1$ but $\chi(e_{i}) \notin \chi (N^+(u))$, we have that $e_{i} \notin N^+(u)$, and hence the endpoint of $e_{i}$ that is not $u$ appears in a level strictly below the level of $u$, so $\ell(e_{i}) < \ell(e_{i-1})$. It follows that $1 \leq \ell(e_i) \leq L + 1 - i$, so the while loop iterates at most $L$ times. For (2), note that if we let $e_i = (u,v)$ where $u \prec v$, then $|C^+_u| = \deg^+(u) - 1$ and $|C_v| = \deg(v) - 1$, so
    $$|C^+_u| + |C_v| + 1 \leq \deg^+(u) + \deg(v) - 1 \leq \Delta + 2\beta(1 + \epsilon)\alpha, $$
    and so the procedure never assigns any $e_i$ a color larger than $\Delta + 2\beta(1 + \epsilon)\alpha$. Since we know from (1) that the procedure terminates after at most $L$ iterations, after which every edge in the graph is colored, and $\chi$ was a $(\Delta + 2\beta(1 + \epsilon)\alpha)$-edge coloring of the graph $G - e_1$ at the start of the procedure, it follows by induction that after the procedure terminates $\chi$ assigns each edge in $G$ a color from $[\Delta + 2\beta(1 + \epsilon)\alpha]$. Furthermore, our algorithm can only terminate if this assignment forms a valid edge coloring. Hence, $\chi$ is a $(\Delta + 2\beta(1 + \epsilon)\alpha)$-edge coloring of $G$.
\end{proof}

\begin{lemma}
    The warmup algorithm maintains a $(\Delta + 2\beta(1 + \epsilon)\alpha)$-edge coloring of the graph.
\end{lemma}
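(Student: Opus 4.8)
The plan is to prove this by induction on the sequence of updates, using Lemma~\ref{lem:dynamic anal 1} as the key black box for the insertion case. The invariant I maintain is: after each update, $\chi$ is a proper $(\Delta + 2\beta(1+\epsilon)\alpha)$-edge coloring of the current graph $G$, where $\Delta$ is the maximum degree seen so far and $\alpha$ is the (given) arboricity bound. The base case is the empty graph, for which the empty coloring trivially works. Note that using a fixed (monotone) $\Delta$ throughout is harmless here: the palette only grows, so a valid coloring at an earlier step remains valid.

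For the inductive step, I would split into the two update types. If the update is a deletion of an edge $e$, then by Algorithm~\ref{alg:deletion} we simply set $\chi(e) \leftarrow \perp$ and update the decomposition; the restriction of a proper coloring of $G$ to $G - e$ is still a proper coloring, and removing an edge cannot increase the maximum degree, so the invariant is preserved with no recourse from the coloring side. If the update is an insertion of an edge $e$, then by Algorithm~\ref{alg:insertion} the graph becomes $G + e$ with $\chi(e) = \perp$, so $\chi$ is a $(\Delta + 2\beta(1+\epsilon)\alpha)$-edge coloring of $(G+e) - e$, where now $\Delta$ is the (possibly updated) maximum degree bound for $G + e$. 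We then update the decomposition $(Z_i)_i$ via Proposition~\ref{prop:dynamic-decomp}, so it is a valid $(\beta, 2(1+\epsilon)\alpha, L)$-decomposition of $G + e$; since the arboricity of $G + e$ is at most $\alpha$ by hypothesis, all the preconditions of Lemma~\ref{lem:dynamic anal 1} are met for the graph $G + e$. Applying part (2) of that lemma, the call to \textsc{ExtendColoring}$(e, (Z_i)_i)$ turns $\chi$ into a proper $(\Delta + 2\beta(1+\epsilon)\alpha)$-edge coloring of $G + e$, closing the induction.

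The only genuine subtlety — and the thing I would be careful to spell out — is that \textsc{ExtendColoring} is invoked on an uncolored edge of the \emph{current} graph together with a decomposition of that \emph{current} graph, which is exactly the hypothesis form of Lemma~\ref{lem:dynamic anal 1}; in particular one must check that the decomposition passed in has already been updated to reflect the insertion (which it has, since line 3 of Algorithm~\ref{alg:insertion} precedes line 4), and that one is comparing against the maximum degree of the updated graph. Everything else is immediate bookkeeping. I expect no real obstacle here: the lemma does all the work, and this statement is essentially the observation that the per-update guarantee composes over the update sequence once one is careful that $\Delta$ is interpreted as a running maximum so that the palette is never shrunk out from under a previously valid coloring.

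\begin{proof}
    We argue by induction on the number of updates processed so far, maintaining the invariant that $\chi$ is a proper $(\Delta + 2\beta(1 + \epsilon)\alpha)$-edge coloring of the current graph $G$, where $\Delta$ denotes the maximum degree attained by $G$ over the updates seen so far. Since the palette $[\Delta + 2\beta(1+\epsilon)\alpha]$ only grows as $\Delta$ increases, a proper coloring that is valid at one step remains a proper coloring (over the larger palette) at later steps, so it suffices to verify the invariant immediately after each update. The base case is the empty graph with the empty coloring, for which the claim holds vacuously.

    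Suppose the invariant holds before an update. If the update deletes an edge $e$ (Algorithm~\ref{alg:deletion}), then after setting $\chi(e) \leftarrow \perp$ the map $\chi$ is a proper coloring of $G - e$ using colors from the same palette, and the maximum degree does not increase, so the invariant is preserved.

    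If the update inserts an edge $e$ (Algorithm~\ref{alg:insertion}), let $G$ now denote the graph after the insertion and let $\Delta$ be its maximum degree (updated if necessary). After line~2 of Algorithm~\ref{alg:insertion}, $\chi$ is a proper $(\Delta + 2\beta(1+\epsilon)\alpha)$-edge coloring of $G - e$. Line~3 updates $(Z_i)_i$ using Proposition~\ref{prop:dynamic-decomp}, so it is a $(\beta, 2(1+\epsilon)\alpha, L)$-decomposition of $G$. By hypothesis the arboricity of $G$ is at most $\alpha$, so all preconditions of Lemma~\ref{lem:dynamic anal 1} are satisfied for the graph $G$, the edge $e$, the decomposition $(Z_i)_i$, and the coloring $\chi$. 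By part~(2) of that lemma, the call \textsc{ExtendColoring}$(e, (Z_i)_i)$ in line~4 turns $\chi$ into a proper $(\Delta + 2\beta(1+\epsilon)\alpha)$-edge coloring of $G$. This completes the induction.
\end{proof}
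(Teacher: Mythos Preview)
Your proof is correct and follows essentially the same approach as the paper: induction on the update sequence, with the deletion case being immediate and the insertion case handled by invoking part~(2) of Lemma~\ref{lem:dynamic anal 1}. Your version is in fact more careful than the paper's in verifying the preconditions of that lemma (in particular that the decomposition has been updated before the call to \textsc{ExtendColoring}), but the underlying argument is the same.
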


\begin{proof}
    We prove this by induction. Since $G$ is initially empty, the empty map is trivially a coloring of $G$. Let $\lambda = \Delta + 2\beta(1 + \epsilon)\alpha$. Suppose $\chi$ is a $\lambda$-edge coloring of $G$ after the $i^{th}$ update. If the $i+1^{th}$ update is a deletion, $\chi$ is still a $\lambda$-edge coloring of the updated graph and we are done. If the $i+1^{th}$ update is an insertion, then we run Algorithm~\ref{alg:extend} in order to update $\chi$. By part (2) of Lemma~\ref{lem:dynamic anal 1}, it follows that $\chi$ is a $\lambda$-edge coloring of the updated graph once the procedure terminates.
\end{proof}

\begin{lemma}
    The warmup algorithm changes the colors of at most $L$ edges while handling an update.
\end{lemma}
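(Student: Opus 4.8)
The plan is to derive this directly from part (1) of Lemma~\ref{lem:dynamic anal 1}, treating edge deletions and edge insertions separately. For a deletion, I would observe that Algorithm~\ref{alg:deletion} never invokes \textsc{ExtendColoring} and performs no recoloring at all: it merely removes the edge, sets its color to $\perp$, and refreshes the decomposition via Proposition~\ref{prop:dynamic-decomp}. Since maintaining the decomposition is coloring-agnostic, no edge of the (smaller) graph changes color, so the recourse of a deletion is $0 \le L$.

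For an insertion of an edge $e$, the only step in Algorithm~\ref{alg:insertion} that alters $\chi$ is the call \textsc{ExtendColoring}$(e, (Z_i)_i)$, so it suffices to bound the number of edges recolored by that call. The key is to verify that the hypotheses of Lemma~\ref{lem:dynamic anal 1} hold at the moment of the call: the graph $G$ (after inserting $e$) has maximum degree at most $\Delta$ and arboricity at most $\alpha$ by the standing assumptions of this section; $(Z_i)_i$ is a $(\beta, 2(1+\epsilon)\alpha, L)$-decomposition of the updated $G$ because Algorithm~\ref{alg:insertion} updates it before the call; and $\chi$ is a $(\Delta + 2\beta(1+\epsilon)\alpha)$-edge coloring of $G - e$, since the coloring was valid before the update (by the coloring-maintenance invariant proved above) and the insertion only added $e$ with $\chi(e) = \perp$. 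Given these, part (1) of Lemma~\ref{lem:dynamic anal 1} immediately yields that at most $L$ edges change color.

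Combining the two cases, the algorithm changes the colors of at most $L = O(\log n/\epsilon)$ edges while handling any update, which is exactly the claim. I do not anticipate any real obstacle: all the work lies in Lemma~\ref{lem:dynamic anal 1}, whose part (1) already tracks the strict decrease of the level parameter $\ell(e_i)$ across iterations of the while loop of Algorithm~\ref{alg:extend}; the present statement is just the bookkeeping that connects that bound to the two update types. The only points requiring (trivial) care are confirming that a deletion triggers no recoloring and that the decomposition-maintenance subroutine does not itself touch $\chi$.
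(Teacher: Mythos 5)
Your proof is correct and follows essentially the same route as the paper: deletions recolor nothing (only $e$ itself is removed/uncolored), and insertions invoke \textsc{ExtendColoring} once, whose recourse is bounded by $L$ via part~(1) of Lemma~\ref{lem:dynamic anal 1}. The extra care you take to verify that the hypotheses of Lemma~\ref{lem:dynamic anal 1} hold at the moment of the call is reasonable bookkeeping that the paper leaves implicit, but it does not change the argument.
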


\begin{proof}
    While handling the deletion of an edge $e$, our algorithm uncolors the edge $e$ and does not change the color of any other edge. While handling the insertion of an edge $e$, our algorithm only changes the colors of edges while handling the call to \textsc{ExtendColoring}$(e, (Z_i)_{i})$. By part (1) of Lemma~\ref{lem:dynamic anal 1}, this changes the colors of at most $L$ edges.
\end{proof}

\noindent In Appendix~\ref{app:warmup dynamic}, we give the proof of the following lemma using the data structures presented in Appendix~\ref{sec:coloring data structs}.

\begin{lemma}\label{lem:warmup dynamic update time}
    The warmup algorithm has an amortized update time of $O(\log^2 n \log \Delta / \epsilon^2)$.
\end{lemma}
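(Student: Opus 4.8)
The plan is to bound the amortized update time by accounting separately for the cost of maintaining the decomposition and the cost of the call to \textsc{ExtendColoring}. For the decomposition, Proposition~\ref{prop:dynamic-decomp} already gives amortized update time and amortized recourse $O(L/\epsilon) = O(\log n/\epsilon)$. The extra work per update consists of: (a) translating each level change reported by the decomposition data structure into updates of the auxiliary color data structures, and (b) the at most $L$ iterations of the while loop in \textsc{ExtendColoring}. So the first step is to fix the data structures (from Appendix~\ref{sec:coloring data structs}): for each node $u$ we keep a balanced binary search tree (or a suitable bitmask/van Emde Boas-type structure) over the colors currently used by edges in $N(u)$, and similarly over the colors used by edges in $N^+(u)$; additionally, for each node $u$ and each color $c$ we need to be able to recover (in $\tilde O(1)$ time) the unique edge in $N(u)$ colored $c$, if any. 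Each of these supports insert, delete and ``find a free color in a prefix $[k]$'' in $O(\log \Delta)$ time, since the relevant palette has size $O(\Delta)$.

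Next I would analyze one iteration of the while loop. Picking the edge $f=(u,v)$ with $u\prec v$, computing $C_u^+ = \chi(N^+(u))$ and $C_v=\chi(N(v))$ implicitly (we do not materialize these sets, we just query the BSTs), and finding a free color $c$ in $[|C_u^+|+|C_v|+1]$ takes $O(\log\Delta)$ time using a merge/search over the two trees; checking whether $c\in\chi(N(u))$ and recovering the conflicting edge $f'$ also takes $O(\log\Delta)$; recoloring $f$ (and possibly $f'$) requires $O(1)$ updates to $O(1)$ trees, each costing $O(\log\Delta)$; so each iteration is $O(\log\Delta)$ and, by part~(1) of Lemma~\ref{lem:dynamic anal 1}, there are at most $L=O(\log n/\epsilon)$ of them, contributing $O(\log n\log\Delta/\epsilon)$ per update. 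However, there is a subtlety: the data structures we maintain depend on the orientation $\prec$ (through $N^+(u)$), so whenever the decomposition changes an edge's level we must update the $N^+$-trees of its endpoints and of its neighbors whose out-degree set changes. Here I would invoke the assumption (stated after Proposition~\ref{prop:dynamic-decomp}) that an edge's orientation changes only when it changes level, so each level change of an edge $e=(x,y)$ triggers $O(1)$ tree updates --- namely removing/inserting $\chi(e)$ from/into the $N^+$-trees of $x$ and $y$ as their levels and the edge's orientation change --- each costing $O(\log\Delta)$. Summing over the $O(L/\epsilon)$ amortized level changes per update from Proposition~\ref{prop:dynamic-decomp} gives $O(\log^2 n\log\Delta/\epsilon^2)$ amortized time for maintaining the color data structures in sync with the decomposition.

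The final step is to add up the three contributions: $O(\log n/\epsilon)$ for running the decomposition algorithm itself, $O(\log^2 n\log\Delta/\epsilon^2)$ for keeping the color data structures consistent with level changes, and $O(\log n\log\Delta/\epsilon)$ for the \textsc{ExtendColoring} call; the dominant term is $O(\log^2 n\log\Delta/\epsilon^2)$, which is the claimed bound. One should also note that insertions and deletions each do only $O(1)$ direct work beyond these (updating $G$'s adjacency structure and setting $\chi(e)\leftarrow\perp$, which is a constant number of tree operations costing $O(\log\Delta)$), so they do not affect the asymptotics.

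The main obstacle I anticipate is the bookkeeping in the second step: correctly identifying exactly which auxiliary structures must be touched when the decomposition data structure reports a level change, and arguing that this is only $O(1)$ structures per changed edge. The point to be careful about is that changing the level of $x$ could in principle change $\ell(e)$ and hence the orientation of \emph{many} edges incident on $x$, which would blow up the cost; the resolution is that the decomposition data structure (per the remark after Proposition~\ref{prop:dynamic-decomp}) already charges each such orientation flip as a separate ``edge changes level'' event counted in its $O(L/\epsilon)$ amortized recourse, so we may safely charge $O(\log\Delta)$ of color-structure maintenance to each of those events without any additional amortization argument of our own. Making this charging rigorous --- i.e.\ that the number of $(\text{edge},\text{structure})$ pairs we update is within a constant factor of the decomposition's reported recourse --- is the crux of the proof; everything else is a routine $O(\log\Delta)$-per-operation count on standard balanced-BST data structures.
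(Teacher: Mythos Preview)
Your approach is essentially the same as the paper's: use Proposition~\ref{prop:dynamic-decomp} for the decomposition, keep per-node BSTs over $\chi(N(u))$ and $\chi(N^+(u))$ in sync with level changes (charging $O(\log\Delta)$ per edge-level-change to the decomposition's $O(L/\epsilon)$ amortized recourse), and bound \textsc{ExtendColoring} by $L$ iterations times the per-iteration data-structure cost. One small inaccuracy: finding a color in $[|C_u^+|+|C_v|+1]\setminus(C_u^+\cup C_v)$ from two separate BSTs costs $O(\log^2\Delta)$ in the paper's \textsc{New-Element} query (binary search with a rank query on each tree per step), not $O(\log\Delta)$; this does not affect the final bound since $O(L\log^2\Delta/\epsilon)\le O(\log^2 n\log\Delta/\epsilon^2)$ using $\Delta\le n$ and $\epsilon<1$.
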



\noindent
We also note that Corollary~\ref{cor:structural} follows immediately from Lemma~\ref{lem:dynamic anal 1}. In particular, if we set $\beta = 1$, by Lemma~\ref{lem: Z empty}, the proof Lemma~\ref{lem:dynamic anal 1} still holds. Hence, we can use \textsc{ExtendColoring} along with \emph{any} $(1, 2(1 + \epsilon)\alpha, L)$-decomposition of $G$ in order to extend any $(\Delta + 2(1 + \epsilon)\alpha)$-edge coloring $\chi$ with an uncolored edge $e$ so that the edge $e$ is now colored by only changing the colors of $O(\log n / \epsilon)$ many edges.


\section{The Dynamic Algorithm}\label{sec:full algo}

We now describe our full dynamic algorithm and show that it maintains a $(\Delta + O(\alpha))$-edge coloring of the graph. We then use Proposition~\ref{prop:dynamic-decomp-sys} to show that we can get $\tilde O(1)$ amortized recourse. Finally, in Appendix~\ref{sec:data structs}, we describe the relevant data structures and use them to implement our algorithm to get $\tilde O(1)$ amortized update time.

\subsection{Algorithm Description}

In order to describe our algorithm, we fix some constant $\epsilon$ such that $0 < \epsilon < 1$
and set 
$\beta = 2 + 3\epsilon$, $L = 2 + \lceil \log_{1 + \epsilon}n \rceil$. Let $\tilde \alpha_j := (1 + \epsilon)^{j-1}$ and note that, for any $n$-node graph $G$ with arboricity $\alpha$, $\tilde \alpha_1 = 1 \leq \alpha \leq n < \tilde \alpha_L$.

\medskip
\noindent \textbf{Informal Description.} Our algorithm works by maintaining the invariant that each edge $e = (u,v)$ receives a color in the set $[\deg(v) + O(\tilde \alpha_{\mathcal L(e)})]$, where $u \prec_{\mathcal L(e)} v$. 
Since $\deg(v) \leq \Delta$ and $\tilde \alpha_{\mathcal L (e)} = O(\alpha)$ (see Lemma~\ref{lem: alphaL small}), it follows that the algorithm uses at most $\Delta + O(\alpha)$ many colors. When an edge is inserted or deleted, this may cause some $\tilde O(1)$ many edges to violate the invariant. We begin by first identifying all such edges and uncoloring them. We then update the decomposition system maintained by our algorithm, which may again cause some $\tilde O(1)$ many edges (on average) to violate the invariant. We again identify and uncolor all such edges. We now want to color each of the uncolored edges, while ensuring that we satisfy this invariant at all times. We do this by using the decomposition system maintained by our algorithm: we take an uncolored edge $f = (u,v)$ such that $u \prec_{\mathcal L(f)} v$ 
and assign it a color $c$ that is \textit{not} assigned to any of the edges in $N_{\mathcal L(f)}^+(u)$ or $N(v)$. If there is an edge $f'$ adjacent to $f$ that is also colored with $c$, we uncolor this edge. We repeat this process iteratively until all edges are colored. We can show that: (1) there are at most $\deg(v) + O(\tilde \alpha_{\mathcal L(f)})$ many edges in $N_{\mathcal L(f)}^+(u) \cup N(v)$, and hence we can find such a $c$ in the palette $[\deg(v) + O(\tilde \alpha_{\mathcal L(f)})]$, and (2) if there is such an edge $f'$ adjacent to $f$ that is also colored with $c$, then either $\ell_{\mathcal L(f')}(f') < \ell_{\mathcal L(f)}(f)$ or $\mathcal L(f') < \mathcal L(f)$, allowing us to carry out a potential function argument that shows that the process terminates with all edges colored after $\tilde O(1)$ iterations on average, giving us an amortized recourse bound.

\medskip
\noindent \textbf{Formal Description.} The following pseudo-code gives a precise formulation of our algorithm.

\begin{algorithm}
    \SetAlgoLined
    \DontPrintSemicolon
    \KwIn{An empty graph $G=(V, \varnothing)$}
    Create a $(\beta, (2(1 + \epsilon)\tilde \alpha_j)_{j \in [L]}, L)$-decomposition system $(Z_{i,j})_{i,j \in [L]}$ of $G$\;
    \caption{\textsc{Initialize}$(G)$}
    \label{alg:init 2}
\end{algorithm}
\begin{algorithm}
    \SetAlgoLined
    \DontPrintSemicolon
    \KwIn{An edge $e$ to be inserted into $G$}
    Insert the edge $e$ into $G$\;
    $S \leftarrow \textsc{UpdateDecompositions}(e)$\;
    $\chi(f) \leftarrow \perp$ for all $f \in S$\; 
    \textsc{ExtendColoring}$(S)$\;
    \caption{\textsc{Insert}$(e)$}
    \label{alg:insertion 2}
\end{algorithm}
\begin{algorithm}
    \SetAlgoLined
    \DontPrintSemicolon
    \KwIn{An edge $e$ to be deleted from $G$}
    Delete the edge $e$ from $G$\;
    $S \leftarrow \varnothing$\;
    \For{$v \in e$}{
        $S \leftarrow S \cup \{f = (u,v) \in N(v) \, | \, u \prec_{\mathcal L(f)} v \textnormal{ and } \chi(f) > \deg(v) + 2\beta (1 + \epsilon) \tilde \alpha_{\mathcal L(f)} \}$\;
    }
    $S \leftarrow S \cup \textsc{UpdateDecompositions}(e)$\;
    $\chi(f) \leftarrow \perp$ for all $f \in S$\; 
    \textsc{ExtendColoring}$(S)$\;
    \caption{\textsc{Delete}$(e)$}
    \label{alg:deletion 2}
\end{algorithm}
\begin{algorithm}
    \SetAlgoLined
    \DontPrintSemicolon
    \KwIn{The edge $e$ that has been inserted/deleted from $G$}
    Update the decomposition system $(Z_{i,j})_{i,j}$\;
    Let $S' \subseteq E$ be the set of all edges whose level changes in some layer\;
    \Return{$S'$}
    \caption{\textsc{UpdateDecompositions}$(e)$}
    \label{alg:update 2}
\end{algorithm}
\begin{algorithm}
    \SetAlgoLined
    \DontPrintSemicolon
    \KwIn{A set $S$ of uncolored edges}
    \While{$S \neq \varnothing$} {
        Let $f = (u,v)$ be any edge in $S$ where $u \prec_{\mathcal L(f)} v$\;
        $C^+_u \leftarrow \chi (N_{\mathcal L(f)}^+(u))$\;
        $C_v \leftarrow \chi (N(v))$\;
        Let $c$ be any element in $[|C^+_u| + |C_v| + 1] \setminus (C^+_u \cup C_v)$\;
        \If{$c \in \chi (N(u))$} {
            Let $f'$ be the edge in $N(u)$ with $\chi(f') = c$\;
            $\chi(f') \leftarrow \perp$ and $S \leftarrow S \cup \{ f' \}$\;
        }
        $\chi(f) \leftarrow c$ and $S \leftarrow S \setminus \{ f \}$\;
    }
    \caption{\textsc{ExtendColoring}$(S)$}
    \label{alg:extend 2}
\end{algorithm}

\noindent
The following theorem, which we prove next, summarizes the behavior of our full dynamic algorithm.

\begin{theorem}\label{thm:main}
     The dynamic algorithm is deterministic and, given a sequence of updates for a dynamic graph $G$, maintains a $(\Delta + (4 + \epsilon)\alpha)$-edge coloring, where $\Delta$ and $\alpha$ are the dynamically changing maximum degree and arboricity of $G$, respectively. The algorithm has $O(\log^4n/\epsilon^5)$ amortized recourse per update and $O(\log^5n \log \Delta/\epsilon^6)$ amortized update time.\footnote{Whenever the term $\Delta$ appears in an amortized bound, this should be interpreted as being an upper bound on the maximum degree across the whole sequence of updates. In the introduction, we replaced the $\log \Delta$ term with $\log n$ for simplicity.}
\end{theorem}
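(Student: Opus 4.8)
The plan is to mirror the structure of the warmup analysis (Lemma~\ref{lem:dynamic anal 1} and its corollaries), but now with the decomposition system playing the role of the single decomposition, and with the local invariant ``$\chi(e) \le \deg(v) + 2\beta(1+\epsilon)\tilde\alpha_{\mathcal L(e)}$ where $u \prec_{\mathcal L(e)} v$'' replacing the global palette bound. First I would establish the coloring-quality claim: prove a helper lemma (call it Lemma~\ref{lem: alphaL small}) stating that $\tilde\alpha_{\mathcal L(e)} = O(\alpha)$, where $\alpha$ is the \emph{current} arboricity. This follows from Lemma~\ref{lem: Z empty}: if $\tilde\alpha_j \ge 2(1+\epsilon)\alpha$ already, i.e.\ $d_j = 2(1+\epsilon)\tilde\alpha_j \ge 2(1+\epsilon)\alpha$ is large enough, then $Z_{L,j} = \varnothing$, so every node has level $< L$ in layer $j$, hence $\mathcal L(u) \le j$ for the smallest such $j$; since the $\tilde\alpha_j$ are powers of $(1+\epsilon)$, the smallest such index gives $\tilde\alpha_{\mathcal L(u)} = O(\alpha)$, and taking minima over endpoints gives the same for edges. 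Combined with $\deg(v) \le \Delta$, the invariant forces every color used to lie in $[\Delta + O(\alpha)]$, which yields the $(\Delta + (4+\epsilon)\alpha)$ bound after plugging $\beta = 2+3\epsilon$.

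Next I would prove correctness and termination of \textsc{ExtendColoring}$(S)$ together with a bound on the number of recoloring steps. The key is a potential argument analogous to part~(1) of Lemma~\ref{lem:dynamic anal 1}: assign to each uncolored edge $f = (u,v)$ with $u \prec_{\mathcal L(f)} v$ the potential $\Phi(f) = (\mathcal L(f), \ell_{\mathcal L(f)}(u))$ ordered lexicographically. When \textsc{ExtendColoring} processes $f$, it picks $c \notin C_u^+ \cup C_v$; if some $f' \in N(u)$ has $\chi(f') = c$, then since $c \notin \chi(N_{\mathcal L(f)}^+(u))$ the edge $f'$ is \emph{not} in $N_{\mathcal L(f)}^+(u)$, so the non-$u$ endpoint of $f'$ sits at a level strictly below $\ell_{\mathcal L(f)}(u)$ in layer $\mathcal L(f)$, or $f'$ belongs to a strictly smaller layer; either way $\Phi(f') \prec \Phi(f)$ lexicographically. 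Hence each ``child'' edge has strictly smaller potential, and since potentials are drawn from $[L]\times[L]$, each root edge spawns a recoloring chain of length $O(L^2)$. The palette-size check mirrors the warmup: $|C_u^+| + |C_v| + 1 \le \deg_{\mathcal L(f)}^+(u) + \deg(v) - 1 \le 2\beta(1+\epsilon)\tilde\alpha_{\mathcal L(f)} + \deg(v)$ (using $\deg_{\mathcal L(f)}^+(u) \le \beta d_{\mathcal L(f)}$ since $u$ is not at the top level of its layer), so the color assigned respects the invariant; termination gives a valid coloring by the same induction.

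For the recourse bound I would account for the edges that get uncolored per update. There are three sources: (a) $O(1)$ edges directly touching the inserted/deleted edge that violate the invariant because a degree changed; (b) the set $S'$ returned by \textsc{UpdateDecompositions}, whose size is $O(KL/\epsilon) = O(L^2/\epsilon)$ amortized by Proposition~\ref{prop:dynamic-decomp-sys} with $K = L$; and (c) the cascade, which multiplies the number of uncolored edges by $O(L^2)$. So the amortized recourse is $O(L^2/\epsilon) \cdot O(L^2) = O(L^4/\epsilon) = O(\log^4 n/\epsilon^5)$. The main subtlety I expect — and the step I'd flag as the real obstacle — is \textbf{source (b) combined with a dependency issue}: the local invariant of an edge $e$ depends on $\mathcal L(e)$, hence on the decomposition system, so when a \emph{single} edge changes its level/orientation in some layer it may itself violate the invariant even though no endpoint's degree changed; one must argue that exactly the edges in $S'$ (plus the $O(1)$ degree-driven ones) can become invalid, and that re-coloring them via \textsc{ExtendColoring} does not in turn invalidate still more edges outside the cascade. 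This requires checking that assigning $\chi(f) \le \deg(v) + 2\beta(1+\epsilon)\tilde\alpha_{\mathcal L(f)}$ never pushes a \emph{neighbor's} invariant out of range — which holds because a neighbor $g = (v,w)$ with $v \prec_{\mathcal L(g)} w$ only constrains colors on edges oriented \emph{out of} $v$ in layer $\mathcal L(g)$, and the argument is symmetric. Finally, the $O(\log^5 n \log\Delta/\epsilon^6)$ update time follows by charging each recoloring step $O(\log n \log \Delta)$ for the binary-search color-selection data structure of Appendix~\ref{sec:data structs}, multiplied by the amortized recourse, which I would defer to the appendix exactly as the warmup does in Lemma~\ref{lem:warmup dynamic update time}.
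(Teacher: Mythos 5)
Your plan mirrors the paper's proof almost step for step: the same use of Lemma~\ref{lem: Z empty} to establish $\tilde\alpha_{\mathcal L(e)} < (1+\epsilon)\alpha$, the same local invariant $\chi(e) \le \deg(v) + 2\beta(1+\epsilon)\tilde\alpha_{\mathcal L(e)}$ (the paper names this a ``good'' coloring), and the same two-level potential argument for the cascade length --- the paper linearizes your lexicographic pair $(\mathcal L(f), \ell_{\mathcal L(f)}(f))$ into the scalar $\Psi(f) = L(\mathcal L(f)-1) + \ell_{\mathcal L(f)}(f)$, but these are equivalent and both yield an $O(L^2)$ bound per uncolored edge.

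The one local error is your claim that source (a) contributes only $O(1)$ uncolored edges per deletion. In fact deleting $(u,v)$ can render up to $2L$ edges bad: the paper shows $|\Gamma_w| \le L$ for each endpoint $w$, because for each layer index $j \in [L]$ the distinctness of colors incident on $w$ forces at most one edge $f$ with $\mathcal L(f)=j$ to cross its threshold when $\deg(w)$ drops by one. Since the dominant term in the recourse computation is the $O(L^2/\epsilon)$ coming from the decomposition system, your final bound $O(L^4/\epsilon) = O(\log^4 n/\epsilon^5)$ is unaffected, but the $O(1)$ claim would not survive a careful write-up. Separately, the ``real obstacle'' you flag at the end --- that recoloring might invalidate the invariant on edges outside the cascade --- is actually a non-issue, and your explanation of why it holds is more complicated than needed: the invariant on an edge $g = (v,w)$ depends only on $\chi(g)$, $\deg(w)$, and $\mathcal L(g)$, none of which change when a \emph{different} edge receives a color. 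The only side effect of coloring $f$ is a possible palette conflict, which \textsc{ExtendColoring} explicitly handles by uncoloring the conflicting edge and recursing, and which is exactly what the potential argument accounts for.
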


\noindent We split the proof of Theorem~\ref{thm:main} into two part. In Section~\ref{sec:anal}, we show that our dynamic algorithm maintains a $(\Delta + 2\beta(1 + \epsilon)^2\alpha)$-edge coloring and has an amortized recourse of $O(\log^4 n /\epsilon^5)$.\footnote{Note that $2\beta(1 + \epsilon)^2\alpha = (4 + O(\epsilon))\alpha$.} In Appendix~\ref{sec:data structs}, we describe the data structures used by our algorithm, before showing how to use them in order to get $O(\log^5n \log \Delta/\epsilon^6)$ amortized update time.

\subsection{Analysis of the Dynamic Algorithm}\label{sec:anal}

For the rest of Section~\ref{sec:anal}, fix a dynamic graph $G = (V,E)$,
and a $(\beta, (2(1 + \epsilon)\tilde \alpha_j)_j, L)$-decomposition system $\mathcal Z = (Z_{i,j})_{i,j}$ of $G$.
Recall that $\epsilon$ is a fixed constant with $0 < \epsilon < 1$,
and that 
$\beta = 2 + 3\epsilon$, $L = 2 + \lceil \log_{1 + \epsilon}n \rceil$.

We begin with the following simple observations.

\begin{lemma}\label{lem: L <= K}
    For all nodes $u \in V$, we have that $\mathcal L (u) \leq j^\star$, where $j^\star \in [L]$ is the unique value such that $\alpha \leq \tilde \alpha_{j^\star} < (1 + \epsilon) \alpha$.
\end{lemma}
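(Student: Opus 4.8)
The plan is to unpack the two relevant definitions — the layer $\mathcal L(u) = \min\{j \in [K] : \ell_j(u) < L\}$ of a node, and the claimed threshold $j^\star$ — and show that $\ell_{j^\star}(u) < L$, i.e. that $u \notin Z_{L,j^\star}$; this immediately gives $\mathcal L(u) \le j^\star$ by minimality. The key point is that the $j^\star$-th layer of the decomposition system is, by definition, a $(\beta, d_{j^\star}, L)$-decomposition of $G$ with $d_{j^\star} = 2(1+\epsilon)\tilde\alpha_{j^\star}$, and since $\tilde\alpha_{j^\star} \ge \alpha$ we have $d_{j^\star} \ge 2(1+\epsilon)\alpha$, which is exactly the hypothesis needed to invoke Lemma~\ref{lem: Z empty}.

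Concretely, first I would note that $j^\star$ is well-defined: the values $\tilde\alpha_j = (1+\epsilon)^{j-1}$ range over $1 = \tilde\alpha_1 \le \alpha \le n < \tilde\alpha_L$ and increase by a factor $(1+\epsilon)$ at each step, so there is a unique index $j^\star \in [L]$ with $\tilde\alpha_{j^\star - 1} < \alpha \le \tilde\alpha_{j^\star}$ (with the convention that the left inequality is vacuous if $j^\star = 1$), and this same $j^\star$ satisfies $\alpha \le \tilde\alpha_{j^\star} < (1+\epsilon)\alpha$. Second, since $(Z_{i,j^\star})_{i \in [L]}$ is a $(\beta, 2(1+\epsilon)\tilde\alpha_{j^\star}, L)$-decomposition of $G$ and $2(1+\epsilon)\tilde\alpha_{j^\star} \ge 2(1+\epsilon)\alpha$, the graph $G$ (with its actual arboricity $\alpha$) and the parameters $\beta \ge 1$, $0 < \epsilon < 1$, $d := 2(1+\epsilon)\tilde\alpha_{j^\star} \ge 2(1+\epsilon)\alpha$, $L = 2 + \lceil \log_{1+\epsilon} n\rceil$ meet all the hypotheses of Lemma~\ref{lem: Z empty}, so $Z_{L, j^\star} = \varnothing$. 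Third, therefore every node $u \in V$ satisfies $u \notin Z_{L,j^\star}$, which by the definition $V_L = Z_L$ (applied to the $j^\star$-th layer) means $\ell_{j^\star}(u) < L$. Hence $j^\star$ lies in the set $\{j : \ell_j(u) < L\}$ whose minimum is $\mathcal L(u)$, so $\mathcal L(u) \le j^\star$.

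I do not expect any real obstacle here — the statement is essentially a bookkeeping corollary of Lemma~\ref{lem: Z empty} applied to the correctly-chosen layer. The only mild subtlety is making sure the indexing is consistent: the decomposition system in the full algorithm has $K = L$ layers (the system is $(\beta, (2(1+\epsilon)\tilde\alpha_j)_{j\in[L]}, L)$), so $j^\star \in [L] = [K]$ is a legal layer index, and one should double-check that $d_{j^\star} \ge 2(1+\epsilon)\alpha$ uses only $\tilde\alpha_{j^\star} \ge \alpha$ and not the upper bound $\tilde\alpha_{j^\star} < (1+\epsilon)\alpha$ — the upper bound is what will be used later (in Lemma~\ref{lem: alphaL small}, as referenced) to argue the palette size is $\Delta + O(\alpha)$, but it plays no role in this lemma. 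I would also remark that the statement is stated for the current arboricity $\alpha$ of $G$, so it automatically adapts as $\alpha$ changes, which is the whole point of the decomposition system.
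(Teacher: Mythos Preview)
Your proposal is correct and follows exactly the same approach as the paper: apply Lemma~\ref{lem: Z empty} to the $j^\star$-th layer (whose parameter $d = 2(1+\epsilon)\tilde\alpha_{j^\star} \ge 2(1+\epsilon)\alpha$) to conclude $Z_{L,j^\star} = \varnothing$, and then read off $\mathcal L(u) \le j^\star$ from the definition of $\mathcal L$. The paper's proof is just the two-sentence version of what you wrote; your additional checks on the well-definedness of $j^\star$ and the indexing are all correct but were left implicit there.
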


\begin{proof}
    By Lemma \ref{lem: Z empty}, we know that $Z_{L,j^\star} = \varnothing$. Hence, $\mathcal L(u) \leq j^\star$ for every node $u \in V$.
\end{proof}


\begin{corollary}\label{lem: alphaL small}
    For all edges $e \in E$, we have that $\tilde \alpha_{\mathcal L(e)} < (1 + \epsilon) \alpha$.
\end{corollary}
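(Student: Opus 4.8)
The plan is to derive this directly from Lemma~\ref{lem: L <= K} together with the monotonicity of the sequence $(\tilde\alpha_j)_j$. Fix an edge $e = (u,v) \in E$. By definition, $\mathcal{L}(e) = \min\{\mathcal{L}(u), \mathcal{L}(v)\}$, so in particular $\mathcal{L}(e) \leq \mathcal{L}(u)$. Lemma~\ref{lem: L <= K} gives $\mathcal{L}(u) \leq j^\star$, where $j^\star \in [L]$ is the unique index with $\alpha \leq \tilde\alpha_{j^\star} < (1+\epsilon)\alpha$; chaining these inequalities yields $\mathcal{L}(e) \leq j^\star$.

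Next I would use that $\tilde\alpha_j = (1+\epsilon)^{j-1}$ is strictly increasing in $j$ (since $\epsilon > 0$). Combined with $1 \leq \mathcal{L}(e) \leq j^\star$, this gives $\tilde\alpha_{\mathcal{L}(e)} \leq \tilde\alpha_{j^\star}$. Finally, by the choice of $j^\star$ we have $\tilde\alpha_{j^\star} < (1+\epsilon)\alpha$, and hence $\tilde\alpha_{\mathcal{L}(e)} < (1+\epsilon)\alpha$, as claimed.

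I do not expect any real obstacle here: the statement is an immediate corollary of Lemma~\ref{lem: L <= K}, and the only thing to be slightly careful about is confirming that $\mathcal{L}(e)$ (equivalently $\mathcal{L}(u)$ and $\mathcal{L}(v)$) is well-defined, i.e.\ that the set $\{j \in [L] : \ell_j(u) < L\}$ is nonempty — but this too follows from Lemma~\ref{lem: L <= K}, since $Z_{L,j^\star} = \varnothing$ forces $\ell_{j^\star}(u) < L$.
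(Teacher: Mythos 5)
Your proof is correct and follows the same route the paper intends: the corollary is an immediate consequence of Lemma~\ref{lem: L <= K} combined with the monotonicity of $\tilde\alpha_j = (1+\epsilon)^{j-1}$. The remark on well-definedness of $\mathcal{L}(\cdot)$ is a nice extra check but not a departure from the paper's argument.
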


\noindent We now define the notation of a \emph{good} edge coloring. In such an edge coloring, the colors satisfy certain locality constraints, which makes it easier to maintain dynamically.

\begin{definition}
     Given an edge coloring $\chi$ of the graph $G$, we say that $\chi$ is a \emph{good} edge coloring of $G$ with respect to the decomposition system $\mathcal Z$ if and only if for every edge $e = (u,v) \in E$ such that $\chi(e) \neq \perp$ and $u \prec_{\mathcal L(e)} v$, we have that $\chi(e) \leq \deg(v) + 2\beta(1 + \epsilon)\tilde \alpha_{\mathcal L(e)}$.
\end{definition}

\noindent The following lemma shows that our algorithm can be used to maintain a good edge coloring.

\begin{lemma}\label{lem:full dynamic anal 1}
    Let $\chi$ be a good edge coloring of the graph $G$ w.r.t. $\mathcal Z$ and let $S \subseteq E$ be the set of edges that are left uncolored by $\chi$. Then running \textsc{ExtendColoring}$(S)$:
    \begin{enumerate}
        \item changes the colors of at most $L^2|S|$ edges in $G$, and
        \item turns $\chi$ into a good edge coloring with no uncolored edges.
    \end{enumerate}
\end{lemma}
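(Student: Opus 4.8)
\textbf{Proof plan for Lemma~\ref{lem:full dynamic anal 1}.}

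The plan is to mirror the structure of the warmup analysis in Lemma~\ref{lem:dynamic anal 1}, but replacing the single level parameter with the lexicographic pair $(\mathcal{L}(f), \ell_{\mathcal{L}(f)}(f))$, since in the decomposition-system setting the recursion can decrease \emph{either} the layer or the level-within-layer of the currently uncolored edge. First I would prove part~(2), the correctness/invariant statement, since it is the conceptual core. Consider an arbitrary iteration of the while loop, with $f = (u,v)$, $u \prec_{\mathcal{L}(f)} v$. Set $j := \mathcal{L}(f)$. Then $|C^+_u| \le \deg^+_j(u) \le \deg_{\ell_j(u), j}(u)$; since $u \prec_j v$ and $u$ is not at level $L$ in layer $j$ (otherwise $v$ would have to be too, contradicting $\mathcal{L}(f) = j$ being well-defined via some node below level $L$), we have $\deg^+_j(u) \le \beta \cdot 2(1+\epsilon)\tilde\alpha_j = 2\beta(1+\epsilon)\tilde\alpha_{\mathcal{L}(f)}$. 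Actually I should be slightly careful: when $u$'s out-degree in layer $j$ is bounded I want $|C^+_u| \le 2\beta(1+\epsilon)\tilde\alpha_{\mathcal{L}(f)}$ minus whatever accounts for $f$ itself; and $|C_v| \le \deg(v) - 1$ since $f \in N(v)$ is uncolored. Hence $|C^+_u| + |C_v| + 1 \le \deg(v) + 2\beta(1+\epsilon)\tilde\alpha_{\mathcal{L}(f)}$, so the color $c$ chosen lies in $[\deg(v) + 2\beta(1+\epsilon)\tilde\alpha_{\mathcal{L}(f)}]$, which is exactly the goodness constraint for the edge $f$ oriented $u \prec_{\mathcal{L}(f)} v$. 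Thus every edge colored by the procedure satisfies the goodness constraint. Combined with part~(1) (termination), an induction on iterations shows that when the loop ends, $\chi$ has no uncolored edge, is a proper coloring (the loop can only terminate when no conflict remains, because any conflicting $f'$ gets thrown back into $S$), and every edge meets the goodness bound — i.e.\ $\chi$ is a good edge coloring with no uncolored edges. I also need to observe that edges \emph{untouched} by the procedure keep their goodness status because their degrees and layers/levels do not change (the decomposition system is fixed throughout the call).

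For part~(1), the recourse bound, I would argue as follows. Whenever we uncolor an edge $f'$ during the iteration handling $f = (u,v)$ with $u \prec_{\mathcal{L}(f)} v$: $f' \in N(u)$ and $\chi(f') = c \notin \chi(N^+_{\mathcal{L}(f)}(u))$, so $f' \notin N^+_{\mathcal{L}(f)}(u)$. I would show this forces the pair $(\mathcal{L}(f'), \ell_{\mathcal{L}(f')}(f'))$ to be lexicographically strictly smaller than $(\mathcal{L}(f), \ell_{\mathcal{L}(f)}(f))$: if $\mathcal{L}(f') = \mathcal{L}(f) = j$, then since $f' \in N(u) \setminus N^+_j(u)$, the endpoint of $f'$ other than $u$ sits at level strictly below $\ell_j(u)$ in layer $j$, forcing $\ell_j(f') < \ell_j(u) \le \ell_j(f)$; while if $\mathcal{L}(f') \neq \mathcal{L}(f)$, I would need to rule out $\mathcal{L}(f') > \mathcal{L}(f)$ — but note that at the moment $f'$ is colored with $c$ during some \emph{earlier} iteration handling an edge $g$, we had $\chi(g') = c$ constraints enforcing it lies in the palette $[\deg(\cdot) + 2\beta(1+\epsilon)\tilde\alpha_{\mathcal{L}(g)}]$... here I should instead just directly track, for each \emph{unit} of uncoloring, a decrease in a bounded potential. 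The cleanest route: define a potential on $S$ as $\Phi = \sum_{f \in S} \phi(f)$ where $\phi(f) = (L+1)\cdot \mathcal{L}(f) + \ell_{\mathcal{L}(f)}(f) \in [1, (L+1)L + L]$, and show that each iteration removes one edge $f$ from $S$ and adds at most one edge $f'$ with $\phi(f') < \phi(f)$; since a chain of strict decreases from a single starting edge has length $\le L^2 + L \le $ (roughly) $L^2$ and there are $|S|$ starting edges, the total number of colorings is $O(L^2 |S|)$, and the number of \emph{color changes} is at most that. The stated bound $L^2 |S|$ then follows (possibly after adjusting constants into the $O$, or I would just prove $\le L^2|S|$ by being a bit more generous in the definition of $\phi$, e.g.\ bounding both $\mathcal{L}(f)$ and $\ell_{\mathcal{L}(f)}(f)$ by $L$ and using $\phi(f) = L\cdot(\mathcal{L}(f)-1) + \ell_{\mathcal{L}(f)}(f) \le L^2$).

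The main obstacle is establishing the strict lexicographic decrease when the conflicting edge $f'$ changes \emph{layer} rather than level-within-layer — i.e.\ ruling out $\mathcal{L}(f') > \mathcal{L}(f)$. The subtlety is that $\mathcal{L}(f')$ is a global quantity depending on all $L$ layers simultaneously, whereas the local argument ($f' \notin N^+_{\mathcal{L}(f)}(u)$) only directly controls layer $\mathcal{L}(f)$. I expect the resolution is: when $f'$ was originally given color $c$, goodness of $\chi$ (or the just-proved invariant for colored edges) guarantees $c \le \deg(w) + 2\beta(1+\epsilon)\tilde\alpha_{\mathcal{L}(f')}$ for the appropriate endpoint $w$ of $f'$; but the current iteration assigns $c$ to $f$ and requires $c$ to survive exclusion against $N^+_{\mathcal{L}(f)}(u)$, and $u$'s palette-exclusion structure in layer $\mathcal{L}(f)$ forces $f'$ to be "below" $u$ in layer $\mathcal{L}(f)$, which (by the nesting $Z_{i,\mathcal{L}(f)} \supseteq$ relations and the definition of $\mathcal{L}$) pins down $\mathcal{L}(f') \le \mathcal{L}(f)$, with equality only in the level-within-layer case handled above. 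I would make this precise by carefully unpacking the definitions $\mathcal{L}(u) = \min\{j : \ell_j(u) < L\}$ and $\mathcal{L}(f) = \min\{\mathcal{L}(u), \mathcal{L}(v)\}$ together with the orientation rule $\prec_j$. Everything else is routine bookkeeping paralleling the warmup proof.
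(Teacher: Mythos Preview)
Your plan is correct and matches the paper's proof almost exactly: the paper uses precisely the potential $\Psi(f) = L(\mathcal{L}(f)-1) + \ell_{\mathcal{L}(f)}(f) \in [1, L^2]$ and argues both parts just as you outline. The step you flag as ``the main obstacle'' is in fact a one-liner requiring no detour through the goodness of $f'$: with $j = \mathcal{L}(f)$ and $f' = (u,w)$, from $f' \notin N^+_j(u)$ you get $\ell_j(w) < \ell_j(u) \le L$, hence $\ell_j(w) < L$, so directly from the definition $\mathcal{L}(w) = \min\{j' : \ell_{j'}(w) < L\} \le j$, whence $\mathcal{L}(f') \le \mathcal{L}(w) \le j = \mathcal{L}(f)$.
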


\begin{proof}
    We begin by proving (1). Given some edge $f$, define the potential of $f$ by 
    $$\Psi(f) = L (\mathcal L(f) - 1) + \ell_{\mathcal L(f)}(f).$$
    Given the set of edges $S$, define the potential of $S$ as $\Psi(S) = \sum_{f \in S} \Psi(f)$. By Lemma~\ref{lem: L <= K}, we have that, for any edge $f$, $1 \leq \Psi(f) = L (\mathcal L(f) - 1) + \ell_{\mathcal L(f)}(f) \leq L (L - 1) + L = L^2$. Hence, $|S| \leq \Psi(S) \leq L^2|S|$. During each iteration of the while loop in Algorithm~\ref{alg:extend 2}, exactly one edge receives a new color (and at most one edge becomes uncolored). We now show that during each iteration of the loop, $\Psi(S)$ drops by at least one, implying that we have at most $L^2|S|$ iterations in total, changing the colors of at most $L^2|S|$ many edges. Let $f$ be the edge in $S$ that we are coloring during some iteration of the loop and let $c$ be the color that it receives. During the iteration, we remove $f$ from $S$; furthermore, if there exists some edge $f'$ colored with $c$ that shares an endpoint with $f$, we uncolor $f'$ and place it in $S$. If there is no such edge $f'$, then $\Psi(S)$ drops by at least $1$ since we remove $f$ from $S$ and $\Psi(f) \geq 1$. Suppose that there is such an edge $f'$. We now argue that $\Psi(f') < \Psi(f)$. We first note that one of the endpoints of $f'$ is not contained in $Z_{i,j}$ where $i = \ell_{\mathcal L(f)}(u)$ and $j = \mathcal L(f)$. This implies that $\ell_{\mathcal L(f)}(f') < \ell_{\mathcal L(f)}(f)$, so $\mathcal L(f') \leq \mathcal L(f)$. Hence, if $\mathcal L(f) = \mathcal L(f')$, it follows that $\Psi(f') < \Psi(f)$. Otherwise, $\mathcal L(f') < \mathcal L(f)$, and we have that
    $$\Psi(f) - \Psi(f') = L(\mathcal L(f) - \mathcal L(f')) + \ell_{\mathcal L(f)}(f) - \ell_{\mathcal L(f')}(f') \geq L + (1 - L) \geq 1.$$
    In either case, $\Psi(S)$ drops by at least $1$. We now prove (2). Let $f = (u,v)$ be the edge in $S$ that we are coloring during some iteration of the while loop such that $u \prec_{\mathcal L(f)} v$.
    We need to show that the color $c$ picked by the algorithm satisfies $c \leq \deg(v) + 2\beta(1 + \epsilon)\tilde \alpha_{\mathcal L(f)}$. It will then follow by induction that the coloring produced by calling \textsc{ExtendColoring}$(S)$ is good given that we start with a good coloring. We first note that $|C_v| \leq \deg(v) - 1$. Now note that $|C^+_u| \leq \deg_{\mathcal L(f)}^+(u) - 1$. Since $\deg_{\mathcal L(f)}^+(u) \leq 2 \beta (1 + \epsilon)\tilde \alpha_{\mathcal L(f)}$, we get the desired bound on $c$. Finally, note that at the start of each iteration, the uncolored edges correspond to exactly the edges in $S$. Since the algorithm terminates if and only if $S = \varnothing$ and we know that the algorithm terminates after at most $L^2|S|$ many iterations, it follows that the resulting coloring has no uncolored edges.
\end{proof}

\begin{lemma}
    The dynamic algorithm maintains a $(\Delta + 2\beta(1 + \epsilon)^2\alpha)$-edge coloring of the graph.
\end{lemma}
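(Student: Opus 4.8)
The plan is to prove this by induction on the sequence of updates, mirroring the structure of the corresponding lemma in the warmup analysis, but now using the machinery of good colorings with respect to the decomposition system $\mathcal{Z}$. Let $\lambda := \Delta + 2\beta(1+\epsilon)^2\alpha$, where $\Delta$ and $\alpha$ are the current maximum degree and arboricity. The base case is immediate: the empty graph has the empty coloring, which is trivially a good $\lambda$-edge coloring. For the inductive step, I assume that after the $t$-th update we maintain a good (and hence proper) edge coloring with no uncolored edges, and I show this persists after update $t+1$, whether it is an insertion or a deletion.

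First I would argue that a good edge coloring of $G$ is in fact a proper $\lambda$-edge coloring. Properness is part of the definition of an edge coloring ($\chi$ never assigns the same color to adjacent edges once it has no uncolored edges, which is guaranteed by \textsc{ExtendColoring} — it only terminates when $S=\varnothing$, and at that point no conflicts remain). For the color bound: for every edge $e=(u,v)$ with $u \prec_{\mathcal{L}(e)} v$, goodness gives $\chi(e) \le \deg(v) + 2\beta(1+\epsilon)\tilde\alpha_{\mathcal{L}(e)}$. By Corollary~\ref{lem: alphaL small}, $\tilde\alpha_{\mathcal{L}(e)} < (1+\epsilon)\alpha$, so $\chi(e) \le \deg(v) + 2\beta(1+\epsilon)^2\alpha \le \Delta + 2\beta(1+\epsilon)^2\alpha = \lambda$. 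Hence any good coloring with no uncolored edges uses colors only from $[\lambda]$, as required.

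It remains to check that each update preserves the invariant ``good coloring with no uncolored edges''. Consider an insertion of $e$ (Algorithm~\ref{alg:insertion 2}). Before the update $\chi$ was good; inserting $e$ leaves $e$ uncolored but does not change the color of any existing edge. However, inserting $e$ may increase $\deg(v)$ for the endpoints of $e$ — but increasing a degree only relaxes the constraint $\chi(f) \le \deg(v) + \cdots$, so no previously good edge becomes bad on that account; and it may change the decomposition system, but \textsc{UpdateDecompositions} returns the set $S'$ of all edges whose level changed in some layer, and we uncolor exactly those, so every remaining colored edge still satisfies its (possibly now differently-indexed) constraint with respect to the updated $\mathcal{Z}$ — here I should spell out that an edge whose levels in all layers are unchanged keeps the same $\mathcal{L}(e)$, the same orientation $\prec_{\mathcal{L}(e)}$, and the same right-hand side. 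Thus after uncoloring, $\chi$ restricted to the colored edges is a good coloring and $S$ is exactly its set of uncolored edges, so Lemma~\ref{lem:full dynamic anal 1}(2) applies and \textsc{ExtendColoring}$(S)$ restores a good coloring with no uncolored edges. For a deletion (Algorithm~\ref{alg:deletion 2}), the subtlety is that deleting $e$ can \emph{decrease} $\deg(v)$, which \emph{tightens} the constraint, so some edges $f=(u,v)$ with $u\prec_{\mathcal{L}(f)} v$ and $\chi(f) > \deg(v) + 2\beta(1+\epsilon)\tilde\alpha_{\mathcal{L}(f)}$ may now be bad — but these are precisely the edges the algorithm collects into $S$ in the \texttt{for} loop, so after uncoloring $S \cup \textsc{UpdateDecompositions}(e)$ the remaining coloring is again good, and Lemma~\ref{lem:full dynamic anal 1}(2) finishes the argument.

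The main obstacle I anticipate is the bookkeeping around which colored edges can ``go bad'' and confirming that the algorithm uncolors all of them: one must carefully enumerate the three ways the invariant for an edge $f=(u,v)$ can be disturbed — (a) the degree of the relevant endpoint $v$ decreases (handled by the \texttt{for} loop, only on deletions), (b) the orientation $\prec_{\mathcal{L}(f)}$ of $f$ or the layer $\mathcal{L}(f)$ changes, equivalently $f$ changes level in some layer (handled by \textsc{UpdateDecompositions}), and (c) nothing relevant changes, in which case $f$ stays good. Degree \emph{increases} and level changes of \emph{other} edges are harmless. Once this case analysis is pinned down, the lemma follows by invoking Lemma~\ref{lem:full dynamic anal 1}(2) and the observation in the second paragraph that goodness plus no-uncolored-edges implies a proper $\lambda = (\Delta + (4+O(\epsilon))\alpha)$-edge coloring.
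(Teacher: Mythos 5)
Your proof is correct and follows essentially the same approach as the paper's. Both identify the same two ways a colored edge can become bad — a drop in the head's degree on a deletion, captured by the \texttt{for} loop as $\Gamma_u\cup\Gamma_v$, or a level change in some layer after updating the decomposition system, which subsumes any change to $\mathcal{L}(f)$ or to the orientation $\prec_{\mathcal{L}(f)}$ — and both conclude by invoking Lemma~\ref{lem:full dynamic anal 1}(2) together with Corollary~\ref{lem: alphaL small}.
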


\begin{proof}
    By showing that our algorithm maintains a good edge coloring, it follows by Corollary \ref{lem: alphaL small} that, for any edge $e \in E$, we have $\chi(e) \leq \Delta + 2\beta(1 + \epsilon) \tilde \alpha_{\mathcal L(e)} \leq \Delta + 2(1 + \epsilon)^2 \alpha$. We do this by showing that, after an update, the algorithm uncolors all of the edges $f=(u,v)$ in the graph that don't satisfy the condition $\chi(f) \leq \deg(v) + 2\beta(1 + \epsilon)\tilde \alpha_{\mathcal L(f)}$ for $u \prec_{\mathcal L(f)} v$
    in the updated decomposition system, places them in a set $S$, and calls Algorithm \ref{alg:extend 2} on the set $S$. By Lemma~\ref{lem:full dynamic anal 1}, it then follows that the algorithm maintains a good coloring of the entire graph.

    We refer to an edge $e=(u,v)$ as \emph{bad} if it does not satisfy the condition required by a good coloring, i.e. if $\chi(e) \neq \perp$ and $\chi(e) > \deg(v) + 2\beta(1 + \epsilon)\tilde \alpha_{\mathcal L(f)}$ where $u \prec_{\mathcal L(f)} v$. Suppose we have a good edge coloring of the entire graph and insert an edge $e$ into the graph. Since this cannot decrease the degrees of any nodes or change the levels of any edges (since we have not yet updated the decomposition system) this cannot cause any edges to become bad. On the other hand, if we delete an edge $e = (u,v)$, some of the edges incident to $u$ and $v$ might become bad since $\deg(u)$ and $\deg(v)$ decrease by $1$. Any such edges that become bad must be contained within the set $\Gamma_u \cup \Gamma_v$ where
    $$\Gamma_w = \{f = (w',w) \in N(w) \, | \, w' \prec_{\mathcal L(f)} w \textnormal{ and } \chi(f) > \deg(w) + 2\beta (1 + \epsilon) \tilde \alpha_{\mathcal L(f)} \}$$
    where the degrees are w.r.t. the state of the graph $G$ \textit{after} the deletion of $e$. If we uncolor all of the edges in $\Gamma_u \cup \Gamma_v$, we restore $\chi$ to being a good edge coloring. After updating the decomposition system, the levels of some edges might change in some layers. Any edge that does not change levels in any layer will \emph{not} become bad, since $\mathcal L(f)$ (and hence $\tilde \alpha_{\mathcal L(f)}$) and its orientation in $\prec_{\mathcal L(f)}$ do not change.
    However, an edge $f$ that changes levels in some layer might become bad if $\mathcal L(f)$ decreases (causing the value of $\tilde \alpha_{\mathcal L(f)}$ to decrease) or if its orientation with respect to $\prec_{\mathcal L(f)}$ changes.
    Hence, we uncolor all such edges.\footnote{Note that these are precisely the edges that contribute towards the recourse of the dynamic decomposition system.} This guarantees that there are no bad edges when we call \textsc{ExtendColoring}. Since we give \textsc{ExtendColoring} all of the edges that are uncolored, it follows that we maintain a good edge coloring of the entire graph.
\end{proof}

\begin{lemma}\label{lem:full algorithm recourse}
    The dynamic algorithm has $O(\log^4 n / \epsilon^5)$ amortized recourse per update.
\end{lemma}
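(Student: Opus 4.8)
The plan is to reduce the whole statement to a bound on $|S|$, the set of edges that the algorithm uncolors just before invoking \textsc{ExtendColoring}. By part~(1) of Lemma~\ref{lem:full dynamic anal 1}, running \textsc{ExtendColoring}$(S)$ recolors at most $L^2|S|$ edges, and the line ``$\chi(f)\leftarrow\perp$ for all $f\in S$'' accounts for at most $|S|$ further color changes (uncoloring the deleted edge $e$ itself contributes at most one more). So the recourse of a single update is $O(L^2|S|)$, and it suffices to show that $\sum_{t\le T}|S_t| = O(T\cdot L^2/\epsilon)$ for any update sequence of length $T$. By the pseudo-code, for an insertion of $e$ we have $S\subseteq S'\cup\{e\}$, and for a deletion of $e=(u,v)$ we have $S = (\Gamma_u\cup\Gamma_v)\cup S'$, where $\Gamma_u,\Gamma_v$ are the sets of edges that become \emph{bad} due to the degree drops at $u,v$ (as in the proof of the preceding lemma), and $S'$ is the output of \textsc{UpdateDecompositions}$(e)$, i.e.\ the set of edges that change level in some layer.

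The heart of the argument is the worst-case bound $|\Gamma_w|\le L$ for each endpoint $w\in e$; this is the step I expect to be the main obstacle, the rest being bookkeeping on top of Lemma~\ref{lem:full dynamic anal 1} and Proposition~\ref{prop:dynamic-decomp-sys}. Fix a layer $j\in[L]$ and consider an edge $f=(w',w)\in N(w)$ with $w'\prec_{\mathcal L(f)}w$ and $\mathcal L(f)=j$. Since the algorithm maintained a good coloring right before handling the update, $\chi(f)\le\deg_{\mathrm{old}}(w)+2\beta(1+\epsilon)\tilde\alpha_j$, where $\deg_{\mathrm{old}}(w)$ is the degree before the deletion. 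On the other hand, $f\in\Gamma_w$ requires $\chi(f)>\deg(w)+2\beta(1+\epsilon)\tilde\alpha_j=\deg_{\mathrm{old}}(w)-1+2\beta(1+\epsilon)\tilde\alpha_j$ (recall $\deg(w)$ in \textsc{Delete} is the post-deletion degree). Hence $\chi(f)$ is forced to be the unique integer in a half-open interval of the form $(a,a+1]$; since $\chi$ is injective on the edges incident to $w$, at most one edge $f$ with $\mathcal L(f)=j$ can lie in $\Gamma_w$. Summing over the $L$ possible values of $j$ gives $|\Gamma_w|\le L$, so $|\Gamma_u\cup\Gamma_v|\le 2L$.

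For $S'$ we apply Proposition~\ref{prop:dynamic-decomp-sys} with $K=L$: the decomposition system we maintain has amortized recourse $O(KL/\epsilon)=O(L^2/\epsilon)$, so $\sum_{t\le T}|S'_t|=O(T\cdot L^2/\epsilon)$. Combining with the previous paragraph, $\sum_{t\le T}|S_t| = O\!\big(T(L+L^2/\epsilon)\big)=O(T\cdot L^2/\epsilon)$, and therefore the total recourse is $O(L^2)\cdot O(T\cdot L^2/\epsilon)=O(T\cdot L^4/\epsilon)$, i.e.\ $O(L^4/\epsilon)$ amortized. Finally $L=2+\lceil\log_{1+\epsilon}n\rceil=O(\log n/\epsilon)$ since $\ln(1+\epsilon)=\Omega(\epsilon)$ for $\epsilon\in(0,1)$, which makes this $O(\log^4 n/\epsilon^5)$, as claimed.
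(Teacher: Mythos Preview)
Your proposal is correct and follows essentially the same approach as the paper: reduce to bounding $\sum_t|S^{(t)}|$ via Lemma~\ref{lem:full dynamic anal 1}, prove $|\Gamma_w|\le L$ by a per-layer pigeonhole on colors (exploiting goodness before the deletion and injectivity of $\chi$ on $N(w)$), and invoke Proposition~\ref{prop:dynamic-decomp-sys} with $K=L$ for the $S'$ contribution. The arithmetic and the final $L=O(\log n/\epsilon)$ substitution match the paper's.
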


\begin{proof}
    Suppose that our algorithm handles a sequence of $T$ updates (edge insertions or deletions) starting from an empty graph. Let $S^{(t)}$ denote the set of edges uncolored by our algorithm during the $t^{th}$ update before calling \textsc{ExtendColoring} on the set $S^{(t)}$. By Lemma~\ref{lem:full dynamic anal 1}, we know that at most $L^2|S^{(t)}| = O(|S^{(t)}|\log^2 n/\epsilon^2)$ many edges will change color during this update. By showing that $(1/T) \cdot \sum_{t \in [T]} |S^{(t)}|$ is $O(\log^2 n/ \epsilon^3)$, our claimed amortized recourse bound follows. Now fix some $t \in [T]$ and let $e = (u,v)$ be the edge being either inserted or deleted during this update. The edges uncolored by the algorithm while handling this update are either contained in the set $\Gamma_u \cup \Gamma_v$ (if the update is a deletion) or change levels in some layer after we update the decomposition system. There can only be at most $2L$ many edges of the former type. This is because, given some $j \in [L]$, there is at most one edge $f \in \Gamma_w$ with $\mathcal L(f) = j$ such that $\chi(f) > \deg(w) + 2\beta (1 + \epsilon) \tilde \alpha_{\mathcal L(f)}$.
    Otherwise, since all the edges incident on $w$ have distinct colors, there exists such an edge $f$ such that $\chi(f) > \deg(w) + 2\beta (1 + \epsilon) \tilde \alpha_{\mathcal L(f)} + 1$, which contradicts the fact that $\chi$ was a good coloring of the graph before the deletion of $e$. It follows that $|\Gamma_w \cap \mathcal L^{-1}(j)| \leq 1$, so
    $$|\Gamma_w | = \sum_{j \in [L]} |\Gamma_w \cap \mathcal L^{-1}(j)| \leq L$$
    and hence $|\Gamma_u \cup \Gamma_v| \leq 2L$. To bound the number of edges that changed levels in at least one of the decompositions in the decomposition system, recall (see Proposition~\ref{prop:dynamic-decomp-sys}) that the amortized recourse of the algorithm that maintains the decomposition system is $O(L^2/\epsilon)$. It follows that the amortized number of such edges is $O(L^2/ \epsilon)$. We have that
    $$\frac{1}{T} \cdot \sum_{t \in [T]} |S^{(t)}| =  O \left(\frac{L^2}{\epsilon}\right) + 2L = O \left(\frac{\log^2 n}{\epsilon^3} \right).$$ 
\end{proof}

\bibliography{bibl.bib}

\section*{Acknowledgements}

Shay Solomon is funded by the European Union (ERC, DynOpt, 101043159).
Views and opinions expressed are however those of the author(s) only and do not necessarily reflect those of the European Union or the European Research Council. Neither the European Union nor the granting authority can be held responsible for them.
Shay Solomon and Nadav Panski are supported by the Israel Science Foundation (ISF) grant No.1991/1.
Shay Solomon is also supported by a grant from the United States-Israel Binational Science Foundation (BSF), Jerusalem, Israel, and the United States National Science Foundation (NSF).

\appendix


\section{The Sequential Static Algorithm}
\label{sec:static}

We begin with a simple static algorithm that computes a $\Delta + O(\alpha)$ edge coloring of a graph $G=(V,E)$.
We define $\Phi_{G'}(u,v)$ to be the minimum degree of the endpoints of the edge $(u,v)$ in the graph $G'$,
namely $\Phi_{G'}(u,v) = \min \{\deg_{G'}(u),\deg_{G'}(v)\}$.
Algorithm~\ref{alg:static} gives the pseudo-code for our static algorithm,
\textsc{CleverGreedy}.

\begin{algorithm}
    \SetAlgoLined
    \DontPrintSemicolon
    \KwIn{A graph $G = (V,E)$}
    \KwOut{An edge coloring $\chi$ of $G$}
    $\triangleright$ \texttt{PHASE I}\;
    $G_{1} \leftarrow G$\;
    \For{$i = 1,\dots ,m$} {
        $e_{i} \leftarrow \arg \min_{e \in E(G_i)} \Phi_{G_i}(e)$\;
        $G_{i+1} \leftarrow G_{i} - e_{i}$\;
    }
    $\triangleright$ \texttt{PHASE II}\;
    $\chi(e) \leftarrow \perp$ for all $e \in E$\;
    \For{$i = m,\dots ,1$} {
        Let $\chi(e_i)$ be any element in $[\deg_{G_i}(u_i) + \deg_{G_i}(v_i) - 1] \setminus \chi(N_{G_{i+1}}(e_i))$ where $e_i = (u_i,v_i)$\;
    }
    \Return{$\chi$}
    \caption{\textsc{CleverGreedy}$(G)$}
    \label{alg:static}
\end{algorithm}

\noindent The following theorem, which we prove next, summarizes the behavior of Algorithm \textsc{CleverGreedy}.

\begin{theorem}
    Algorithm \textsc{CleverGreedy} (Algorithm~\ref{alg:static}) is deterministic and, given a graph $G = (V,E)$ with maximum degree $\Delta$ and arboricity $\alpha$ as input, returns a $(\Delta + 2\alpha - 1)$-edge coloring of $G$, and can be implemented to run in $O(m \log^2 \Delta)$ time.
\end{theorem}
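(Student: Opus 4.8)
The theorem has two parts: correctness (the algorithm outputs a proper $(\Delta + 2\alpha - 1)$-edge coloring) and running time ($O(m\log^2\Delta)$). I would establish correctness first, then discuss implementation.

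\textbf{Correctness.} The plan is to analyze \texttt{PHASE I} and \texttt{PHASE II} separately. For \texttt{PHASE I}, the key claim is that for every $i$, the edge $e_i = (u_i, v_i)$ chosen satisfies $\Phi_{G_i}(e_i) = \min\{\deg_{G_i}(u_i), \deg_{G_i}(v_i)\} \leq 2\alpha$. This follows from the standard fact used throughout the paper: any induced subgraph of $G$ has average degree at most $2\alpha$, so the graph $G_i$ — which is $G$ with some edges removed, and whose vertex set restricted to its non-isolated vertices induces a subgraph of $G$ — has a vertex of degree at most $2\alpha$; hence the minimum-$\Phi$ edge has an endpoint of degree at most $2\alpha$ in $G_i$. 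Actually one must be slightly careful: $G_i$ is $(V, E(G_i))$, not an induced subgraph, but the bound on average degree over the non-isolated vertices still gives a vertex of degree $\le 2\alpha$ among those incident to edges, and that vertex is an endpoint of some edge, so $\min_e \Phi_{G_i}(e) \le 2\alpha$. W.l.o.g. say $\deg_{G_i}(u_i) \le 2\alpha$. For \texttt{PHASE II}, when we color $e_i$ going from $i = m$ down to $1$, the edges already colored and adjacent to $e_i$ are exactly a subset of $N_{G_{i+1}}(e_i)$ (edges $e_j$ with $j > i$ adjacent to $e_i$ all lie in $G_{i+1}$, since $G_{i+1} = G \setminus \{e_1, \dots, e_i\}$). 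The number of edges in $N_{G_{i+1}}(e_i)$ incident on $u_i$ is at most $\deg_{G_{i+1}}(u_i) - 1 \le \deg_{G_i}(u_i) - 1 \le 2\alpha - 1$, and incident on $v_i$ is at most $\deg_{G_{i+1}}(v_i) - 1 \le \deg(v_i) - 1 \le \Delta - 1$. So $|N_{G_{i+1}}(e_i)| \le \Delta + 2\alpha - 2$, and a free color in $[\Delta + 2\alpha - 1]$ always exists. (The pseudocode writes $[\deg_{G_i}(u_i) + \deg_{G_i}(v_i) - 1]$, which is $\le \Delta + 2\alpha - 1$; I would reconcile this with the global palette bound.) Properness is immediate since we only ever pick a color not used by an already-colored adjacent edge, and every adjacent edge is colored before $e_i$ iff it has a larger index.

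\textbf{Running time.} \texttt{PHASE I} is a peeling process: I would maintain the vertices in a bucket-queue / priority structure keyed by current degree, repeatedly extract a vertex of minimum degree, and for each such vertex process its incident edges (removing them and decrementing neighbor degrees). This is the standard degeneracy-ordering computation; with appropriate structures it runs in $O(m + n)$ or $O(m\log n)$ time — well within budget. Wait, I need to match the claimed $O(m\log^2\Delta)$. For \texttt{PHASE II}, the bottleneck is, for each edge $e_i$, finding a free color in a palette of size $O(\Delta + \alpha) = O(\Delta)$ against up to $O(\Delta)$ forbidden colors. Using the binary-search-tree-over-color-ranges data structure of Bhattacharya et al.~\cite{BhattacharyaCHN18} (the same one referenced in Section~\ref{sec:our tech}), each free-color query and each color-set update costs $O(\log^2\Delta)$, and we perform $O(1)$ of these per edge, over $m$ edges — giving $O(m\log^2\Delta)$ total. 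I would cite that data structure as a black box.

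\textbf{Main obstacle.} The conceptual content is light; the part requiring the most care is the \texttt{PHASE I} degree bound — precisely, arguing that $\min_{e \in E(G_i)}\Phi_{G_i}(e) \le 2\alpha$ despite $G_i$ not literally being an induced subgraph of $G$ — and then carefully threading that bound through \texttt{PHASE II} to get the palette size right (keeping track of whether the bound on $u_i$'s degree is $2\alpha$ or $\lceil \cdot \rceil$ and whether we beat $\Delta + 2\alpha - 1$ exactly). The rest is bookkeeping. I do not expect the running-time analysis to be hard once the right data structure is invoked.
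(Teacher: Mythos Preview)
Your proposal is correct and follows essentially the same route as the paper: the paper also bounds $\Phi_{G_i}(e_i)\le 2\alpha$ by passing to the subgraph of $G_i$ induced on its non-isolated vertices (your exact workaround for $G_i$ not being an induced subgraph), then uses $|N_{G_{i+1}}(e_i)| = \deg_{G_i}(u_i)+\deg_{G_i}(v_i)-2$ to guarantee a free color in $[\deg_{G_i}(u_i)+\deg_{G_i}(v_i)-1]\subseteq[\Delta+2\alpha-1]$. For the running time, the paper likewise implements \texttt{PHASE I} in $O(m)$ via degree-bucketed lists (a degeneracy-style peel, removing all incident edges of a min-degree vertex at once) and \texttt{PHASE II} in $O(m\log^2\Delta)$ via the \cite{BhattacharyaCHN18} binary-search data structure---exactly what you invoke.
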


\subsection{Analysis of Algorithm \textnormal{\textsc{CleverGreedy}}}

Suppose we run Algorithm \textnormal{\textsc{CleverGreedy}} (Algorithm~\ref{alg:static}) on a
graph $G=(V,E)$ with maximum degree $\Delta$ and arboricity $\alpha$. 

We first show that Algorithm~\ref{alg:static} always returns a $(\Delta + 2\alpha - 1)$-edge coloring.

\begin{lemma}\label{lem:static}
    For all $i \in [m]$, we have that:
    \begin{enumerate}
        \item $[\deg_{G_i}(u_i) + \deg_{G_i}(v_i) - 1] \setminus \chi(N_{G_{i+1}}(e_i)) \neq \varnothing$, and
        \item $\deg_{G_i}(u_i) + \deg_{G_i}(v_i) - 1 \leq \Delta + 2\alpha - 1$.
    \end{enumerate}
\end{lemma}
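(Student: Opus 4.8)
The plan is to prove both claims simultaneously by downward induction is unnecessary; in fact each claim follows from a structural property of the ordering $e_1, \dots, e_m$ computed in Phase I. The key point is that when we delete edge $e_i = (u_i, v_i)$ from $G_i$, it was chosen to minimize $\Phi_{G_i}(\cdot)$ over all edges of $G_i$, so $\min\{\deg_{G_i}(u_i), \deg_{G_i}(v_i)\}$ is exactly the minimum value of $\Phi_{G_i}$ on $G_i$.

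\medskip
\noindent\textbf{Proving (2).} First I would bound $\min\{\deg_{G_i}(u_i), \deg_{G_i}(v_i)\}$. Since $G_i$ is a subgraph of $G$, it has arboricity at most $\alpha$, hence average degree at most $2\alpha$ on its non-isolated vertices (more precisely, $|E(G_i)| \le \alpha(|V(G_i)| - 1)$, so the subgraph induced on the vertices touched by $E(G_i)$ has a vertex of degree at most $2\alpha - 1$; one must be mildly careful with the ceiling in the arboricity definition, but since $e_i$ itself is present, the relevant induced subgraph has at least $2$ vertices and at least one edge, so $\Phi_{G_i}(e_i) \le \lceil 2|E(G_i)|/|V'| \rceil \le 2\alpha$ for the vertex set $V'$ spanned by the edges, and in fact one gets $\le 2\alpha - 1$ by an averaging argument or by using that some vertex has degree at most $\lfloor \text{avg} \rfloor$). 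W.l.o.g.\ say $\deg_{G_i}(u_i) = \Phi_{G_i}(e_i) \le 2\alpha$ (I would aim to sharpen this to the clean bound that makes the arithmetic close; the greedy choice gives us the \emph{minimum}, and an induced-subgraph-of-arboricity-$\alpha$ averaging argument yields a vertex of degree $\le 2\alpha - 1$ among the spanned vertices, which is exactly $e_i$'s lower endpoint). Then $\deg_{G_i}(v_i) \le \Delta$ trivially, since $G_i \subseteq G$ and $\Delta$ is the max degree of $G$. Adding: $\deg_{G_i}(u_i) + \deg_{G_i}(v_i) - 1 \le (2\alpha - 1) + \Delta - 1 + 1$, i.e.\ $\le \Delta + 2\alpha - 1$. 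This establishes (2).

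\medskip
\noindent\textbf{Proving (1).} This is the easier, essentially counting part. The set $N_{G_{i+1}}(e_i)$ consists of edges adjacent to $e_i$ that survive in $G_{i+1} = G_i - e_i$; every such edge is incident to $u_i$ or $v_i$ and is distinct from $e_i$, so $|N_{G_{i+1}}(e_i)| \le (\deg_{G_i}(u_i) - 1) + (\deg_{G_i}(v_i) - 1) = \deg_{G_i}(u_i) + \deg_{G_i}(v_i) - 2$. Since $\chi$ is a function, $|\chi(N_{G_{i+1}}(e_i))| \le \deg_{G_i}(u_i) + \deg_{G_i}(v_i) - 2$, which is strictly less than the size $\deg_{G_i}(u_i) + \deg_{G_i}(v_i) - 1$ of the palette $[\deg_{G_i}(u_i) + \deg_{G_i}(v_i) - 1]$. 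Hence the set difference in claim (1) is nonempty.

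\medskip
\noindent\textbf{Main obstacle.} The only delicate point is the averaging step in (2): squeezing the bound $\min\{\deg_{G_i}(u_i), \deg_{G_i}(v_i)\} \le 2\alpha - 1$ (rather than a weaker $\le 2\alpha$ or $\le 2\alpha + O(1)$) out of the arboricity definition, which uses a ceiling and quantifies over \emph{all} subsets $S$. I would handle this by applying the arboricity bound to the vertex set $S$ spanned by $E(G_i)$: we get $|E(G_i)| \le \alpha(|S| - 1) < \alpha|S|$, so the sum of degrees in $G_i$ over $S$ is $2|E(G_i)| < 2\alpha|S|$, forcing some vertex of $S$ to have $G_i$-degree $\le 2\alpha - 1$; since $e_i$ is the $\Phi_{G_i}$-minimizer over all of $E(G_i)$ and that minimizing vertex lies on some edge of $G_i$, we conclude $\Phi_{G_i}(e_i) \le 2\alpha - 1$. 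Everything else is routine.
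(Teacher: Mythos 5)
Your proof is correct and takes essentially the same route as the paper: part~(1) is a straightforward count showing $|\chi(N_{G_{i+1}}(e_i))| \leq \deg_{G_i}(u_i) + \deg_{G_i}(v_i) - 2$, and part~(2) combines the arboricity/average-degree bound with the fact that $e_i$ is a $\Phi_{G_i}$-minimizer to conclude one endpoint has small degree while the other is trivially bounded by $\Delta$.

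One remark on your ``main obstacle'': the sharpening to $\Phi_{G_i}(e_i) \leq 2\alpha - 1$ is unnecessary, and the concern about the ceiling is a red herring. The weaker bound $\Phi_{G_i}(e_i) \leq 2\alpha$ already gives $\deg_{G_i}(u_i) + \deg_{G_i}(v_i) - 1 \leq 2\alpha + \Delta - 1$, which is exactly the target. The paper uses precisely this weaker bound (via ``$G'_i$ has average degree at most $2\alpha$, so some non-isolated node has degree at most $2\alpha$''), and your worry about squeezing an extra $-1$ out of the Nash--Williams definition does not need to be resolved. Also, the phrase ``which is exactly $e_i$'s lower endpoint'' is a slight imprecision: the low-degree vertex found by averaging need not be an endpoint of $e_i$; rather, it witnesses some edge $e$ with $\Phi_{G_i}(e) \leq 2\alpha$, and the greedy choice then forces $\Phi_{G_i}(e_i) \leq \Phi_{G_i}(e)$. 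Your parenthetical does gesture at this, but it should be stated cleanly.
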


\begin{proof}
 For (1), first note that
 $$|\chi(N_{G_{i+1}}(e_i))| \leq |N_{G_{i+1}}(e_i)| = \deg_{G_{i+1}}(u_i) + \deg_{G_{i+1}}(v_i) = \deg_{G_{i}}(u_i) + \deg_{G_{i}}(v_i) -2.$$
    Hence, $|[\deg_{G_i}(u_i) + \deg_{G_i}(v_i) - 1] \setminus \chi(N_{G_{i+1}}(e_i))| \geq 1$ and the claim follows. For (2), it is sufficient to show that $\Phi_{G_i}(e_{i}) \leq 2\alpha$, since this implies that one of the endpoints of $e_i$ has degree at most $2\alpha$.
    Let $G'_i$ denote the graph obtained by removing all isolated nodes from $G_i$.
    Since $G'_i$ is a subgraph of $G$, it has arboricity at most $\alpha$. In particular, $G'_i$ has average degree at most $2\alpha$, and so some node $u$ in $G'_i$ has $0 < \deg_{G'_i}(u) \leq 2\alpha$. Let $e$ be some edge in $G_i$ incident on $u$, then we have that $\Phi_{G_i}(e) \leq 2\alpha$, and so $\Phi_{G_i}(e_{i}) \leq \Phi_{G_i}(e) \leq 2\alpha$.
\end{proof}

\begin{corollary}
Algorithm \ref{alg:static} returns a $(\Delta + 2\alpha - 1)$-edge coloring of the graph.
\end{corollary}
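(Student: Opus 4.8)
The plan is to obtain the corollary as a straightforward consequence of the two parts of Lemma~\ref{lem:static}, via a backwards induction over Phase II of Algorithm~\ref{alg:static}. Recall that Phase I produces an ordering $e_1, \dots, e_m$ of the edges of $G$ with $E(G_i) = \{e_i, e_{i+1}, \dots, e_m\}$ for each $i$, and that Phase II colors the edges in the reverse order $e_m, e_{m-1}, \dots, e_1$. Thus, at the moment Phase II is about to color $e_i$, the set of edges already assigned a color (i.e.\ with $\chi \neq \perp$) is precisely $E(G_{i+1}) = \{e_{i+1}, \dots, e_m\}$, and the already-colored edges adjacent to $e_i$ are exactly the edges in $N_{G_{i+1}}(e_i)$.

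First I would set up the induction hypothesis: for $i$ ranging from $m+1$ down to $1$, the partial coloring $\chi$ restricted to $E(G_i)$ is a \emph{proper} edge coloring of $G_i$ using only colors from $[\Delta + 2\alpha - 1]$. The base case $i = m+1$ is vacuous, since $G_{m+1}$ has no edges. For the inductive step, assume the claim holds for $i+1$ and consider the iteration of Phase II that colors $e_i = (u_i, v_i)$. By part (1) of Lemma~\ref{lem:static}, the set $[\deg_{G_i}(u_i) + \deg_{G_i}(v_i) - 1] \setminus \chi(N_{G_{i+1}}(e_i))$ is non-empty, so the algorithm does assign $e_i$ a color $\chi(e_i)$ from this set; in particular $\chi(e_i)$ differs from the color of every already-colored edge incident on $u_i$ or $v_i$, so $\chi$ remains proper on $E(G_i)$. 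By part (2) of Lemma~\ref{lem:static}, $\deg_{G_i}(u_i) + \deg_{G_i}(v_i) - 1 \leq \Delta + 2\alpha - 1$, hence $\chi(e_i) \in [\Delta + 2\alpha - 1]$, which preserves the color-budget part of the invariant. This closes the induction.

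Taking $i = 1$, the invariant says that $\chi$ is a proper edge coloring of $G_1 = G$ that uses only colors in $[\Delta + 2\alpha - 1]$, which is exactly the assertion of the corollary. Since each step is a direct invocation of an already-established part of Lemma~\ref{lem:static}, I do not expect any genuine obstacle; the only point meriting (minor) care is the bookkeeping identification of the set of edges colored so far with $E(G_{i+1})$ and of the forbidden-color set with $\chi(N_{G_{i+1}}(e_i))$, both of which follow immediately from the definitions of Phase I and Phase II.
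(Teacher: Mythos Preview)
Your proof is correct and follows essentially the same approach as the paper: both derive the corollary directly from the two parts of Lemma~\ref{lem:static}, with your version making the implicit backward induction over Phase II explicit. The paper compresses this into a single sentence, but the underlying argument is identical.
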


\begin{proof}
    By Lemma \ref{lem:static}, we know that every edge $e$ in the graph receives a color $\chi(e) \in [\Delta + 2\alpha - 1]$ that is distinct from all of the colors assigned to adjacent edges.
\end{proof}

\noindent In Appendix~\ref{app:static runtime}, we give the full proof of the following lemma using the data structures presented in Appendix~\ref{sec:coloring data structs}.

\begin{lemma}\label{lem:static runtime}
    Algorithm \ref{alg:static} can be implemented to run in $O(m \log^2 \Delta)$ time.
\end{lemma}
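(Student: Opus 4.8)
The final statement to prove is Lemma~\ref{lem:static runtime}: Algorithm~\ref{alg:static} (\textsc{CleverGreedy}) can be implemented to run in $O(m \log^2 \Delta)$ time.

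\textbf{Plan.} The plan is to analyze the two phases of \textsc{CleverGreedy} separately, each with an appropriate data structure. For \textsc{Phase I}, the task is to repeatedly extract the edge $e_i$ whose lighter endpoint has minimum degree in the current graph $G_i$, then delete it. First I would maintain a priority queue (or bucket array, since degrees are integers bounded by $\Delta$) keyed on vertices by their current degree, so that at each step I can find a minimum-degree vertex $u$ in $O(\log \Delta)$ time. Having found such a $u$, I pick any incident edge $e_i = (u, v)$ in $G_i$; this edge satisfies $\Phi_{G_i}(e_i) = \deg_{G_i}(u) \le \min_{e} \Phi_{G_i}(e)$ because no vertex has smaller degree than $u$, so it is a valid choice for the $\arg\min$. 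Deleting $e_i$ requires decrementing the degrees of $u$ and $v$ and updating their positions in the priority queue, again $O(\log \Delta)$ each. Over all $m$ iterations this gives $O(m \log \Delta)$ for \textsc{Phase I} — already within budget. I would also record, for each $i$, the pair $(\deg_{G_i}(u_i), \deg_{G_i}(v_i))$ so that \textsc{Phase II} knows the size of the palette $[\deg_{G_i}(u_i) + \deg_{G_i}(v_i) - 1]$ to search in.

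\textbf{Phase II.} Here we process edges in reverse order $i = m, \dots, 1$, reinserting $e_i$ and assigning it a color in $[\deg_{G_i}(u_i) + \deg_{G_i}(v_i) - 1] \setminus \chi(N_{G_{i+1}}(e_i))$. The key subroutine is: given a vertex $w$ together with the set of colors currently used by its incident edges, and given a target integer $k$, find some color in $[k]$ not used at either endpoint of $e_i$. By Lemma~\ref{lem:static}(1) such a color exists. I would use the balanced-binary-search-tree / augmented-tree data structure described in Appendix~\ref{sec:coloring data structs} (the same kind used in \cite{BhattacharyaCHN18}): for each vertex, store the set of colors on its incident edges in a balanced BST augmented with subtree sizes, so that one can perform a "find the smallest free color $\le k$" query by a single root-to-leaf descent in $O(\log \Delta)$ time. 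Actually, to handle the \emph{union} of the two endpoints' color sets while searching for a free color in $[k]$, the standard trick is to search both trees simultaneously with a coordinated descent, or equivalently to binary-search over the value range $[k]$ (which has size $O(\Delta)$), at each candidate cutoff querying each of the two trees for how many of its colors fall below the cutoff — each such rank query is $O(\log \Delta)$, and the outer binary search adds another $O(\log \Delta)$ factor, for $O(\log^2 \Delta)$ per edge. Summing over all $m$ edges yields $O(m \log^2 \Delta)$, and we also pay $O(\log \Delta)$ per edge to insert the newly chosen color into the two endpoint trees, which is absorbed.

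\textbf{Main obstacle.} The only genuinely delicate point is the color-search subroutine in \textsc{Phase II}: we must find a free color in an interval $[k]$ whose length varies per edge, using only the color sets stored at the two endpoints, and do so in $O(\log^2 \Delta)$ — not $O(\log^2 n)$ or worse. The correctness that such a color exists is already given by Lemma~\ref{lem:static}, so the obstacle is purely about the data-structure implementation: ensuring the augmented BSTs support rank queries (\textit{number of stored colors $< x$}) in $O(\log \Delta)$, that the value range searched is $O(\Delta)$ rather than $O(n)$ (which follows from $k \le \Delta + 2\alpha - 1 = O(\Delta)$ via Lemma~\ref{lem:static}(2)), and that insertions/deletions of colors cost $O(\log \Delta)$. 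All of this is routine given the data structures of Appendix~\ref{sec:coloring data structs}, so the full proof will mostly consist of invoking those data structures and tallying the per-operation costs across the $2m$ edge-processing steps of the two phases; I expect no conceptual difficulty beyond careful bookkeeping.
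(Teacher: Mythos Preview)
Your proposal is correct and follows essentially the same approach as the paper: a degree-bucketed structure for \textsc{Phase I} and the augmented-BST / \textsc{New-Element} binary search of Appendix~\ref{sec:coloring data structs} for \textsc{Phase II}. The only minor difference is that the paper, in \textsc{Phase I}, removes \emph{all} edges incident on the current minimum-degree vertex in one batch (charging the $O(\delta)$ scan to the $\delta$ edges output) to get $O(m)$ rather than your edge-at-a-time $O(m\log\Delta)$, but both are comfortably within the target $O(m\log^2\Delta)$ bound.
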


\begin{proof}[Proof (Sketch).] Phase 1 of Algorithm~\ref{alg:static} can be implemented to run in $O(m)$ time by creating lists of the nodes with degree $d$ for each $d \in \{0,\dots,n-1\}$. We can then construct the sequence $e_1,\dots,e_m$ by repeatedly identifying a node with the smallest (non-zero) degree, removing all its incident edges, and updating the lists. It can be shown that this process terminates in $O(m)$ time.
Phase 2 of Algorithm~\ref{alg:static} can be implemented to run in $O(m \log^2 \Delta)$ time. This can be done by using an extension of the binary search data structure of \cite{BhattacharyaCHN18} in order to identify a color that is available to an edge in $O(\log^2 \Delta)$ time.
\end{proof}


\section{Data Structures}\label{sec:data structs}

In this appendix, we present the deterministic data structures used by our algorithm. We use two different data structures in order to implement our algorithm. The first is a data structure that we use to efficiently find new colors and is a variant of the binary search data structure of \cite{BhattacharyaCHN18}. The second is the data structure of \cite{BhattacharyaHNT15} that maintains the decompositions used by our algorithm, which we use as a black box.

\subsection{The Binary Search Data Structure}\label{sec:coloring data structs}

We now show how to implement a deterministic data structure that can dynamically maintain a collection of sets $\mathcal S = \{S_i\}_{i \in [n]}$, where each $S_i \subseteq \mathbb N$, and supports the following update and query operations.

\medskip
\noindent \textbf{Update:} Insert/delete a number $x \in \mathbb N$ from some set $S_i \in \mathcal S$.

\medskip
\noindent \textbf{Queries:} The data structure can answer two types of queries, as described below.

\begin{itemize}
    \item \textsc{Membership-Query}$(x, S_i)$: The input to this query is an integer $x$ and a set $S_i \in \mathcal S$. In response, the data structure outputs YES if $x \in S_i$ and NO otherwise.
    \item \textsc{New-Element}$(S_i, S_j)$: The input to this query is a pair of sets $S_i, S_j \in \mathcal S$. In response, the data structure outputs any number in $[|S_i| + |S_j| + 1] \setminus (S_i \cup S_j)$.
\end{itemize}

\noindent
Let $\tau$ be the largest integer across all of the sets $S_i$ in $\mathcal S$.
We now implement this data structure so that the \textsc{New-Element} query runs in $O(\log^2 \tau)$ time and the rest of the operations run in $O(\log \tau)$ time. We remark that the data structure does not need advanced knowledge of $\tau$ in order for these guarantees to hold, and that $\tau$ can change dynamically.

\medskip
\noindent \textbf{Implementing the Sets $S_i \in \mathcal S$.} We implement each set $S_i \in \mathcal S$ as a balanced binary search tree. This allows us to insert/delete elements from $S_i$ in $O(\log |S_i|)$ time. Furthermore, it also allows us to answer membership queries for the set $S_i$ in $O(\log |S_i|)$ time. After an update to the set $S_i$, the corresponding tree may need to be balanced, but this can be done in $O(\log |S_i|)$ time using standard implementations such as an AVL tree. We also implement these balanced binary trees so that each node within the tree explicitly maintains the size of its subtree. This can also be maintained efficiently since there are at most $O(\log |S_i|)$ nodes in the tree whose subtrees change as the result of an update and the proceeding balancing operation.

\medskip
\noindent \textbf{Implementing the \textnormal{\textsc{New-Element}} Query.} Given sets of positive integers $S_i$ and $S_j$, it's easy to see that $[|S_i| + |S_j| + 1] \setminus (S_i \cup S_j)$ is non-empty. In order to implement this query, we first show how to implement the following query and then give a reduction:
\begin{itemize}
    \item \textsc{Set-Intersection}$(S_i, k_1, k_2)$: The input to this query is a set $S_i \in \mathcal S$, and positive integers $k_1$ and $k_2$ such that $k_1 \leq k_2$. In response, the data structure returns the size of the set $S_i \cap [k_1,k_2]$.
\end{itemize}
\begin{claim}
The query \textsc{Set-Intersection}$(S_i, k_1, k_2)$ can be implemented to run in $O(\log |S_i|)$ time.
\end{claim}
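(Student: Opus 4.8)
The plan is to implement the balanced binary search tree representing $S_i$ so that every node stores the size of its own subtree (this augmentation is already described as being maintained in the paper), and then to answer \textsc{Set-Intersection}$(S_i, k_1, k_2)$ by computing the rank function twice. First I would define $\mathrm{rank}(k) := |S_i \cap [1,k]|$, i.e. the number of elements of $S_i$ that are at most $k$, and observe that $|S_i \cap [k_1,k_2]| = \mathrm{rank}(k_2) - \mathrm{rank}(k_1 - 1)$, so it suffices to show $\mathrm{rank}(k)$ can be computed in $O(\log|S_i|)$ time for any integer $k$.

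Next I would describe the standard root-to-leaf descent that computes $\mathrm{rank}(k)$: starting at the root with a running counter $c \leftarrow 0$, at each visited node $x$ with key $x.\mathrm{key}$, if $k \geq x.\mathrm{key}$ then every element in the left subtree of $x$ together with $x$ itself is counted, so we add $(\mathrm{size\ of\ left\ subtree\ of\ }x) + 1$ to $c$ and recurse into the right child; otherwise we add nothing and recurse into the left child. When we fall off the bottom of the tree we return $c$. Correctness follows from a straightforward induction on the depth of recursion using the binary-search-tree ordering invariant, and the running time is $O(\mathrm{height}) = O(\log|S_i|)$ since the tree is balanced (e.g. an AVL tree) and each step does $O(1)$ work, reading only the precomputed subtree-size fields.

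Finally I would note the two edge cases: $k_1 = 1$ just means $\mathrm{rank}(k_1-1) = \mathrm{rank}(0) = 0$, and arbitrary $k$ (not necessarily a key in $S_i$) is handled transparently by the descent above, so no special casing is needed. Combining the two rank queries gives the claimed $O(\log|S_i|)$ bound.

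I do not expect any real obstacle here — this is the textbook ``order-statistics / rank'' operation on a size-augmented balanced BST, and the only things to be careful about are the off-by-one in the $[k_1-1]$ term and making sure the subtree-size fields are exactly the ones the earlier part of the data-structure description maintains. The mildly delicate point, if any, is simply stating precisely that the augmentation suffices and that balancing does not disturb it, but the paper has already asserted this for the size fields, so it can be cited rather than reproved.
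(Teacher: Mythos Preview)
Your proposal is correct and essentially identical to the paper's own proof: both reduce to computing $|S_i \cap [k]|$ (your $\mathrm{rank}(k)$) via the difference $|S_i \cap [k_2]| - |S_i \cap [k_1 - 1]|$, and both compute this by a single root-to-leaf descent that, at each node, either recurses left (if the node's key exceeds $k$) or adds the left-subtree size plus one and recurses right. The paper phrases the descent recursively as a quantity $Q_k(x)$ rather than iteratively with a counter, but the argument is the same.
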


\begin{proof}
    Suppose that given some set $S_i$, implemented as a balanced binary tree as described in the preceding section, and some positive integer $k$, we can find $|S_i \cap [k]|$ in $O(\log |S_i|)$ time. Then, by noting that $|S_i \cap [k_1,k_2]| = |S_i \cap [k_2]| - |S_i \cap [k_1 - 1]|$, it follows that \textsc{Set-Intersection} queries can be answered in $O(\log |S_i|)$ time. We now show how to compute $|S_i \cap [k]|$ in $O(\log |S_i|)$ time.

    We can implement this query recursively. Given some node $x$ in the tree representation of $S_i$, let $\texttt{val}(x)$ denote the value stored at the node $x$, and let
    $Q_k(x)$ denote the number of nodes in the subtree of $x$ (including $x$) whose values are at most $k$.
    Now, let $y$ and $z$ be the left and right children of $x$ respectively. If $\texttt{val}(x) > k$, then none of the nodes stored in the subtree of $z$ have values at most $k$, so $Q_k(x) = Q_k(y)$. Otherwise, if $\texttt{val}(x) \leq k$, then all of the nodes stored in the subtree of $y$ have values at most $k$, so $Q_k(x) = \texttt{size}(y) + 1 + Q_k(z)$, where $\texttt{size}(y)$ denotes the size of the subtree rooted at $y$ (and is explicitly maintained at $y$). If $x$ does not have a left or right child, we just omit the corresponding terms. Since the tree has depth $O(\log |S_i|)$, this can be done in $O(\log |S_i|)$ time.
\end{proof}

\noindent
We now show how the \textsc{Set-Intersection}$(S_i, k_1, k_2)$ query can be used to implement the \textsc{New-Element}$(S_i, S_j)$ query with only $O(\log \tau)$ overhead.

\begin{lemma}
    The query \textsc{New-Element}$(S_i, S_j)$ can be implemented to run in $O(\log^2 (|S_i| + |S_j|))$ time.
\end{lemma}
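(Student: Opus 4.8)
The plan is to answer \textsc{New-Element}$(S_i,S_j)$ using only $O(\log(|S_i|+|S_j|))$ calls to \textsc{Set-Intersection}, via a divide-and-conquer search over the range $[1,N]$ where $N := |S_i|+|S_j|+1$. For a subrange $[a,b]\subseteq[1,N]$, define $U(a,b) := |S_i\cap[a,b]| + |S_j\cap[a,b]|$; by the preceding claim this quantity is obtained from the two queries \textsc{Set-Intersection}$(S_i,a,b)$ and \textsc{Set-Intersection}$(S_j,a,b)$ in $O(\log|S_i|+\log|S_j|) = O(\log(|S_i|+|S_j|))$ total time. Two elementary facts drive the argument: (a) $U(a,b)\geq |(S_i\cup S_j)\cap[a,b]|$, so whenever $U(a,b) < b-a+1$ there is an element of $[a,b]$ outside $S_i\cup S_j$; and (b) $U$ is additive over partitions, $U(a,b) = U(a,m) + U(m+1,b)$ for $a\leq m<b$, since $[a,m]$ and $[m+1,b]$ partition $[a,b]$.

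First I would check the base invariant on the whole range: since every element of $S_i\cap[1,N]$ and $S_j\cap[1,N]$ is counted, $U(1,N)\leq |S_i|+|S_j| = N-1 < N$. Starting from $[a,b]=[1,N]$, I maintain the invariant $U(a,b) < b-a+1$ and repeatedly bisect: put $m := \lfloor (a+b)/2\rfloor$ (so $a\leq m<b$ when $a<b$); by fact (b), $U(a,m)+U(m+1,b) = U(a,b) < (m-a+1)+(b-m)$, hence at least one of the two halves $[a',b']$ satisfies $U(a',b') < b'-a'+1$, and I recurse into such a half. Each step halves the length of the range, so after $O(\log N)$ steps I reach a singleton $[x,x]$ still satisfying $U(x,x) < 1$, i.e.\ $U(x,x)=0$. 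This forces $x\notin S_i$ and $x\notin S_j$ with $x\in[N]$, so $x \in [|S_i|+|S_j|+1]\setminus(S_i\cup S_j)$ is a valid answer.

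For the running time, the search makes $O(\log N)$ bisection steps, each costing two \textsc{Set-Intersection} queries, i.e.\ $O(\log(|S_i|+|S_j|))$ time; since $N = |S_i|+|S_j|+1$, the total is $O(\log^2(|S_i|+|S_j|))$. The degenerate case $|S_i|=|S_j|=0$ (where $N=1$) is handled trivially by returning $1$.

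The only point needing real care is the interplay of fact (a) with preserving the \emph{strict} deficit through the bisection: $U$ overcounts $S_i\cup S_j$ exactly by the elements lying in both sets, which is precisely why a strict deficit $U(a,b)<b-a+1$ on a block both certifies a free element inside the block and is inherited by (at least) one sub-block. I do not expect any genuine obstacle beyond this bookkeeping, since everything reduces to counting and the additivity in fact (b).
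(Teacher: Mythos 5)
Your proposal is correct and takes essentially the same approach as the paper: both perform a binary search over $[1, |S_i|+|S_j|+1]$ using \textsc{Set-Intersection} queries, maintaining the invariant that the number of elements of $S_i\cup S_j$ (counted with multiplicity) in the current interval is strictly less than the interval length, and bisecting until a singleton free value is found. Your write-up is a bit more explicit about the additivity of the count and the degenerate case, but the argument is the same.
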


\begin{proof}
    We do this by performing a binary search that we implement using the \textsc{Set-Intersection} query. Given sets $S_i$ and $S_j$, first set $k_1 = 1$ and $k_2 = |S_i| + |S_j| + 1$, and note that
    \begin{equation*}\label{eq:invariant}
        |S_i \cap [k_1,k_2]| + |S_j \cap [k_1,k_2]| < k_2 - k_1 + 1.
    \end{equation*}
    We can then perform a binary search by splitting the interval $[k_1, k_2]$ into disjoint intervals $I_\ell$ and $I_r$ such that $[k_1, k_2] = I_\ell \cup I_r$ and $|I_\ell| = |I_r| \pm 1$ and noting that, for some $I \in \{ I_\ell, I_r \}$, we have that $|S_i \cap I| + |S_j \cap I| < |I|.$ We can find such an interval $I$ using the \textsc{Set-Intersection} query in $O(\log |S_i| + \log |S_j|)$ time. By repeating this for $O(\log(|S_i| + |S_j|))$ many iterations, the size of the interval becomes 1 and we find a number $x$ such that $|S_i \cap \{x\}| + |S_j \cap \{x\}| < 1$ and hence $x \notin S_i \cup S_j$. It follows that this process takes $O(\log^2 (|S_i| + |S_j|))$ time in total. 
\end{proof}

\subsection{The Key Data Structure}

We now describe our key data structure that we use to implement our dynamic algorithm. Our key data structure is an extension of the data structure described in Proposition~\ref{prop:dynamic-decomp-sys} that also incorporates the binary search data structure in order to maintain a coloring. We restate Proposition~\ref{prop:dynamic-decomp-sys} for convenience.

\begin{proposition}
For any constant $\beta \geq 2 + 3 \epsilon$, there is a deterministic fully-dynamic algorithm that maintains a $(\beta, (d_j)_{j \in [K]}, L)$-decomposition system of a graph $G = (V,E)$ with amortized update time and amortized recourse $O(KL/\epsilon)$.
\end{proposition}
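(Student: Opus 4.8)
The plan is to reduce the claim to Proposition~\ref{prop:dynamic-decomp} by running $K$ independent copies of the single-decomposition data structure in parallel, one per parameter $d_j$. Concretely, for each $j \in [K]$, instantiate the deterministic fully-dynamic algorithm of Proposition~\ref{prop:dynamic-decomp} with parameters $\beta$, $d_j$, and $L$, and call this the $j^{th}$ \emph{layer}. Since $\beta \geq 2 + 3\epsilon$ is a constant, each layer is a valid instance. Whenever $G$ undergoes an edge update (insertion or deletion), forward the same update to all $K$ layers. By the definition of a $(\beta, (d_j)_{j \in [K]}, L)$-decomposition system, maintaining, for every $j$, a $(\beta, d_j, L)$-decomposition $(Z_{i,j})_{i \in [L]}$ of $G$ is exactly what is required, so correctness is immediate from the correctness of each individual layer.

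For the complexity, fix an arbitrary sequence of $T$ updates starting from the empty graph. By Proposition~\ref{prop:dynamic-decomp}, over the whole sequence layer $j$ spends $O(T \cdot L/\epsilon)$ total time and incurs $O(T \cdot L/\epsilon)$ total edge-level changes. Summing over the $K$ layers, the total runtime is $O(T \cdot KL/\epsilon)$, i.e.\ amortized update time $O(KL/\epsilon)$. Since the recourse of an update to the decomposition system is defined as the number of edges that change level in \emph{some} layer, it is at most the sum over $j \in [K]$ of the number of edges changing level in layer $j$; hence the total recourse over the sequence is at most $\sum_{j \in [K]} O(T \cdot L/\epsilon) = O(T \cdot KL/\epsilon)$, giving amortized recourse $O(KL/\epsilon)$. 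As noted after Proposition~\ref{prop:dynamic-decomp}, each layer's data structure can additionally be made to explicitly maintain the induced orientation $\prec_j$ within the same asymptotic bounds, with the orientation of an edge in layer $j$ changing only when $\ell_j(\cdot)$ changes; performing this independently in every layer yields all the orientations $\prec_j$, $j \in [K]$, within the stated $O(KL/\epsilon)$ amortized bounds.

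There is essentially no obstacle here; the only point requiring (minor) care is that the amortized guarantees compose additively when $K$ independent structures are driven by the same update stream. This holds because the amortization in Proposition~\ref{prop:dynamic-decomp} is over an update sequence that starts from the empty graph, and each layer is fed exactly such a sequence, so the per-layer bounds of $O(T \cdot L/\epsilon)$ add up to the claimed $O(T \cdot KL/\epsilon)$ total.
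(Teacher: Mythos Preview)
Your proposal is correct and follows exactly the approach the paper takes: the paper simply states that ``we can use the data structure from Proposition~\ref{prop:dynamic-decomp} to dynamically maintain a decomposition system,'' i.e., run $K$ independent copies of the single-decomposition algorithm in parallel and sum the per-layer $O(L/\epsilon)$ amortized bounds to obtain $O(KL/\epsilon)$. Your write-up in fact spells out more detail than the paper itself provides.
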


\noindent
It follows from the details of the data structure in \cite{BhattacharyaHNT15} that this data structure maintains the sets of edges $N_{j}^+(u)$ and the levels of nodes $\ell_j(u)$ \textit{explicitly} for each $u \in V$ and $j \in [K]$. In order to extend this data structure and use it to maintain a coloring $\chi$ of the graph $G$ as it undergoes updates, we also maintain the following auxiliary data structures.

\begin{itemize}
    \item For all $e \in E$, $\textsc{Color}(e)$ keeps track of the color currently assigned to $e$, i.e. $\chi(e)$. We implement this using balanced search trees so it takes $O(\log n)$ time to get/set the color of an edge $e$.
    
    \item For all $j \in [K]$, $u \in V$, $\textsc{Colors}_j^+(u)$ is the set of all colors assigned to the edges in $N^+_j(u)$, i.e. $\chi(N_{j}^+(u))$. We implement these sets using balanced search trees in the same way as the sets in Appendix~\ref{sec:coloring data structs}, allowing us to perform \textsc{New-Element} queries on these sets in $O(\log^2 \Delta)$ time and insert/delete elements from the set in $O(\log \Delta)$ time.
    
    \item For all $u \in V$, $\textsc{Colors}(u)$ is the set of all colors assigned to the edges incident on $u$, i.e. $\chi(N(u))$. Like the previous sets, we implement these sets using balanced search trees in the same way as the sets in Appendix~\ref{sec:coloring data structs}, with the additional modification that we add a pointer from each color $c \in \textsc{Colors}(u)$ to the edge incident on $u$ that is assigned color $c$, allowing us to search for such edges in $O(\log \Delta)$ time.
\end{itemize}

\noindent We now argue that we can still efficiently maintain this modified data structure, which we call $\mathcal D$, in the dynamic setting with only $\tilde O(1)$ overhead.

\medskip
\noindent \textbf{Updating the Coloring $\chi$.} Suppose that we want to update the color $\chi(e)$ of an edge $e = (u,v)$ in $G$. Clearly, this does not require us to change the structure of the underlying decomposition system. $\textsc{Color}(e)$ can be updated in $O(\log n)$ time. $\textsc{Colors}(u)$ and $\textsc{Colors}(v)$ can be updated in $O(\log \Delta)$ time by adding and removing at most 1 color from each. Finally, for each $j \in [K]$, we can check $\ell_j(u)$ and $\ell_j(v)$ in $O(1)$ time, and update $\textsc{Colors}_j^+(u)$ and $\textsc{Colors}_j^+(v)$ in $O(\log \Delta)$ time, taking $O(K \log \Delta)$ time in total. It follows that, after changing the color of an edge, the data structure can be updated in $O(\log n +  K\log \Delta)$ worst-case time.

\medskip
\noindent \textbf{Updating the Decomposition System.} Suppose that an edge insertion or deletion leads us to update the decomposition $(Z_{i,j})_{i,j}$. Since this does not change the underlying coloring, the data structures $\textsc{Color}$ and $\textsc{Colors}$ do not change. However, since the levels of nodes might change, $\textsc{Color}^+$ might change. We can observe that colors only need to be inserted/deleted from $\textsc{Color}_j^+(u)$ when edges are inserted/deleted from $N_{j}^+(u)$, which the data structure in Proposition~\ref{prop:dynamic-decomp-sys} maintains explicitly. Hence, we make the modification that every time an edge $e$ is inserted/deleted from $N_{j}^+(u)$ we also insert/delete $\chi(e)$ from $\textsc{Color}_j^+(u)$. Since it takes $O(\log \Delta)$ time to insert/delete element from $\textsc{Color}_j^+(u)$, this leads to at most $O(\log \Delta)$ overhead. We get the following lemma about the behavior of our key data structure.

\begin{lemma}\label{lem:data structure 2}
    For any constant $\beta \geq 2 + 3 \epsilon$, the data structure $\mathcal D$ maintains a $(\beta, (d_j)_{j \in [K]}, L)$-decomposition of a graph $G = (V,E)$ with amortized update time $O(KL \log\Delta /\epsilon)$ and amortized recourse $O(KL /\epsilon)$, and can update the colors of edges in $O(\log n + K\log \Delta)$ worst-case time.
\end{lemma}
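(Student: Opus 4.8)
The plan is to obtain $\mathcal D$ by layering the three auxiliary structures ($\textsc{Color}$, $\textsc{Colors}^+_j$, $\textsc{Colors}$) on top of the decomposition-system data structure of Proposition~\ref{prop:dynamic-decomp-sys}, and to argue that (a) the decomposition-maintenance part contributes only an extra $O(\log\Delta)$ multiplicative factor to the update time while leaving the recourse unchanged, and (b) a single recolouring costs $O(\log n + K\log\Delta)$ in the worst case. I would treat the algorithm of \cite{BhattacharyaHNT15} from Proposition~\ref{prop:dynamic-decomp-sys} entirely as a black box, using the facts recalled in the excerpt: it maintains the levels $\ell_j(u)$ and the sets $N^+_j(u)$ \emph{explicitly}, it has amortised update time and amortised recourse $O(KL/\epsilon)$, and the orientation of an edge (hence its membership in the relevant $N^+_j$-sets) changes only when its level changes.

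\textbf{Cost of a recolouring.} First I would analyse the primitive that changes $\chi(e)$ for an edge $e=(u,v)$. This does not touch the underlying decomposition system, so only the auxiliary structures are refreshed: $\textsc{Color}(e)$ is a balanced search tree over (a subset of) the possible edges, so a get/set costs $O(\log n)$; $\textsc{Colors}(u)$ and $\textsc{Colors}(v)$ each have at most one colour removed and at most one added, each such operation (together with the pointer from a colour to its incident edge) costing $O(\log\Delta)$ since these trees have $O(\Delta)$ elements; and for every layer $j\in[K]$ we read $\ell_j(u),\ell_j(v)$ in $O(1)$ time and, whenever $e\in N^+_j(u)$ (resp. $e\in N^+_j(v)$), update $\textsc{Colors}^+_j(u)$ (resp. $\textsc{Colors}^+_j(v)$) in $O(\log\Delta)$ time, for a total of $O(K\log\Delta)$ across all layers. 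Summing yields the claimed $O(\log n+K\log\Delta)$ worst-case bound. (Here one also notes that the largest colour ever used is $\Delta+O(\alpha)=\mathrm{poly}(\Delta)$, so all the colour-indexed balanced trees have depth $O(\log\Delta)$; the subtree-size bookkeeping needed later by \textsc{New-Element} is maintained within the same asymptotic cost, exactly as in Appendix~\ref{sec:coloring data structs}.)

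\textbf{Cost of an edge update to $G$.} On an insertion or deletion, $\mathcal D$ runs the algorithm of Proposition~\ref{prop:dynamic-decomp-sys}, which costs $O(KL/\epsilon)$ amortised time and changes the levels of $O(KL/\epsilon)$ edges amortised; since the later recolouring performed by \textsc{ExtendColoring} is not part of maintaining $\mathcal D$, the amortised recourse of $\mathcal D$ is exactly $O(KL/\epsilon)$, as claimed. The only additional work is keeping each $\textsc{Colors}^+_j(u)$ equal to $\chi(N^+_j(u))$. The key step is the charging argument: a colour needs to be inserted into or deleted from $\textsc{Colors}^+_j(u)$ only when some edge enters or leaves $N^+_j(u)$ in the structure of \cite{BhattacharyaHNT15}, and by the ``orientation changes only with level'' property this happens only when that edge changes level in layer $j$; hence, over any sequence of $T$ updates, the total number of such events is bounded by the amortised recourse of the decomposition system, i.e.\ $O(TKL/\epsilon)$. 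Piggybacking an $O(\log\Delta)$-time tree update on each event adds $O(TKL\log\Delta/\epsilon)$ total, i.e.\ $O(KL\log\Delta/\epsilon)$ amortised, which dominates the base $O(KL/\epsilon)$. This gives amortised update time $O(KL\log\Delta/\epsilon)$ and leaves the recourse at $O(KL/\epsilon)$.

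\textbf{Main obstacle.} The one nontrivial point is the charging argument of the previous paragraph: one must be sure that the modifications to the sets $\textsc{Colors}^+_j(\cdot)$ forced by a decomposition update are in bijection with edges that change level in some layer, so that their count is controlled by the amortised \emph{recourse} bound of Proposition~\ref{prop:dynamic-decomp-sys} and not by something larger (such as all edges incident to a node whose level changed). This is precisely where we use that \cite{BhattacharyaHNT15} maintains the $N^+_j$-sets explicitly and that edge orientations change only together with levels. Everything else — the AVL-rebalancing and subtree-size maintenance — is routine and already established in Appendix~\ref{sec:coloring data structs}.
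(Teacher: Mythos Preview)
Your approach is the paper's: layer the three auxiliary structures on top of Proposition~\ref{prop:dynamic-decomp-sys}, argue that recolouring an edge touches $\textsc{Color}$, two $\textsc{Colors}$-trees, and $K$ many $\textsc{Colors}^+_j$-trees for $O(\log n + K\log\Delta)$ worst-case time, and piggyback colour updates on the explicit maintenance of the $N_j^+$-sets for the $O(\log\Delta)$ multiplicative overhead on the decomposition update.

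One refinement in the charging step. The sentence ``by the `orientation changes only with level' property this happens only when that edge changes level in layer $j$'' does not quite hold: an edge $(u,w)$ with $\ell_j(u)=\ell_j(w)$ lies in both $N_j^+(u)$ and $N_j^+(w)$, and if $u$ moves up one level the edge leaves $N_j^+(u)$ while its level $\ell_j(e)=\ell_j(w)$ is unchanged; so $N_j^+$-membership changes are not in bijection with edge-level changes, and you cannot charge them directly to the recourse bound. The paper (and you, one clause later) already have the correct fix: since the algorithm of \cite{BhattacharyaHNT15} maintains each $N_j^+(u)$ \emph{explicitly}, every insertion/deletion into any such set is already accounted for within its $O(KL/\epsilon)$ amortised \emph{update time}, and attaching an $O(\log\Delta)$ tree operation to each such event yields the $O(KL\log\Delta/\epsilon)$ bound directly. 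Charge to the update-time bound rather than the recourse bound and the argument is complete and matches the paper.
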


\noindent Note that, for the specific parameters used by our dynamic algorithm, this corresponds to an amortized update time of $O(\log^2 n \log\Delta /\epsilon^3)$, an amortized recourse of $O(\log^2 n /\epsilon^3)$, and $O(\log n \log\Delta /\epsilon)$ time to update colors.

\subsection{Implementing the Dynamic Algorithm}\label{sec:implementation}

Suppose that our algorithm handles a sequence of $T$ edge insertions and deletions for an initially empty dynamic graph $G$. For $t \in [T]$, let $S^{(t)}$ denote the set of edges that are uncolored by our algorithm during the $t^{th}$ update before calling $\textsc{ExtendColoring}$ on the set $S^{(t)}$. We now show that our algorithm has an amortized update time of $O(\log^5 n \log \Delta / \epsilon^6)$.

\begin{lemma}\label{lem: total insert / delete time}
    The total time spent handling calls to \textsc{Insert} (Algorithm~\ref{alg:update 2}) and \textsc{Delete} (Algorithm~\ref{alg:deletion 2}), excluding the time taken to handle calls to \textsc{ExtendColoring}, is $T \cdot O(\log^2 n \log \Delta / \epsilon^3) + O(\log n \log \Delta / \epsilon) \cdot \sum_{t \in [T]} |S^{(t)}|$.
\end{lemma}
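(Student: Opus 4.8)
The plan is to bound the time spent in a single invocation of \textsc{Insert} or \textsc{Delete}, \emph{excluding} the call to \textsc{ExtendColoring}, by splitting it into three parts: (a) the call to \textsc{UpdateDecompositions}, which maintains the decomposition system $\mathcal Z$ inside the data structure $\mathcal D$; (b) in the case of \textsc{Delete}, the computation of the sets $\Gamma_u,\Gamma_v$ of newly bad edges (the edges added to $S$ in the \textbf{for} loop of Algorithm~\ref{alg:deletion 2}); and (c) the $|S^{(t)}|$ operations that set $\chi(f)\leftarrow\perp$. I will bound the total cost of each part over the whole sequence and add them up, checking that all other overheads (inserting/deleting $e$ from the representation, reading degrees, computing each $\mathcal L(\cdot)$) are dominated.

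For part (a): the call to \textsc{UpdateDecompositions}$(e)$, including returning the set $S'$ of edges that change level in some layer and the induced updates to the $\textsc{Colors}^{+}_j(\cdot)$ structures, is exactly one update of the data structure $\mathcal D$ of Lemma~\ref{lem:data structure 2}, with amortized cost $O(KL\log\Delta/\epsilon)$. Plugging in $K=L=2+\lceil\log_{1+\epsilon}n\rceil=O(\log n/\epsilon)$ gives $O(\log^2 n\log\Delta/\epsilon^3)$ amortized per update, hence $T\cdot O(\log^2 n\log\Delta/\epsilon^3)$ in total; this is the first term. For part (c): each operation that uncolors an edge changes the color of exactly one edge, which by Lemma~\ref{lem:data structure 2} costs $O(\log n+K\log\Delta)=O(\log n\log\Delta/\epsilon)$ worst-case time in $\mathcal D$; summing over all updates contributes $O(\log n\log\Delta/\epsilon)\cdot\sum_{t\in[T]}|S^{(t)}|$, which is the second term.

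Part (b) is the step that needs care, and I expect it to be the main obstacle: a naive scan over all edges of $N(v)$ oriented into $v$ could cost $\Theta(\Delta)$, so I will instead reuse the counting observation from the proof of Lemma~\ref{lem:full algorithm recourse}. Since $\chi$ was good before deleting $e=(u,v)$ and $\deg(w)$ drops by exactly one for each endpoint $w\in e$, any newly bad edge $f=(w',w)$ with $w'\prec_{\mathcal L(f)} w$ and $\mathcal L(f)=j$ must satisfy $\deg(w)+2\beta(1+\epsilon)\tilde\alpha_{j}<\chi(f)\le\deg(w)+1+2\beta(1+\epsilon)\tilde\alpha_{j}$, a half-open interval of length $1$ that contains a unique integer $c_{w,j}$. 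Thus, for each endpoint $w\in e$ and each $j\in[L]$ (recall $\mathcal L(f)\le L$ by Lemma~\ref{lem: L <= K}) there is at most one candidate color $c_{w,j}$, and hence at most one candidate edge, namely the (unique, by properness of $\chi$) edge incident on $w$ colored $c_{w,j}$. We locate it via the pointer stored in $\textsc{Colors}(w)$ in $O(\log\Delta)$ time, and then verify in $O(K)$ time whether the returned edge $f$ really has $\mathcal L(f)=j$, is oriented into $w$, and is bad; if so we add it to $S$. This costs $O(L(\log\Delta+K))=O(\log^2 n\log\Delta/\epsilon^2)$ time per \textsc{Delete}, so over $T$ updates it is $T\cdot O(\log^2 n\log\Delta/\epsilon^2)$, which is absorbed into the first term.

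Finally, adding parts (a), (b), (c) together with the dominated $O(\log n)$/$O(K)$-type overheads yields the claimed total of $T\cdot O(\log^2 n\log\Delta/\epsilon^3)+O(\log n\log\Delta/\epsilon)\cdot\sum_{t\in[T]}|S^{(t)}|$. The only non-routine ingredient is the ``one candidate color per layer'' argument in part (b), which lets us find $\Gamma_u\cup\Gamma_v$ in $\tilde O(1)$ time rather than time proportional to the in-degrees of $u$ and $v$; everything else is a direct appeal to Lemma~\ref{lem:data structure 2} and bookkeeping.
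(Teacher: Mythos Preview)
Your proposal is correct and follows essentially the same approach as the paper: both split the work into the decomposition-system update (amortized via Lemma~\ref{lem:data structure 2}), the uncoloring of the edges in $S^{(t)}$ (charged per edge via Lemma~\ref{lem:data structure 2}), and the computation of $\Gamma_u\cup\Gamma_v$ via the ``one candidate color per layer'' observation from the proof of Lemma~\ref{lem:full algorithm recourse}. Your part~(b) is slightly more careful than the paper's (you explicitly account for the $O(K)$ cost of verifying $\mathcal L(f)=j$, whereas the paper folds ``check if it is bad'' into an $O(L\log\Delta)$ bound), but this only strengthens the argument and is still absorbed into the first term.
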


\begin{proof}
    We first remark that we do not need to explicitly store the graph $G$, since the data structure $\mathcal D$ is sufficient to implement the algorithm. In the case of a deletion, we can initially construct the set $S$ in $O(\log n \log \Delta / \epsilon)$ time. To see why, recall that, in the proof of Lemma \ref{lem:full algorithm recourse}, we show that for any $w \in e$, $|\Gamma_w| \leq L$, and we know which $L$ colors $c_1,...,c_L$ can be taken by the edges in this set. Hence, for each such color $c_j$, we can just check whether there is an edge incident on $w$ with color $c_j$, and if there is such an edge we can check if it is bad and add it to $S$, which be implemented as a linked list. Overall, this can be done in $O(L\log \Delta)$ time.
    
    By Lemma~\ref{lem:data structure 2}, the total time spent updating the decomposition system across all the updates is $T \cdot O(\log^2n \log \Delta / \epsilon^3)$. Since the data structure in Lemma~\ref{lem:data structure 2} maintains the levels of edges explicitly, we can identify the set $S'$ of edges that change levels in some layer during an update with $O(1)$ overhead while updating the decomposition system. We can then take the set $S$ and uncolor all of these edges in $|S| \cdot O(\log n \log \Delta / \epsilon)$. Summing over all updates, we get the desired bound.
\end{proof}

\begin{lemma}\label{lem: extend coloring time}
    The total time spent handling calls to algorithm $\textsc{ExtendColoring}$ (Algorithm~\ref{alg:extend 2}) is $O( \log^3 n \log \Delta / \epsilon^3) \cdot \sum_{t \in [T]}|S^{(t)}|$.
\end{lemma}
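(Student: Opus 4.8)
The plan is to bound the total work done across all invocations of \textsc{ExtendColoring} by charging it against the total number of while-loop iterations executed, and then bound the latter using the potential argument already established in the proof of Lemma~\ref{lem:full dynamic anal 1}. First I would observe that each individual iteration of the while loop in Algorithm~\ref{alg:extend 2} can be carried out in $O(\log^2 n \log \Delta / \epsilon)$ time: picking an edge $f = (u,v)$ from $S$ (stored as a linked list) takes $O(1)$ time; the sets $C_u^+ = \textsc{Colors}_{\mathcal L(f)}^+(u)$ and $C_v = \textsc{Colors}(v)$ are maintained explicitly by the data structure $\mathcal D$, so finding a color $c \in [|C_u^+| + |C_v| + 1]\setminus(C_u^+ \cup C_v)$ amounts to a single \textsc{New-Element} query, which by the results of Appendix~\ref{sec:coloring data structs} runs in $O(\log^2 \Delta)$ time; checking whether $c \in \chi(N(u))$ and, if so, locating the conflicting edge $f'$ uses the pointer stored alongside each color in $\textsc{Colors}(u)$ and takes $O(\log \Delta)$ time; finally, recoloring $f$ (and uncoloring $f'$) updates $\chi$, which by Lemma~\ref{lem:data structure 2} costs $O(\log n + K\log\Delta) = O(\log^2 n \log \Delta / \epsilon)$ worst-case time per color change, using $K = L = O(\log n / \epsilon)$. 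So the per-iteration cost is dominated by the color-update step, giving $O(\log^2 n \log\Delta/\epsilon)$ per iteration.

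Next I would count the iterations. By part (1) of Lemma~\ref{lem:full dynamic anal 1}, a single call $\textsc{ExtendColoring}(S^{(t)})$ performs at most $L^2 |S^{(t)}|$ iterations of the while loop (one edge is recolored per iteration, and the potential $\Psi(S)$ starts at most $L^2|S^{(t)}|$ and drops by at least one per iteration). Summing over all $T$ updates, the total number of while-loop iterations across the whole sequence is at most $L^2 \sum_{t\in[T]} |S^{(t)}|$. Multiplying by the per-iteration cost yields a total running time for all calls to \textsc{ExtendColoring} of
\[
O\!\left(\frac{\log^2 n \log \Delta}{\epsilon}\right) \cdot L^2 \sum_{t\in[T]} |S^{(t)}|
= O\!\left(\frac{\log^4 n \log \Delta}{\epsilon^3}\right) \cdot \sum_{t\in[T]} |S^{(t)}|,
\]
since $L^2 = O(\log^2 n / \epsilon^2)$. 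Hmm — this gives $\log^4 n$, whereas the lemma claims $\log^3 n$; I would revisit the per-iteration accounting, noting that the $\log n$ factor in the color-update cost $O(\log n + K\log\Delta)$ can be folded into the $\log\Delta$ term (treating $\log\Delta \le \log n$), so the per-iteration cost is really $O(K\log\Delta) = O(\log n \log\Delta/\epsilon)$, and then multiplying by $L^2$ gives exactly $O(\log^3 n \log\Delta/\epsilon^3)\cdot\sum_t |S^{(t)}|$, matching the statement.

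The main obstacle is making sure that every sub-step of a while-loop iteration is genuinely supported in $\tilde O(1)$ time by the data structure $\mathcal D$ described in Appendix~\ref{sec:data structs} — in particular that the \textsc{New-Element} query is applied to exactly the pair of sets $(\textsc{Colors}_{\mathcal L(f)}^+(u), \textsc{Colors}(v))$ that the algorithm needs, that these sets are indexed by the correct layer $\mathcal L(f)$ and the correct orientation $u \prec_{\mathcal L(f)} v$ (both of which are maintained explicitly), and that uncoloring $f'$ and re-inserting it into $S$ incurs only the color-update cost already accounted for. Once the per-iteration cost is pinned down, the bound follows immediately by combining it with the iteration count $L^2 \sum_t |S^{(t)}|$ from Lemma~\ref{lem:full dynamic anal 1}. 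Together with Lemma~\ref{lem: total insert / delete time} and the recourse bound $\sum_t |S^{(t)}| = O(T L^2/\epsilon)$ from the proof of Lemma~\ref{lem:full algorithm recourse}, this will yield the claimed $O(\log^5 n \log\Delta/\epsilon^6)$ amortized update time.
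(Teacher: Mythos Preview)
Your proposal is correct and follows essentially the same approach as the paper: bound the number of while-loop iterations by $L^2|S^{(t)}|$ via Lemma~\ref{lem:full dynamic anal 1}, bound the per-iteration cost by $O(\log n \log\Delta/\epsilon)$ using the explicitly maintained sets $\textsc{Colors}_j^+$, $\textsc{Colors}$ and the color-update cost from Lemma~\ref{lem:data structure 2}, and multiply. Your initial arithmetic slip (writing $O(\log n + K\log\Delta) = O(\log^2 n\log\Delta/\epsilon)$) is just a miscomputation---with $K = L = O(\log n/\epsilon)$ the sum is already $O(\log n\log\Delta/\epsilon)$ without any ``folding'' argument---so the self-correction lands at the right place for the right (if slightly garbled) reason, matching the paper's per-iteration bound exactly.
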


\begin{proof}
    Suppose that we run $\textsc{ExtendColoring}$ on input $S$. By Lemma~\ref{lem:full dynamic anal 1}, we know that the while loop in Algorithm \ref{alg:extend 2} runs for at most $L^2|S|$ iterations. We now show that each iteration takes $O(\log n \log \Delta/\epsilon)$ time. We first recall that the data structure $\mathcal D$ maintains the levels of nodes and edges explicitly and allows us to update the colors of edges in $O(\log n \log \Delta / \epsilon)$ time, as well as explicitly maintaining the orientation of the edges in each layer. Furthermore, since we implement the sets $\textsc{Colors}_j^+$ and $\textsc{Colors}$ so that they support $\textsc{New-Element}$ queries, we can find the color $c$ in $O(\log^2 \Delta)$ time. Hence, each line in the while loop can be implemented to run in $O(\log n \log \Delta/ \epsilon)$ time. It follows that $\textsc{ExtendColoring}$ runs in $|S| \cdot O( \log^3 n \log \Delta / \epsilon^3)$ time. Summing over all updates, we get the desired bound.
\end{proof}

\begin{lemma}
    The dynamic algorithm has an amortized update time of $O(\log^5 n \log \Delta / \epsilon^6)$.
\end{lemma}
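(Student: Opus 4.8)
The plan is to combine the two preceding lemmas, Lemma~\ref{lem: total insert / delete time} and Lemma~\ref{lem: extend coloring time}, which together account for \emph{all} the work performed by the algorithm across a sequence of $T$ updates. Adding their bounds, the total running time over all $T$ updates is
\[
T \cdot O\!\left(\frac{\log^2 n \log \Delta}{\epsilon^3}\right) + O\!\left(\frac{\log n \log \Delta}{\epsilon}\right)\cdot \sum_{t \in [T]} |S^{(t)}| + O\!\left(\frac{\log^3 n \log \Delta}{\epsilon^3}\right)\cdot \sum_{t \in [T]} |S^{(t)}|,
\]
and the second term is dominated by the third, so the total is $T \cdot O(\log^2 n \log \Delta/\epsilon^3) + O(\log^3 n \log \Delta/\epsilon^3)\cdot \sum_{t\in[T]} |S^{(t)}|$. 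Dividing by $T$ gives an amortized update time of $O(\log^2 n \log \Delta/\epsilon^3) + O(\log^3 n \log \Delta/\epsilon^3)\cdot \big((1/T)\sum_{t\in[T]}|S^{(t)}|\big)$.

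The key remaining input is a bound on the average size of the uncolored set, $(1/T)\sum_{t\in[T]}|S^{(t)}|$. This is exactly what was established inside the proof of Lemma~\ref{lem:full algorithm recourse}: there we showed $(1/T)\sum_{t \in [T]} |S^{(t)}| = O(L^2/\epsilon) + 2L = O(\log^2 n/\epsilon^3)$. So I would simply quote that bound. Substituting it in, the amortized update time becomes
\[
O\!\left(\frac{\log^2 n \log \Delta}{\epsilon^3}\right) + O\!\left(\frac{\log^3 n \log \Delta}{\epsilon^3}\right)\cdot O\!\left(\frac{\log^2 n}{\epsilon^3}\right) = O\!\left(\frac{\log^5 n \log \Delta}{\epsilon^6}\right),
\]
as claimed. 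The first term is absorbed into the second.

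There is no real obstacle here — this lemma is purely a bookkeeping step that collects the pieces proved in the two immediately preceding lemmas (the per-call time bounds for \textsc{Insert}/\textsc{Delete} and \textsc{ExtendColoring}) together with the amortized bound on $\sum_t |S^{(t)}|$ extracted from the recourse analysis. The only thing to be careful about is to confirm that Lemmas~\ref{lem: total insert / delete time} and~\ref{lem: extend coloring time} really do cover every operation the algorithm performs (initialization is a one-time $O(\cdot)$ cost that vanishes in the amortized bound, and maintaining the data structure $\mathcal D$ under both color changes and decomposition-system updates is already folded into those two lemmas via Lemma~\ref{lem:data structure 2}), and to check that the dominant term after multiplication is indeed $O(\log^5 n \log\Delta/\epsilon^6)$ and not something with a higher power of $\log n$ or $1/\epsilon$.
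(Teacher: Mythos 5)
Your proof is correct and follows essentially the same bookkeeping route as the paper: sum the bounds from Lemma~\ref{lem: total insert / delete time} and Lemma~\ref{lem: extend coloring time}, then substitute the bound $\sum_{t \in [T]}|S^{(t)}| = T \cdot O(\log^2 n / \epsilon^3)$ extracted from the proof of Lemma~\ref{lem:full algorithm recourse}. The observation that the $O(\log n \log \Delta/\epsilon)$ term is dominated, and the caveat about initialization being a one-time cost, are both consistent with (and implicit in) the paper's treatment.
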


\begin{proof}
    It follows from Lemmas~\ref{lem: total insert / delete time} and \ref{lem: extend coloring time} that the total time our dynamic algorithm spends handling updates is
    $$T \cdot O \left(\frac{\log^2 n \log \Delta}{\epsilon^3} \right) + O\left(\frac{\log n \log \Delta}{\epsilon} \right) \cdot \sum_{t \in [T]} |S^{(t)}| + O \left(\frac{\log^3 n \log \Delta}{\epsilon^3} \right) \cdot \sum_{t \in [T]}|S^{(t)}| $$
    $$ = T \cdot O\left(\frac{\log^2 n \log \Delta}{\epsilon^3}\right) + O\left(\frac{\log^3 n \log \Delta}{ \epsilon^3}\right) \cdot \sum_{t \in [T]}|S^{(t)}|. $$
    It follows from the proof of Lemma~\ref{lem:full algorithm recourse} that $\sum_{t \in [T]}|S^{(t)}| = T \cdot O(\log^2 n / \epsilon^3)$, and hence the total update time of our algorithm is $T \cdot O(\log^5 n \log \Delta / \epsilon^6)$ and the lemma follows.
\end{proof}

\section{Deferred Proofs}\label{app:deferred proofs}

\subsection{Proof of Lemma~\ref{lem:static runtime}}\label{app:static runtime}



\begin{lemma}\label{lem:static phase 1 implementation new}
    Phase 1 of Algorithm \ref{alg:static} can be implemented to run in $O(m)$ time.
\end{lemma}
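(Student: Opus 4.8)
The plan is to implement Phase~1 with three elementary data structures: (i) an array holding the current degree $\deg_{G_i}(w)$ of every vertex $w$; (ii) for each $d\in\{0,1,\dots,n-1\}$ a doubly-linked list $B_d$ containing exactly the vertices of current degree $d$, each vertex storing a pointer to its occurrence so that moving a vertex between two buckets costs $O(1)$; and (iii) the adjacency lists of $G$ as doubly-linked lists with cross-pointers between the two copies of each edge, so that deleting an edge and repairing both endpoints' lists costs $O(1)$. All of this is set up in $O(m+n)$ time, and in $O(m)$ time if we first discard isolated vertices (which Phase~1 never touches). The algorithm then runs an outer loop that, via a search pointer $p$ into the buckets, locates the smallest $d\ge 1$ with $B_d\neq\varnothing$ together with an arbitrary $u\in B_d$, and an inner loop that repeatedly picks an arbitrary edge $(u,v)$ still incident to $u$, appends it to the output sequence $e_1,e_2,\dots$, deletes it from $G$, and decrements-and-rebuckets the degrees of $u$ and $v$; the inner loop halts once $\deg(u)=0$.

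For correctness I would show that the produced sequence is a legal run of Phase~1, i.e.\ that when $(u,v)$ is emitted at step $i$ we have $(u,v)\in\arg\min_{e\in E(G_i)}\Phi_{G_i}(e)$. The invariant to maintain is that the vertex $u$ currently being peeled always realises the minimum non-zero degree of $G_i$. This holds when $u$ is first chosen (by the bucket), and it is preserved by each deletion: removing $(u,v)$ lowers $\deg(u)$ and $\deg(v)$ by one and leaves all other degrees fixed, while every former neighbour of $u$ had degree $\ge\deg_{G_i}(u)$ and hence still has degree $\ge\deg_{G_i}(u)-1=\deg_{G_{i+1}}(u)$, and every other non-isolated vertex keeps degree $\ge\deg_{G_i}(u)>\deg_{G_{i+1}}(u)$. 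So until $u$ becomes isolated it still has the minimum non-zero degree, and when it does the inner loop ends and a fresh minimum is picked. Given the invariant, $\Phi_{G_i}(u,v)=\deg_{G_i}(u)$ (as $v$ is non-isolated, so $\deg_{G_i}(v)\ge\deg_{G_i}(u)$), whereas any $e'=(a,b)\in E(G_i)$ has $\Phi_{G_i}(e')=\min\{\deg_{G_i}(a),\deg_{G_i}(b)\}\ge\deg_{G_i}(u)$ since $a,b$ are non-isolated; hence $(u,v)$ is a valid $\arg\min$.

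For the running time, each of the $m$ edge deletions costs $O(1)$ (cross-pointered adjacency lists and two $O(1)$ bucket moves), so deletions total $O(m)$. The only other cost is the bucket scanning in the outer loop, which must avoid rescanning from index $1$ each time. The key observation is a monotonicity property: when the inner loop finishes peeling a vertex $u$ selected from $B_\delta$, every non-isolated vertex of the resulting graph has degree $\ge\delta-1$ (beforehand all had degree $\ge\delta$, and only $u$'s neighbours changed, each dropping by exactly one), so it suffices to resume the scan from index $\max(1,\delta-1)$. If the buckets selected over successive outer iterations have indices $\delta_1,\delta_2,\dots$, then $\delta_{j+1}\ge\delta_j-1$ and the scan for iteration $j{+}1$ costs $O(\delta_{j+1}-\delta_j+2)$; this telescopes to $O(\Delta+(\text{number of outer iterations}))=O(\Delta+n)=O(n)$. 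Adding the $O(m)$ deletion work and the $O(m)$ post-preprocessing initialisation gives an $O(m)$ bound for Phase~1.

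The main obstacle is exactly this running-time bookkeeping: a naive implementation that rescans the degree buckets from scratch after each peeled vertex can spend $\Theta(n)$ per vertex and $\Theta(n^2)$ overall. The monotonicity observation — the minimum non-zero degree drops by at most one per peeled vertex — is what converts the scanning cost into a telescoping sum; the rest (correctness of the greedy peeling order and the $O(1)$ edge/bucket updates) is routine.
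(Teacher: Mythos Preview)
Your proposal is correct and follows essentially the same approach as the paper: both use degree buckets to locate a minimum-degree vertex $u$, peel all of $u$'s incident edges in one pass, and (implicitly in the paper, explicitly in your write-up) rely on the fact that $u$ remains a minimum-degree vertex throughout the peeling. The only minor difference is in the bookkeeping for the bucket scan: the paper simply rescans from index $1$ each time and charges the $O(\delta)$ scan cost to the $\delta$ edges removed in that iteration (so the total is $\sum\delta=O(m)$), whereas you resume the scan from $\delta-1$ and telescope to $O(n)$ --- both accountings work.
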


\begin{proof}
    We begin by creating $n$ lists of nodes $L_0,\dots,L_{n-1}$, one corresponding to each possible degree of a node in the graph $G$, and place each node $u$ into the list $L_{\deg(u)}$. We also create a pointer from each node $u$ to its position in list $L_{\deg(u)}$. We now show how to use these lists in order to construct the sequence $e_1,\dots,e_m$. 
    Suppose that we have already found $e_1,\dots,e_{i-1}$, and that each node $u$ is contained in the list $L_{\deg_{G_i}(u)}$, i.e. is contained in the list corresponding to its degree in the graph $G_i$.
    Suppose that the node with the smallest non-zero degree in $G_i$ is $u$, and has degree $\delta$. Then we can find this node in $O(\delta)$ time by scanning through the lists $L_1,L_2,\dots$ until we find the first non-empty list $L_\delta$. We then take all of the $\delta$ edges incident on $u$ in $G_i$, assign them to be the edges $e_{i},\dots,e_{i+\delta-1}$, remove them from $G_i$ to obtain $G_{i+\delta-1}$, and update the positions of the nodes in the lists to correspond to their degrees in $G_{i+\delta-1}$. This can be done in $O(\delta)$ time and adds $\delta$ edges to the list. By repeating this process until all the nodes have degree $0$, we can construct the list $e_1,\dots,e_m$ in $O(m)$ time.
\end{proof}

\begin{lemma}\label{lem:static phase 2 implementation}
    Phase 2 of Algorithm \ref{alg:static} can be implemented to run in $O(m \log^2 \Delta)$ time.
\end{lemma}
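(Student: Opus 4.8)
The plan is to maintain, for each node $w \in V$, a balanced binary search tree $\textsc{Colors}(w)$ holding the colors of the already-colored edges incident on $w$, exactly as in the data structure of Appendix~\ref{sec:coloring data structs} (with subtree sizes stored at each node, so that \textsc{New-Element} queries are supported). Phase~1 has already produced the ordering $e_1, \dots, e_m$; in Phase~2 we process the edges in the order $e_m, e_{m-1}, \dots, e_1$, starting with every $\textsc{Colors}(w)$ empty. We never store the graphs $G_i$ explicitly --- only the precomputed ordering and the trees $\textsc{Colors}(\cdot)$.

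When we reach $e_i = (u_i, v_i)$, the edges colored so far are precisely $e_{i+1}, \dots, e_m$, which is exactly $E(G_{i+1})$, so $\textsc{Colors}(u_i) = \chi(N_{G_{i+1}}(u_i))$ and $\textsc{Colors}(v_i) = \chi(N_{G_{i+1}}(v_i))$; since distinct edges incident on a common node carry distinct colors, $|\textsc{Colors}(u_i)| = \deg_{G_{i+1}}(u_i) = \deg_{G_i}(u_i) - 1$ and likewise for $v_i$. Hence
$$|\textsc{Colors}(u_i)| + |\textsc{Colors}(v_i)| + 1 = \deg_{G_i}(u_i) + \deg_{G_i}(v_i) - 1,$$
so a call to \textsc{New-Element}$(\textsc{Colors}(u_i), \textsc{Colors}(v_i))$ returns an element of exactly $[\deg_{G_i}(u_i) + \deg_{G_i}(v_i) - 1] \setminus \chi(N_{G_{i+1}}(e_i))$ --- a legal value for $\chi(e_i)$ as specified in Algorithm~\ref{alg:static}, and the query is well-defined since this set is nonempty by Lemma~\ref{lem:static}(1). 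We set $\chi(e_i)$ to the returned color $c$ and insert $c$ into both $\textsc{Colors}(u_i)$ and $\textsc{Colors}(v_i)$.

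For the running time, each set $\textsc{Colors}(w)$ has size at most $\deg_G(w) \le \Delta$ throughout, so by the analysis in Appendix~\ref{sec:coloring data structs} each \textsc{New-Element} query costs $O(\log^2 \Delta)$ and each insertion costs $O(\log \Delta)$. Thus each of the $m$ iterations costs $O(\log^2 \Delta)$, and Phase~2 runs in $O(m \log^2 \Delta)$ time in total; combined with Lemma~\ref{lem:static phase 1 implementation new} this proves Lemma~\ref{lem:static runtime}.

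The one point that needs genuine (if brief) care --- rather than being purely formal --- is the invariant $|\textsc{Colors}(w)| = \deg_{G_{i+1}}(w)$ at the moment $e_i$ is processed: it is precisely this identity that makes the generic \textsc{New-Element} query land on the exact palette $[\deg_{G_i}(u_i) + \deg_{G_i}(v_i) - 1]$ used by \textsc{CleverGreedy} (and not on some larger range), which is in turn what keeps the final color count at $\Delta + 2\alpha - 1$. Everything else is a direct invocation of the binary search data structure of Appendix~\ref{sec:coloring data structs}.
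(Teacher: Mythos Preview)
Your proof is correct and follows essentially the same approach as the paper: maintain for each node the set of colors on its already-colored incident edges via the balanced-BST data structure of Appendix~\ref{sec:coloring data structs}, call \textsc{New-Element} on the two endpoint sets to pick the color, and then insert it into both trees. Your write-up is in fact more explicit than the paper's about the invariant $|\textsc{Colors}(w)| = \deg_{G_{i+1}}(w)$, which is exactly what pins the \textsc{New-Element} palette to $[\deg_{G_i}(u_i)+\deg_{G_i}(v_i)-1]$.
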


\begin{proof}
    For each node $u \in V$, we can dynamically maintain the set $\chi (N_{G_{i+1}}(u))$ using the binary search data structure (see Appendix~\ref{sec:coloring data structs}) as edges are colored during phase 2 of Algorithm \ref{alg:static}. Given nodes $u, v \in V$, we can then use this data structure to perform a \textsc{New-Element} query on the sets $\chi (N_{G_{i+1}}(u))$ and $\chi (N_{G_{i+1}}(v))$ in order to find a color from the set $[\deg_{G_{i+1}}(u_i) + \deg_{G_{i+1}}(v_i) + 1] \setminus \chi(N_{G_{i+1}}(e_i))$ in $O(\log^2(\deg_{G_{i+1}}(u_i) + \deg_{G_{i+1}}(v_i))) \leq O(\log^2\Delta)$ time, which implements one iteration of the loop in phase 2. Since the data structure can be updated in $O(\log \Delta)$ time, the lemma follows.
\end{proof}

\noindent Lemma~\ref{lem:static runtime} follows immediately from Lemmas \ref{lem:static phase 1 implementation new} and \ref{lem:static phase 2 implementation}.

\subsection{Proof of Lemma~\ref{lem:warmup dynamic update time}}\label{app:warmup dynamic}

In order to implement the warmup dynamic algorithm, we can use the data structure $\mathcal D$ from Lemma~\ref{lem:data structure 2} with $K=1$ since we only need to maintain a single graph decomposition. With our parameters, the data structure $\mathcal D$ has an amortized update time of $O(\log n \log \Delta / \epsilon^2)$ and can update the colors of edges in $O(\log n)$ worst-case time.

\begin{lemma}\label{lem:dynamic update time 1}
    Given any sequence of updates, the amortized time taken to update the $(\beta, 2(1 + \epsilon)\alpha, L)$-decomposition $(Z_i)_i$ of $G$ is $O(\log n \log \Delta /\epsilon^2)$.
\end{lemma}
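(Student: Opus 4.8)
The plan is to derive this bound by directly instantiating the key data structure of Lemma~\ref{lem:data structure 2}. Recall that the warmup algorithm (Algorithms~\ref{alg:init}--\ref{alg:extend}) maintains only a \emph{single} $(\beta, 2(1+\epsilon)\alpha, L)$-decomposition of $G$, rather than a full decomposition system. Hence, as noted in the text preceding the lemma, we may run the data structure $\mathcal D$ of Lemma~\ref{lem:data structure 2} with $K = 1$: a $(\beta, (d_j)_{j\in[1]}, L)$-decomposition system with a single layer $d_1 = 2(1+\epsilon)\alpha$ is exactly a $(\beta, 2(1+\epsilon)\alpha, L)$-decomposition. First I would check that our fixed constant $\beta = 2 + 3\epsilon$ satisfies the hypothesis $\beta \ge 2 + 3\epsilon$ of Lemma~\ref{lem:data structure 2} (it does, with equality), so the lemma applies with these parameters.

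Next, I would invoke the amortized update-time bound from Lemma~\ref{lem:data structure 2}, which is $O(KL\log\Delta/\epsilon)$ for maintaining the decomposition (this is the part of $\mathcal D$'s work devoted to restructuring the decomposition after an edge insertion/deletion, i.e. precisely the cost referred to in the statement of this lemma). Substituting $K = 1$ leaves $O(L\log\Delta/\epsilon)$. It then remains to plug in $L = 2 + \lceil \log_{1+\epsilon} n \rceil$ and simplify. Using the standard estimate $\log_{1+\epsilon} n = \ln n / \ln(1+\epsilon)$ together with $\ln(1+\epsilon) \ge \epsilon/2$ for $0 < \epsilon < 1$, we get $\log_{1+\epsilon} n = O(\log n / \epsilon)$, hence $L = O(\log n/\epsilon)$. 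Therefore the amortized time to update the decomposition is
\[
O\!\left(\frac{L\log\Delta}{\epsilon}\right) \;=\; O\!\left(\frac{\log n\,\log\Delta}{\epsilon^2}\right),
\]
which is exactly the claimed bound.

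There is no real obstacle here; the statement is a routine specialization of the black-box data structure. The only point requiring a little care is the base-change estimate $\log_{1+\epsilon} n = \Theta(\log n/\epsilon)$ and making sure the $\beta = 2+3\epsilon$ boundary case is admissible in Lemma~\ref{lem:data structure 2}; both are immediate. One should also note explicitly that updating the decomposition $(Z_i)_i$ (together with maintaining the associated edge orientation) is subsumed by the amortized update-time guarantee of $\mathcal D$, so no separate accounting is needed for the orientation bookkeeping.
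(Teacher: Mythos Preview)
Your proposal is correct and follows exactly the paper's own approach: the paper's proof is the one-line observation that the data structure $\mathcal D$ (Lemma~\ref{lem:data structure 2}) with $K=1$ has amortized update time $O(L\log\Delta/\epsilon)$, which with $L = O(\log n/\epsilon)$ gives the stated bound. Your additional checks (that $\beta = 2+3\epsilon$ is admissible and the base-change estimate for $L$) are sound but more detailed than what the paper spells out.
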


\begin{proof}
    This follows immediately from the fact that the data structure $\mathcal D$ has an amortized update time of $O(L \log \Delta /\epsilon)$.
\end{proof}

\begin{lemma}\label{lem:dynamic update time 2}
    Given some uncolored edge $e$ and a $(\beta, 2(1 + \epsilon)\alpha, L)$-decomposition $(Z_i)_i$ of $G$, the algorithm $\textsc{ExtendColoring}(e, (Z_i)_i)$ (Algorithm~\ref{alg:extend}) can be implemented to run in $O((\log^2 \Delta + \log n)\log n / \epsilon)$ worst-case time.
\end{lemma}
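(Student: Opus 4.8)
The plan is to bound the running time of \textsc{ExtendColoring}$(e,(Z_i)_i)$ by combining the iteration count from part (1) of Lemma~\ref{lem:dynamic anal 1} with a per-iteration cost analysis that relies on the data structure $\mathcal D$ (specialized to $K=1$) and the binary search data structure of Appendix~\ref{sec:coloring data structs}. First I would recall that by Lemma~\ref{lem:dynamic anal 1}, the while loop of Algorithm~\ref{alg:extend} runs for at most $L = O(\log n/\epsilon)$ iterations, since each iteration strictly decreases $\ell(e_i)$. So it suffices to show that each iteration of the loop can be implemented in $O(\log^2\Delta + \log n)$ worst-case time; multiplying by $L$ then gives the claimed bound of $O((\log^2\Delta + \log n)\log n/\epsilon)$.

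Next I would walk through the individual lines of the loop body. Picking an arbitrary edge $f=(u,v)$ from $S$ (stored as a linked list) and orienting it as $u \prec v$ takes $O(1)$ time, since $\mathcal D$ maintains the levels $\ell(u),\ell(v)$ explicitly and hence the orientation. The sets $C_u^+ = \chi(N^+(u))$ and $C_v = \chi(N(v))$ are exactly the sets $\textsc{Colors}^+(u)$ (i.e.\ $\textsc{Colors}_1^+(u)$) and $\textsc{Colors}(v)$ already maintained as balanced binary search trees by $\mathcal D$, so they need not be recomputed. Finding $c \in [|C_u^+|+|C_v|+1]\setminus(C_u^+\cup C_v)$ is precisely a \textsc{New-Element} query on these two trees, costing $O(\log^2\Delta)$ time by the lemma in Appendix~\ref{sec:coloring data structs}. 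Testing whether $c \in \chi(N(u)) = \textsc{Colors}(u)$ is a \textsc{Membership-Query}, costing $O(\log\Delta)$; and when $c$ is present, locating the edge $f'\in N(u)$ with $\chi(f')=c$ uses the pointer stored alongside each color in $\textsc{Colors}(u)$, again $O(\log\Delta)$. Uncoloring $f'$ and recoloring $f$ each require updating $\textsc{Color}$, $\textsc{Colors}(\cdot)$ at the two relevant endpoints, and the $\textsc{Colors}_1^+(\cdot)$ set — by Lemma~\ref{lem:data structure 2} with $K=1$ this takes $O(\log n + \log\Delta) = O(\log n)$ worst-case time per color change, and there are at most two color changes per iteration. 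Finally, adding $f'$ to $S$ and removing $f$ from $S$ are $O(1)$ linked-list operations. Summing, one iteration costs $O(\log^2\Delta + \log n)$ time.

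Putting this together: $L$ iterations at $O(\log^2\Delta+\log n)$ each gives total worst-case running time $O(L(\log^2\Delta+\log n)) = O((\log^2\Delta+\log n)\log n/\epsilon)$, which is the statement of Lemma~\ref{lem:dynamic update time 2}. The one point that needs a little care — and which I'd flag as the main thing to get right rather than a genuine obstacle — is making sure that the sets $C_u^+$ and $C_v$ used in the pseudocode really are the data-structure-maintained sets and are not rebuilt from scratch each iteration (which would cost $\tilde O(\Delta)$); this is exactly why the key data structure $\mathcal D$ is set up to maintain $\textsc{Colors}_1^+(u)$ incrementally whenever edges enter or leave $N_1^+(u)$. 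Everything else is a routine accounting of balanced-binary-search-tree operation costs. Combined with Lemma~\ref{lem:dynamic update time 1} (the $O(\log n\log\Delta/\epsilon^2)$ cost of maintaining the decomposition) and the fact that constructing/uncoloring around an insertion or deletion is dominated by these terms, Lemma~\ref{lem:warmup dynamic update time}'s amortized update time of $O(\log^2 n\log\Delta/\epsilon^2)$ then follows immediately.
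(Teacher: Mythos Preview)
Your proposal is correct and follows essentially the same approach as the paper: bound the number of iterations by $L$ via Lemma~\ref{lem:dynamic anal 1}, then argue each iteration costs $O(\log^2\Delta+\log n)$ by using the maintained sets $\textsc{Colors}^+(u)$ and $\textsc{Colors}(v)$ for the \textsc{New-Element} query and the $O(\log n)$ color-update cost from the data structure $\mathcal D$ with $K=1$. Your write-up is in fact more detailed than the paper's own proof, which simply asserts that ``each line in the while loop can be implemented to run in $O(\log^2\Delta+\log n)$ time'' without walking through the steps individually.
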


\begin{proof}
    By the proof of Lemma~\ref{lem:dynamic anal 1}, we know that the while loop in Algorithm \ref{alg:extend} runs for at most $L$ iterations. We now show that each iteration runs in $O(\log^2 \Delta + \log n)$ time. Recall that we can update the colors of edges in $O(\log n)$ time. Furthermore, the color $c$ can be found in $O(\log^2 \Delta)$ time since we implement the sets $\textsc{Colors}^+(u)$ and $\textsc{Colors}(v)$ so that they support $\textsc{New-Element}$ queries which run in $O(\log^2 \Delta)$ time. Hence, each line in the while loop can be implemented to run in $O(\log^2 \Delta + \log n)$ time.
\end{proof}

\noindent
The time taken to handle the deletion of an edge $e$ is the time taken to delete $e$ from the graph, uncolor the edge $e$ in our data structure, and update the decomposition of the graph. Since we can delete and uncolor the edge in $O(\log n)$ time, by Lemma~\ref{lem:dynamic update time 1}, this can be done in $O(\log n \log \Delta / \epsilon^2)$ amortized time. If we are handling an insertion, we also need to handle a call to $\textsc{ExtendColoring}(e, (Z_i)_i)$, which by Lemma~\ref{lem:dynamic update time 2} runs in time $O((\log^2 \Delta + \log n)\log n / \epsilon)$, on top of updating the graph and the decomposition. Hence, the amortized update time of the warmup algorithm is $O(\log n \log \Delta / \epsilon^2) + O((\log^2 \Delta + \log n)\log n / \epsilon)$, which is $O(\log^2 n \log \Delta / \epsilon^2)$.

\end{document}